\documentclass[11pt,pdfa,letterpaper]{article}
\newif\ifcomments

\commentsfalse

\usepackage[in]{fullpage}

\usepackage{iftex}
\ifPDFTeX
  \usepackage[utf8]{inputenc}
  \usepackage[noTeX]{mmap}
  \usepackage[T1]{fontenc}
\fi
\ifLuaTeX
  \usepackage{luatex85}
  \usepackage[noTeX]{mmap}
\fi

\usepackage{comment}
\usepackage{amsfonts}
\usepackage{amsmath}
\usepackage{mathtools,amsthm,amssymb} %
\usepackage{xcolor,xstring,xparse}
\usepackage{graphicx}
\usepackage{comment}
\usepackage{qtree}
\usepackage{tree-dvips}
\usepackage{float}
\definecolor{linkblue}{HTML}{001487}
\usepackage[colorlinks=true,allcolors=linkblue]{hyperref}
\usepackage[nameinlink,capitalize,noabbrev]{cleveref}
\usepackage{braket}
\usepackage{mathrsfs}
\usepackage{tikz}
\usepackage{qcircuit}
\usepackage{xspace}
\usepackage{dsfont}
\usepackage{enumitem} %
\usepackage{longfbox}
\usepackage{csquotes}
\usepackage{authblk}
\usepackage{longfbox}

\usepackage{adjustbox}
\usepackage{gensymb}
\usepackage{siunitx}
\usepackage{stmaryrd}
\usepackage{float}
\usepackage{amsthm}
\usepackage{tikz-cd}
\usepackage{tikz}
\usetikzlibrary{backgrounds}
\usepackage{hyperref}

\hypersetup{
    colorlinks = true,
    linkcolor = violet,
    linkbordercolor = {white}
}

\usepackage{booktabs}    
\usepackage{tabularx}    

\usepackage{geometry}
\usepackage[style=alphabetic,minalphanames=3,maxalphanames=4,maxnames=99,backref=true]{biblatex}

\usepackage{url}
\usepackage{bbm}
\usepackage{hyperref} 
\hypersetup{breaklinks=true}

\usetikzlibrary{fadings}
\usetikzlibrary{patterns}
\usetikzlibrary{shadows.blur}
\usetikzlibrary{shapes}

\usepackage{framed}
\usepackage{longtable}
\usepackage{array}
\usepackage{ltablex}
\keepXColumns

\usepackage{caption}
\ifcomments
  \newcommand{\andrey}[1]{{\color{red}\bf (Andrey: #1)}} 
  \newcommand{\jon}[1]{{\color{blue}\bf (Jonathan: #1)}}
\else
  \newcommand{\andrey}[1]{}
  \newcommand{\jon}[1]{}
\fi

\theoremstyle{plain}
\newtheorem{theorem}{Theorem}[section]
\newtheorem*{theorem*}{Theorem}
\newtheorem{lemma}[theorem]{Lemma}

\Crefname{claim}{Claim}{Claims}

\newtheorem*{lemma*}{Lemma}

\newtheorem*{corollary*}{Corollary}
\newtheorem{proposition}[theorem]{Proposition}
\newtheorem{conjecture}[theorem]{Conjecture}

\theoremstyle{definition}
\newtheorem{definition}[theorem]{Definition}

\newtheorem{construction}{Construction}

\theoremstyle{remark}

\numberwithin{equation}{section}

\newcommand{\F}{\mathbb{F}}
\newcommand{\R}{\mathbb{R}}

\newcommand{\Z}{\mathbb{Z}}

\newcommand{\ketbra}[2]{|#1\rangle\langle#2|}

\renewcommand{\vec}[1]{\mathbf{#1}}

\newcommand{\bounds}[2]{\bigg\rvert_{#1}^{#2}}
\newcommand{\boudns}[2]{\bounds} 

\renewcommand{\a}{\alpha}
\renewcommand{\b}{\beta}
\newcommand{\g}{\gamma}
\newcommand{\G}{\Gamma}

\newcommand{\D}{\Delta}

\renewcommand{\k}{\kappa}
\newcommand{\m}{\mu}

\newcommand{\s}{\sigma}
\renewcommand{\t}{\tau}

\newcommand{\CE}{\mathcal{E}}

\newcommand{\CG}{\mathcal{G}}

\newcommand{\CL}{\mathcal{L}}

\newcommand{\CP}{\mathcal{P}}

\newcommand{\CS}{\mathcal{S}}

\DeclareMathOperator{\wt}{wt}

\newcommand{\Cliff}{\mathsf{Cliff}}

\makeatletter
\renewcommand{\paragraph}{%
  \@startsection{paragraph}{4}%
  {\z@}{2.25ex \@plus 1ex \@minus .2ex}{-1em}%
  {\normalfont\normalsize\bfseries}%
}
\makeatother
\title{Mirror codes: High-threshold quantum LDPC codes beyond the CSS regime}

\author[1]{\quad\quad Andrey Boris Khesin\footnote{\url{andrey.khesin@cs.ox.ac.uk}}}
\author[2]{Jonathan Z. Lu\footnote{\url{lujz@mit.edu}}}
\affil[1]{Department of Computer Science, University of Oxford}
\affil[2]{Department of Mathematics, Massachusetts Institute of Technology}

\begin{document}

\pagenumbering{gobble}

\clearpage\maketitle
\thispagestyle{empty}

\begin{abstract}
The realization of a fault-tolerant quantum memory rests on our ability to implement quantum error correction protocols whose logical error rates are suppressed far below physical error rates.
Such protocols rely in turn on an intricate combination: the error-correcting code's efficiency, the syndrome extraction circuit's fault tolerance and overhead, the decoder's quality, and the device's constraints, such as physical qubit count and connectivity.

This work makes two contributions towards error-corrected quantum devices. 
First, we introduce mirror codes, a simple yet flexible construction of LDPC stabilizer codes parameterized by a group $G$ and two subsets of $G$ whose total size bounds the check weight.
Up to permutation and local Cliffords, these codes contain all abelian two-block group algebra codes, such as bivariate bicycle (BB) codes.
At the same time, they are manifestly not CSS in general, thus deviating substantially from most prior constructions.
Fixing a check weight of 6, we find $\llbracket 60, 4, 10 \rrbracket, \llbracket 36, 6, 6 \rrbracket, \llbracket 48, 8, 6 \rrbracket$, and $\llbracket 85, 8, 9 \rrbracket$ codes, all of which are not CSS; we also find several weight-7 codes with $kd > n$.

Next, we construct a series of syndrome extraction circuits that trade overhead for provable fault tolerance, which may be of independent interest.
These circuits use 1-2, 3, and 6 ancillae per check, and respectively are partially fault-tolerant (FT), provably FT on weight-6 CSS codes, and provably FT on \emph{all} weight-6 stabilizer codes.
Using our constructions, we perform end-to-end quantum memory experiments on several representative mirror codes under circuit-level noise.
We achieve an error pseudothreshold on the order of $0.2\%$, approximately matching that of the $\llbracket 144, 12, 12 \rrbracket$ BB code under the same model.
These findings position mirror codes as a versatile candidate for fault-tolerant quantum memory, especially on smaller-scale devices in the near term.
\end{abstract}

\newpage
\tableofcontents

\newpage
\pagenumbering{arabic} 
\setcounter{page}{1}   


\section{Introduction}
\label{sec:introduction}

Quantum computers hold the potential to efficiently solve computational problems of broad interest which are otherwise intractable to classical computers~\cite{algo1,algo2,algo3,algo4,algo5,algo6,algo7,algo8,algo9,algo10,algo11,algo12,algo13,algo14,algo15,algo16,algo17}.
Almost all quantum algorithms are, however, highly sensitive to noise, requiring very low error rates on the order of at most $10^{-9}$ in order to produce meaningful output.
The fragile nature of qubits suggests that directly achieving physical error rates low enough for algorithmic implementation may be infeasible.
Consequently, quantum error correction has become a leading proposal for the realization of large-scale quantum algorithm execution~\cite{ec5,ec6,ec7}.
At a high level, quantum error correction encodes $k$ logical qubits into a certain subspace of a larger $n$-qubit space. 
By choosing this subspace carefully, one can protect the logical qubits from noise to a far greater extent compared to direct usage of the logical qubits without quantum error correction.
In fact, if the physical error rate is below a certain threshold value $p_{\text{th}}$, then a quantum error-correcting code can suppress the error rate on its logical qubits to a value far below the physical error rate~\cite{ec1,ec2,ec3,ec4}.
We can hope, therefore, that if we can design quantum error correction protocols for which $p_{\text{th}}$ is below achievable physical error rates, then we may use these protocols to ultimately realize quantum algorithms.
Note that even a successful quantum error correction protocol is itself insufficient to implement quantum algorithms fault-tolerantly, as it is not necessarily clear how to perform \emph{computation} on the logical space.
Such a protocol instead only implies a fault-tolerant quantum \emph{memory}, i.e. error-resistant storage of quantum information.
Nevertheless, achieving fault-tolerant memory on quantum devices would mark a significant first step towards algorithmic execution.

The demonstration of quantum error correction below threshold is, however, a challenging task rife with subtleties.
In the classical setting, in order to gain some tractable structure, we typically restrict our attention to linear codes, i.e. linear subspaces of $\set{0, 1}^n$.
The quantum analog of linear codes is known as \emph{stabilizer codes}, and we likewise restrict ourselves only to such codes~\cite{gottesman1997stabilizer}.
To decode a stabilizer code, we first measure certain Pauli operators which yield a classical bitstring (known as the \emph{syndrome}) that encodes information about which error occurred.
We then use a classical decoding algorithm to infer and undo the error.
A first obstacle is that the quantum error-correcting code must be chosen to both protect against many errors and to be efficiently decodable.
These requirements generally restrict potential codes to either algebraically structured codes~\cite{alg1,alg2,alg3}, which have provable decoders, or codes with a property known as low density parity checks (LDPC), which have general-purpose decoders that demonstrate excellent practical performance~\cite{breuckmann2021quantum,panteleev2021degenerate}.
A second obstacle, however, is that the process of obtaining the syndrome from the code is itself subject to significant noise.
To reliably extract the syndrome, one must employ a number of ancillary qubits that scales with the check weight (weight of the Pauli operators measured) of the stabilizer code~\cite{ec1}.
Thus, at least in the near term during which qubit overhead is significantly costly, quantum error correction protocols are often restricted to LDPC codes, which have very small check weight.

Even among LDPC codes, the realization of quantum error correction (QEC) protocols remains difficult because the threshold $p_{\text{th}}$ depends sensitively on several objects: the robustness of the code (as measured by, e.g., its distance), the fault tolerance and overhead of the circuit performing syndrome extraction, and the quality and efficiency of the decoder.
The device's native constraints moreover play a vital role, as they determine the amount of acceptable qubit overhead and the physical error rate, which in turn determine how large of a code and how fault-tolerant of a syndrome extraction circuit may be used.
In particular, relatively small changes in device requirements may affect multiple components determining the threshold, rendering flexible QEC protocol constructions challenging.

Currently, the leading proposal for QEC utilizes the surface code, a quantum LDPC code with several useful properties, including geometrically local parity checks of weight 4, low-overhead fault-tolerant syndrome extraction, and a highly reliable decoder~\cite{google2025quantum}.
This proposal is moreover quite flexible to device requirements and scaling; the primary drawback is that the surface code encodes very few logical qubits relative to physical qubits, and thus requires large quantum devices to meaningfully encode a substantial amount of information.
To improve coding efficiency, recent proposals have constructed more general families of quantum LDPC codes that can achieve a threshold even with the simplest, lowest-overhead syndrome extraction circuits, the most well-known of which is the $\llbracket 144, 12, 12 \rrbracket$ bivariate bicycle code~\cite{BB}.
These proposals also achieve a threshold, and have a much higher coding efficiency than the surface code, which comes at the cost of having weight-6 checks that are less geometrically local, as well as being relatively inflexible to scaling.
In particular, smaller-scale near-term devices which cannot support several hundred physical qubits will be unable to realize QEC based on the $\llbracket 144, 12, 12 \rrbracket$ code.

\subsection{Contributions}

\begin{figure}
    \centering
    \includegraphics[width=0.75\linewidth]{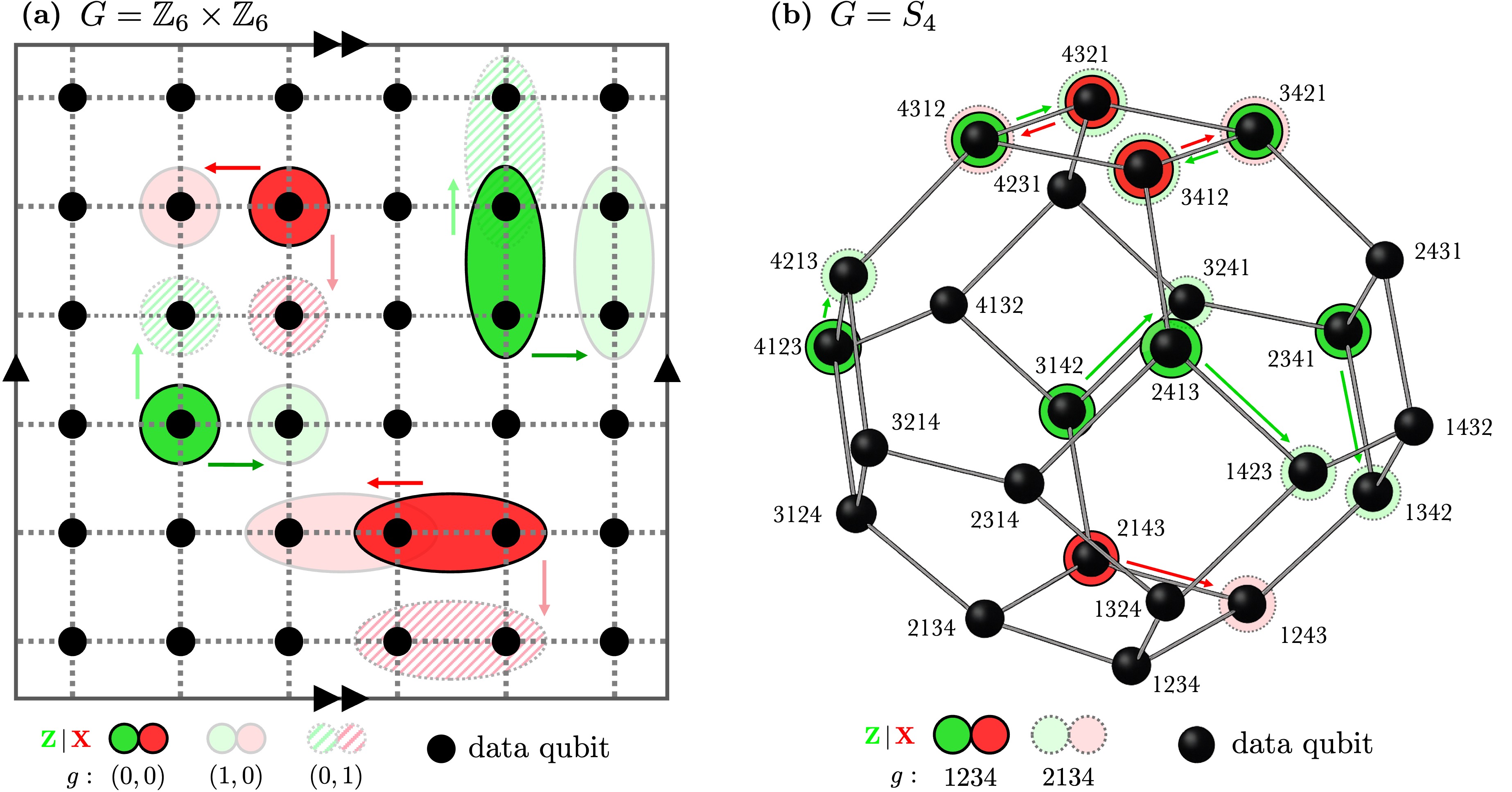}
    \caption{
    Visualization of two mirror codes. \textbf{(a)} An abelian mirror code with $G = \Z_6 \times Z_6$, i.e. a $6 \times 6$ square lattice with periodic boundary conditions.
    Each black point is a data qubit.
    Qubits highlighted in solid green (red) comprise $A=\{(1,2),(4,3),(4,4)\}$ ($B=\{(2,4),(3,1),(4,1)\}$), and a stabilizer is given by $\vec S((0,0)) = \vec Z(A) \vec X(B)$.
    Two other stabilizers are shown, one in light green/red corresponding to $\vec S((1,0)) = \vec Z(A + (1,0)) \vec X(B -(1,0))$, and one in striped green/red corresponding to $\vec S((0,1)) = \vec Z(A + (0,1)) \vec X(B - (0,1))$.
    In general, there are $n$ stabilizers, one for each element in $G$, and $\vec Z$ and $\vec X$ components translate in opposite directions.
    This code has parameters $\llbracket 36, 6, 6 \rrbracket$ with check weight 6.
    \textbf{(b)} An example of a non-abelian construction of a mirror code, with $G = S_4$ the group of permutations on $4$ elements, visualized on its Cayley graph which forms a permutohedron.
    Once again, $A, B$ are highlighted in solid green/red, forming a stabilizer $\vec S(e) = \vec Z(A) \vec X(B)$.
    Another stabilizer, shown in light green/red, is given by $\vec S((12)) = \vec Z(A (12)) \vec X(B(12))$.
    This purpose of this non-abelian construction is largely illustrative, as the code shown, while well-defined, has logical dimension 0.
    }
    \label{fig:mirror_main_figure}
\end{figure}

In this work, we propose a QEC protocol which is more flexible both in terms of code construction and syndrome extraction.
To the former, almost all prior code constructions for near-term quantum memory are not only LDPC, but are specifically CSS codes---stabilizer codes which are ``maximally classical'' in the sense that they can be expressed as the ``product'' of two classical linear codes.
While we continue to construct only LDPC codes, we argue that new state-of-the-art codes---especially in the $n \leq 144$ range---can be achieved by studying stabilizer codes outside the CSS regime.
This argument is supported in part by the history of quantum error correction.
For example, the most compact code that corrects one error is not a CSS code~\cite{laflamme1996perfect}.

Our code construction, like most quantum LDPC codes, do not generally have geometrically local checks.
We therefore envision that these codes would be well-suited for hardware architectures based on neutral atoms and trapped ions, as these offer more general qubit connectivity than superconducting-type architectures.
In turn, these offer their own advantages, with most neutral atom platforms being able to perform many multi-qubit gate simultaneously and trapped ions generally having more reliable two-qubit gates.
At the same time, as we discuss below, our code family includes all bivariate bicycle codes.
Hence, many instances of the codes we introduce are also well-suited for superconducting architectures.

We call our construction \emph{mirror codes}, which produce a quantum stabilizer code from a finite group $G$ and two subsets $A, B \subseteq G$.
The construction of mirror codes is remarkably simple, and readily seen to be well-defined when $G$ is abelian.
\begin{construction}[Mirror codes, informal] \label{def:abelian_mirror_codes}
    Let $G$ be a group with $|G| = n$ and $A, B \subseteq G$.
    A mirror code is defined by $n$ stabilizers labeled by $g \in G$, namely 
    \begin{align}
        \mathbf{S}(g) := \mathbf{Z}(A g) \mathbf{X}(B g^{-1}) .
    \end{align}
\end{construction}
Here, by $Ag$, we mean $\set{ag \,|\, a \in A}$.
The stabilizers are readily seen to commute when $G$ is abelian.
For any two stabilizers $\mathbf{S}(g)$ and $\mathbf{S}(h)$, the anti-commutations occur on the overlaps between $Ag$ and $Bh^{-1}$, as well as between $Ah$ and $B g^{-1}$.
But if $ag = b h^{-1}$ for $a \in A, b \in B$, then $ah = b g^{-1}$ since $G$ is abelian.
Thus, the overlaps may be exactly paired, ensuring that all anti-commutations cancel.
As in the case of bivariate bicycle codes, the rate and distance are not generally known analytically.
However, the check weight is always at most $w := |A| + |B|$.
Mirror codes are manifestly not CSS, as every stabilizer has a mixture of $\mathbf{Z}$ and $\mathbf{X}$ operators.
However, there are important special cases in which the code is equivalent (up to single qubit Clifford operators and qubit permutations) to a CSS code, which we explore in \Cref{sec:mirror_codes}.

In general, mirror codes can also be defined on non-abelian groups $G$, though the choice of $A, B$ can no longer be arbitrary.
We illustrate two examples of mirror codes in \Cref{fig:mirror_main_figure}, one on an abelian group $\Z_6 \times \Z_6$ and one on a non-abelian group $S_4$ the symmetric group on 4 elements.

To the latter---syndrome extraction---we construct three new syndrome extraction circuits which have increasing amounts of provable fault tolerance at the cost of increasing qubit overhead.
Our baseline comparison is the well-known bare syndrome extraction circuit, which uses 1 qubit per check but has no fault tolerance guarantees.
An improved known folklore circuit, known as the loop circuit, enables some partial fault tolerance at the cost of 2 qubits per check.
We construct a circuit which is provably fault-tolerant on weight-6 stabilizer codes, at the cost of 6 qubits per check, as well as a circuit which is provably fault-tolerant on weight-6 CSS codes at the cost of 3 qubits per check.
Finally, we construct a heuristic ``superdense'' circuit which, like the bare circuit, has only 1 qubit per check, but may be more fault tolerant than the bare circuit in some cases.

Using a nearly exhaustive search, we find several novel codes, such as $\llbracket 36, 6, 6 \rrbracket$, $\llbracket 48, 8, 6\rrbracket$, $\llbracket 60, 4, 10 \rrbracket$, and $\llbracket 85, 8, 9 \rrbracket$ codes with check weight 6, all of which are not CSS.
Combining our code construction and fault-tolerant syndrome extraction circuits, we perform end-to-end numerical quantum memory experiments under standard circuit-level noise using a general-purpose quantum stabilizer decoder.
We show that several of our codes with excellent parameters achieve a threshold on the order of $0.2\%$ under this model, approximately matching the performance of the $\llbracket 144, 12, 12 \rrbracket$ code under the same model, which considers circuit-level noise but does not optimize syndrome extraction circuits for their circuit distance, only their depth.
Moreover, we find that the logical error rates drop faster as we use more fault-tolerant syndrome extraction circuits, at the cost of having a lower threshold due to the increased space of possible errors with more ancillary qubits.
As a consequence, we find mirror codes in combination with our various circuits to be a potentially promising path towards near-term fault-tolerant quantum memory, across a range of devices which may vary in physical qubit count.
As devices improve in size and physical error rate, larger mirror codes may be used, as well as more fault-tolerant circuits that incur more ancillary overhead.

The code used for our numerical experiments is \href{https://github.com/jz-lu/mirrorcodes}{freely available}.

\subsection{Related constructions}

\begin{figure}
    \centering
    \includegraphics[width=0.6\linewidth]{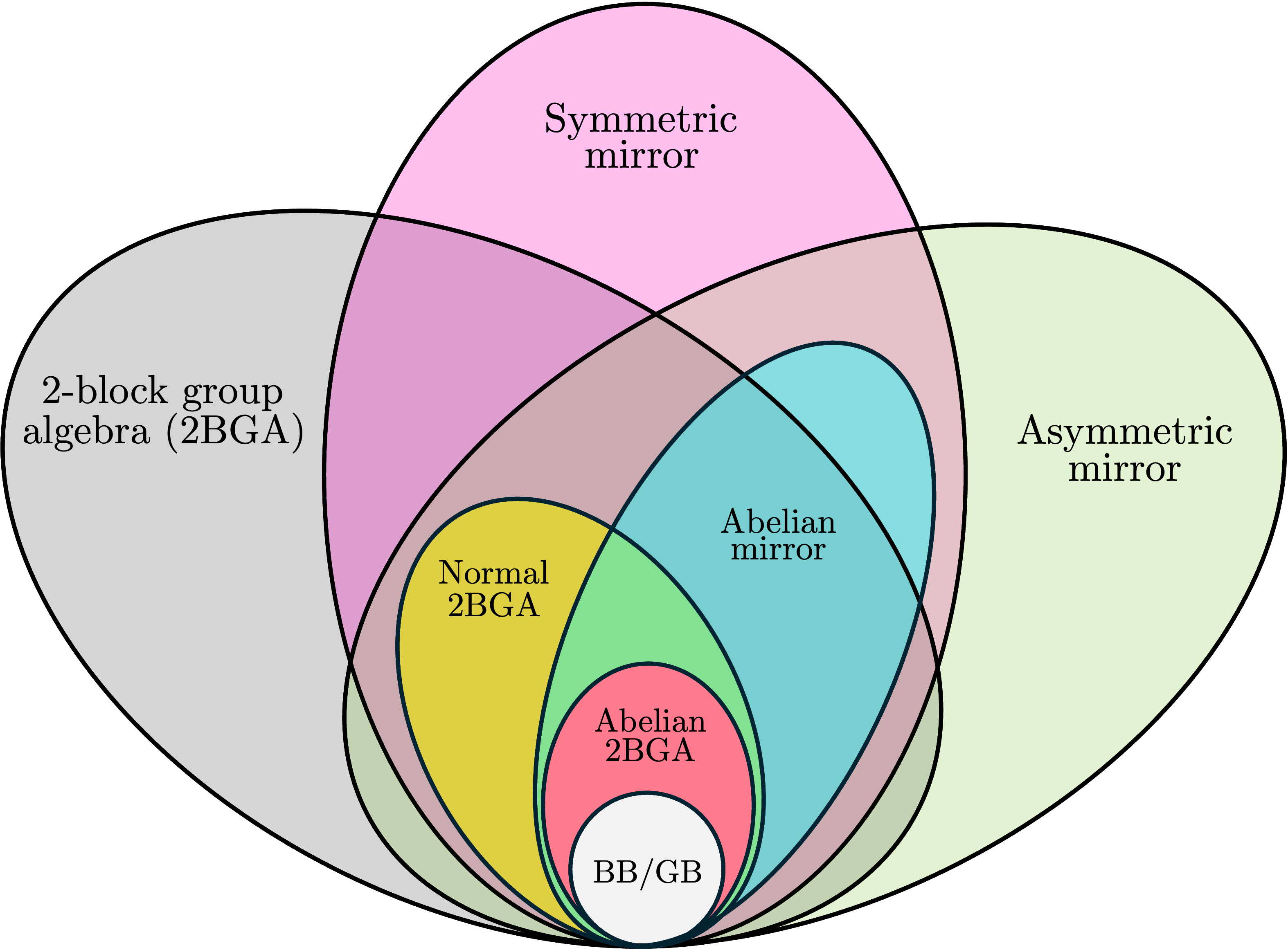}
    \caption{
    Comparison of mirror codes with its closest relatives, including two-block group algebra (2BGA), bivariate bicycle (BB), and generalized bicycle (GB) codes.
    ``Normal'' denotes a construction with group $G$ and subsets $A, B \subseteq G$ which satisfy $g A g^{-1} = A,\; gBg^{-1} = B$ for all $g \in G$; ``abelian'' denotes a construction with an abelian group $G$.
    }
    \label{fig:mirror_classification}
\end{figure}

The closest family of known codes to mirror codes are known at two-block group algebra (2BGA) codes, a family of quantum LDPC CSS codes which are also constructed from a group $G$ and two subsets $A, B \subseteq G$~\cite{2BGA}.
These codes contain all bivariate-bicycle and generalized-bicycle codes.
In the most general case, mirror codes have two forms: symmetric and asymmetric versions, which are not generally the same but are identical when $G$ is abelian. 
We show that mirror codes contain all normal 2BGA codes, a slight generalization of abelian 2BGA codes.
However, in general, there exists non-abelian 2BGA codes which are not mirror codes, and vice versa.
We illustrate the relation of our codes with 2BGA codes and its children in \Cref{fig:mirror_classification}
Note that our containments are up to qubit permutation and local Clifford operations; without this extra freedom, mirror codes would have zero intersection with 2BGA codes.

The remainder of this paper is organized as follows.
In \Cref{sec:preliminaries}, we recall notation and some preliminary facts which will be useful in our construction.
We then construct mirror codes and study them analytically in \Cref{sec:mirror_codes}; this includes classifying the symmetries of mirror codes, determining when mirror codes can be transformed into CSS codes, determining conditions under which mirror codes have poor properties, and formally relating mirror codes to other known code families.
Next, in \Cref{sec:FT_SEC}, we construct our syndrome extraction circuits and prove their fault tolerance.
We then discuss the details of our numerical code search and end-to-end quantum memory experiments in \Cref{sec:code_search_benchmarking}.
Finally, we conclude with some open questions in \Cref{sec:outlook}.

\section{Preliminaries} \label{sec:preliminaries}
We here discuss the requisite notation and background on elementary group theory, quantum stabilizer codes, and practical circuit-level benchmarking.

\subsection{Groups}
Generally, when we refer to a group $G$, we will use standard multiplication notation to refer to the group operation.
However, when $G$ is abelian, we may instead use $+$.
Likewise, we generically refer to the identity element as $e$, but in the abelian case we may instead refer to the identity as $0$.
We define the \emph{commutator} of group elements as $[g, h] := ghg^{-1}h^{-1}$, so that $gh = [g, h] hg$ and $[g, h] = e$ if and only if $g$ and $h$ commute.
Given a group $G$, the \emph{center} of $G$, denoted $Z(G)$, is the set of elements in $G$ which commute with all elements of $G$; that is, \begin{align}
    Z(G) := \set{h \in G \,|\, [g, h] = e} .
\end{align}
Note that $Z(G)$ is always a normal subgroup of $G$.

Given a finite group $G$, $|G|$ denotes the order of $G$, i.e. the number of elements in the group.
Given a subgroup $H \leq G$, we say that $H$ has \emph{index} $t = [G : H]$ if there are $t$ left cosets of $H$ in $G$.
By Lagrange's theorem, $|G| = t |H|$.
We recall that for $h \in G$, the \emph{conjugacy class} of $h$ is given by $\set{ghg^{-1} \,|\, g \in G}$.

Given a subset $A \subseteq G$ of group elements and some $g \in G$, we define set-element multiplication element-wise.
That is, \begin{align}
    Ag := \set{ag \,|\, a \in A} ,
\end{align}
and analogously for $gA$.
We say a subset $A$ is \emph{normal} if $Ag = gA$ for all $g \in G$.
When $A$ is a subgroup of $G$, this definition coincides with the definition of normal subgroups.
However, even if a group $G$ has no normal subgroups other than $\set{e}$ and $G$ itself, it will typically have many normal subsets.
In particular, a subset of $G$ is normal if and only if it is a union of conjugacy classes of elements in $G$.

The fundamental theorem of finite abelian groups gives a convenient representation of any finite abelian group, which we will use to give simple parameterizations of abelian mirror codes.

\begin{theorem}[Fundamental theorem of finite abelian groups] \label{thm:fundamental_abelian_gps}
    Let $\CG$ be a finite abelian group of order at least 2. Then 
    \begin{align}
        \CG = \bigoplus_{i=1}^m \Z_{q_i} ,
    \end{align}
    where the $q_i$ are (not necessarily distinct) prime powers $p_i^{a_i}$ which can without loss of generality be ordered lexicographically $(p_1, a_1) \leq \cdots \leq (p_m, a_m)$.
\end{theorem}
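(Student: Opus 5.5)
The plan is to proceed by strong induction on $|\CG|$, in two stages. First, split $\CG$ into its Sylow $p$-subgroups. Second, show that every finite abelian $p$-group is a direct sum of cyclic groups.

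\textbf{Primary decomposition.} Write $|\CG| = \prod_{j} p_j^{e_j}$ and set $\CG_{p} := \set{g \in \CG \,|\, p^{e}g = 0}$, where $e$ is the exponent of $p$ in $|\CG|$. Each $\CG_p$ is a subgroup because $\CG$ is abelian. Put $m = |\CG|/p^e$, which is coprime to $p$. Bézout gives integers with $up^e + vm = 1$, so every $g$ decomposes as $g = up^e g + vm g$. Here $vmg$ lies in $\CG_p$, and $up^e g$ is killed by $m$. Inducting over the prime factors, the map $\bigoplus_p \CG_p \to \CG$ is surjective. It is injective because an element killed by two coprime integers is zero, again by Bézout. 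This reduces the theorem to the case where $\CG$ is a $p$-group.

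\textbf{The $p$-group case, which is the main step.} Let $\CG$ be an abelian $p$-group and choose $a \in \CG$ of maximal order $p^r$. The goal is a complement $H$ with $\CG = \langle a \rangle \oplus H$; induction on $|H|$ then finishes. I would argue by induction on $|\CG|$.

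If $\CG = \langle a\rangle$, we are done. Otherwise I claim there is some $b \notin \langle a \rangle$ of order $p$. To see this, take $c \notin \langle a\rangle$ of minimal order. Then $pc \in \langle a\rangle$, say $pc = ka$. Since $p^{r-1}pc = 0$, maximality of the order of $a$ forces $p \mid k$, say $k = pj$. Set $b = c - ja$. Then $pb = 0$ and $b \notin \langle a \rangle$.

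Now pass to the quotient $\bar{\CG} = \CG/\langle b\rangle$. The image $\bar a$ still has order $p^r$, because $\langle a\rangle \cap \langle b\rangle = 0$, and $p^r$ is still the maximal order in $\bar{\CG}$. By induction, $\bar{\CG} = \langle \bar a\rangle \oplus \bar H$. Take $H$ to be the preimage of $\bar H$ in $\CG$. Then $H + \langle a\rangle = \CG$, and an order count, or a direct check that $\langle a \rangle\cap H = 0$ using the order of $\bar a$, gives the direct sum.

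I expect the obstacle to be exactly this lifting step: verifying $\langle a\rangle \cap H = 0$. Suppose $ta \in H$. Then $t\bar a \in \bar H \cap \langle \bar a\rangle = 0$, so $p^r \mid t$ and hence $ta = 0$.

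\textbf{Conclusion.} Combining both stages expresses $\CG$ as a direct sum of cyclic groups $\Z_{p^{a}}$ of prime-power order. Since $|\CG| \ge 2$, at least one summand is nontrivial, and trivial summands $\Z_{p^0}$ can be discarded. The summands can then be reordered lexicographically in $(p_i, a_i)$, because a direct sum is invariant under permuting its factors.
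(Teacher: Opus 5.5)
Your proof is correct. The paper does not prove this statement; it quotes it as the classical structure theorem and uses it only to fix a canonical parameterization of abelian groups for the canonical form and the code search. What you supply is therefore a self-contained replacement for a citation, following the standard textbook route. First comes the primary decomposition via B\'ezout. Then comes the lemma that a cyclic subgroup $\langle a\rangle$ of maximal order in an abelian $p$-group is a direct summand, proved by producing an order-$p$ element $b\notin\langle a\rangle$ and lifting a complement from $\CG/\langle b\rangle$. The delicate points all hold: minimality of $c$ gives $pc\in\langle a\rangle$; maximality of $p^r$ gives $p^r c=0$ and hence $p\mid k$; $\langle a\rangle\cap\langle b\rangle=0$ because $\langle b\rangle$ has prime order; and your check of $\langle a\rangle\cap H=0$ is right. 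The $p$-group stage only uses that every element has $p$-power order, which passes to subgroups and quotients, so you never need Cauchy's theorem to know $|\CG_p|$ is a power of $p$. What the citation gives that your argument does not is \emph{uniqueness}. The paper uses it tacitly in the very next sentence, when it associates each finite abelian group \emph{uniquely} with a tuple $(q_1,\dots,q_m)$. If you want that as well, add the invariant count $|p^j\CG|/|p^{j+1}\CG| = p^{\#\{i \,:\, p_i=p,\ a_i>j\}}$ for every prime $p$ and $j\ge 0$; this shows the multiset of prime powers depends only on $\CG$.
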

Therefore, when we refer to a finite abelian group of order $n$, we will associate it uniquely with a tuple $(q_1, \dots, q_m)$, such that $\prod_{i=1}^m q_i = n$.

\subsection{Quantum stabilizer and CSS codes}
We denote the Pauli operators as \begin{align}
    \mathbf{I} := \begin{pmatrix}
        1 & 0 \\ 0 & 1
    \end{pmatrix} , \; \mathbf{Z} := \begin{pmatrix}
        1 & 0 \\ 0 & -1
    \end{pmatrix} ,\;
    \mathbf{X} := \begin{pmatrix}
        0 & 1 \\ 1 & 0
    \end{pmatrix} ,\;
    \mathbf{Y} := \begin{pmatrix}
        0 & -i \\ i & 0
    \end{pmatrix} .
\end{align}
The $n$-qubit Pauli group $\CP_n$ is the set of $n$-fold tensor products of Paulis, i.e. \begin{align}
    \CP_n := \set{\alpha \mathbf{P}_1 \otimes \cdots \otimes \mathbf{P}_n \,|\, \alpha \in \set{\pm 1, \pm i}, \, \mathbf{P}_i \in \set{\mathbf{I}, \mathbf{Z}, \mathbf{X}, \mathbf{Y}}} .
\end{align}

Among the group of $d \times d$ unitary matrices $\mathrm{U}(d)$, the Clifford group is a subgroup which maps any Pauli to any other Pauli via conjugation.
That is, \begin{align}
    \Cliff_n := \set{\mathbf{C} \in \mathrm{U}(2^n) \,|\,\forall \mathbf{P} \in \CP_n, \, \mathbf{C P C}^\dag \in \CP_n} .
\end{align}
We refer to $\Cliff_1$ as single-qubit Cliffords, and operators of the form $\bigotimes_{i=1}^n \vec C_i$ where $\vec C_i \in \Cliff_1$ as \emph{local Cliffords}.
In general, $\Cliff_n$ is generated by the Hadamard, phase, and controlled-NOT gates, given by \begin{align}
    \mathbf{H} := \frac{1}{\sqrt{2}} \begin{pmatrix}
        1 & 1 \\ 1 & -1
    \end{pmatrix} ,\; \mathbf{S} := \begin{pmatrix}
        1 & 0 \\ 0 & i 
    \end{pmatrix} , 
\end{align}
and $\mathbf{CX}_{i \to j} := \ketbra{0}{0}_i \otimes \mathbf{I}_j + \ketbra{1}{1}_i \otimes \mathbf{X}_j$.
Local Cliffords are generated solely by $\mathbf{H}$ and $\mathbf{S}$.

A \emph{stabilizer subgroup} $\CS \leq \CP_n$ is an abelian group of $\CP_n$ which does not contain $-\mathbf{I}$.
Any such subgroup can be specified by a set of generators which, for purposes of this work, we will not insist be independent.
Such a set of generators is known as the stabilizer tableau $S = \set{\mathbf{S}_1, \dots, \mathbf{S}_r}$.
A stabilizer tableau defines a subspace of quantum states by the set of states which are fixed points of every operator in the subgroup.
In other words, we define the \emph{stabilizer code} associated with $S$ as $C(S) := \set{\ket{\psi} \,|\,\forall i, \, \mathbf{S}_i \ket{\psi} = \ket{\psi} }$.
The logical dimension of the code is given by $k := \log \dim C(S)$, and the rate is $R := k/n$.
The \emph{check weight} of the tableau is the maximum weight over the $\mathbf{S}_i$, where the weight of a $n$-qubit Pauli is the maximum number of non-identity Paulis in the tensor product.
We denote this operation $\wt(\mathbf{P})$.

Given a stabilizer subgroup $\CS \leq \CP_n$, the \emph{logical subspace} $\CL(\CS) := C_{\CP_n}(\CS)$ is the centralizer $C_{\CP_n}(\CS)$ of $\CS$.
That is, the logical subspace is the set of Paulis which commute with all stabilizers.
By definition, the logical subspace includes all stabilizers.
Such operators act on the logical qubits encoded by the stabilizer code.
The \emph{distance} of the stabilizer code is the minimum weight element of the logical subspace \emph{excluding} the stabilizer subgroup, i.e. \begin{align}
    d := \min_{\mathbf{P} \in \CL(\CS) \setminus \CS} \wt(\mathbf{P}) .
\end{align}
We refer to a $n$-qubit code with logical dimension $k$, distance $d$, and check weight $w$ as a $\llbracket n, k, d, w \rrbracket$ code.
On occasion, we may omit the weight, and refer to the code as a $\llbracket n, k, d \rrbracket$ code.

Let $\mathbf{O}$ be a $1$-qubit operator.
For a $m$-dimensional binary vector $\mathbf{v}$, denote $\mathbf{O}^{\mathbf{v}} := \bigotimes_{i=1}^m \mathbf{O}^{v_i}$, where $\mathbf{O}^{0} := \mathbf{I}$.
In the case every $\mathbf{S}_i$ is either of the form $\mathbf{Z}^{\mathbf{v}}$ or $\mathbf{X}^{\mathbf{v}}$, we say that the stabilizer tableau is \emph{CSS}.
In many cases, the tableau may not be CSS, but there will be a simple local transformation to make the code CSS.
Therefore, we say that a code is \emph{equivalently CSS} if there exists $\mathbf{C} = \bigotimes_{i=1}^n \mathbf{C}_i$, where $\mathbf{C}_i \in \Cliff_1$, such that $\set{\mathbf{C} \mathbf{S}_1 \mathbf{C}^\dag, \,\dots,\, \mathbf{C} \mathbf{S}_r \mathbf{C}^\dag}$ is CSS.
We further say that a code is \emph{equivalently CSS via Hadamards} if each $\mathbf{C}_i \in \set{\mathbf{I}, \,\mathbf{H}}$.
More generally, we say that two stabilizer codes are permutation-equivalent if there exists a permutation $\pi$ such that one code can be transformed into the other by permuting the qubits via $\pi$.
Two codes are LC-equivalent if there exists a local Clifford $\mathbf{C}$ such that applying $\mathbf{C}$ by conjugation onto the stabilizers yields new stabilizers which equal the original stabilizers up to phases.
Note that equivalent codes have the same check weight, rate, and distance.

Let $A \subseteq [n] := \set{1,\, \dots,\, n}$ be a set of qubit indices, and let $\mathbf{1}_{A} \in \F_2^n$ be the indicator vector of presence in $A$.
Then we define $\mathbf{O}(A) := \mathbf{O}^{\mathbf{1}_A}$, i.e. take the tensor product of $\mathbf{O}$ on each qubit in $A$.

\subsection{Circuit-level noise models}

The distance of a code is the smallest number of Pauli errors that can occur on the data qubits to perform a logical operator.
While the distance is an excellent simple proxy for a code's practical performance, it does not fully capture the landscape of noise in a memory experiment. 
This is because to measure syndromes, one must implement a \textit{syndrome-extraction circuit}, which iteratively applies elements of a small gate set between data qubits and ancillary qubits to measure stabilizers.
Not only is each gate application inherently noisy, but qubits which idle while we compute on other qubits experience additional noise as well.
In general, the \emph{circuit noise model} represents the set of all possible faults that can occur during the operation of a circuit.
Every qubit experiences noise at every moment in time, which we model as occurring with a probability proportional to some global physical error rate $p$.
The actual probabilities of failure will depend on the hardware architecture being used, but generally these are modeled as being constant multiples of $p$, depending on the type of operation being applied.
Specifically, qubits can experience noise through all of the following:
\begin{itemize}
    \item A single-qubit Clifford acting on a qubit might be followed by a random Pauli. 
    We denote this \textit{single-qubit depolarizing noise} by $p_1$.
    \item A two-qubit Clifford gate might be followed by a random 2-qubit Pauli. 
    We denote this two-qubit depolarizing noise by $p_2$.
    \item A measurement might report the wrong outcome. 
    We denote this measurement error by $p_{\text{meas}}$.
    \item The preparation of a quantum basis state, e.g. $\ket{0}$ or $\ket{1}$, might prepare the orthogonal basis state. 
    This is initialization error, $p_{\text{init}}$.
    \item A qubit that is not experiencing any of the above might experience a random Pauli. This is idling noise denoted by $p_\text{idle}$.
    \item Some noise models, such as \texttt{SI1000}~\cite{gidney2022benchmarking}, include \textit{resonator idling noise}, where idling qubits experience more depolarizing noise---with probability $p_\text{res}$---if any other qubits are being measured or reset in that timestep.
\end{itemize}

We can refer to the \textit{circuit distance} of a code---defined now through a particular choice of syndrome extraction circuit---as the smallest number of elementary noise events (those listed above) that must occur to apply an undetectable logical operation.
Here \textit{undetectable} means that no extra measurements, such as those corresponding to stabilizers or additional flags in the syndrome extraction circuit (see below), report a non-zero syndrome.
Even if a code has large distance, its circuit distance may be small.
This is because of error propagation: a few elementary faults in the syndrome extraction circuit may be propagated via multi-qubit gates to a much larger number of total errors.
To avoid this issue, we can add \textit{flags} to our syndrome extraction circuit, which can detect various errors on the ancilla qubits during syndrome extraction rounds.
When a syndrome extraction circuit is fully fault-tolerant, the circuit distance is the same as the distance.
Note that for a sufficiently small physical error rate $p$ and a sufficiently good syndrome extraction circuit, it is useful to repeatedly run the syndrome extraction circuit multiple times, so that faults in the stabilizer measurement process can be detected and corrected in the decoding process.
We typically run about $d$ rounds of syndrome extraction, where $d$ is the distance of the code, in order to correct extraction-based errors.
This technique trades efficiency for accuracy, especially when $d$ is large.

We say that a family of codes indexed by physical dimension $n$ has a \textit{threshold} $p_{\text{th}}$ if for $p<p_{\text{th}}$, larger codes in the family will have exponentially (in $n$) lower logical error rates, $p_L$.
Intuitively, when the physical error rate is below $p_{\text{th}}$, the code family corrects errors more frequently than it introduces them.
The exact value of $p_{\text{th}}$ depends on the code family, syndrome extraction circuit, noise model, and decoder.
Thus, the numerical value of a threshold is not a property purely of the code family.

If we instead wish to benchmark a single code, which need not be part of any infinite family, we instead use the \textit{pseudothreshold} $p_{\text{pth}}$---the physical error rate parameter $p$ at which the logical error rate is equal to the physical error rate $p_L=p$.
When the physical error rate is below $p_{\text{pth}}$, the logical error rate is less than the physical error rate.
To give normalized comparisons between physical error rates $p$ and logical error rates $p_L$, we compute logical error rates by measuring the fraction of logical errors out of all experiments, per logical qubit, per round of syndrome extraction.

\section{Mirror Codes} \label{sec:mirror_codes}

In \Cref{sec:introduction}, we defined the simplest case of mirror codes---when the group $G$ is abelian.
We here give the more general definition for an arbitrary, not necessarily abelian group.
In this case, the order of multiplication plays a significant role, to the extent to which the exact choice of multiplication order defines distinct families of codes.

\begin{construction}[Mirror codes]
    Let $G$ be a finite group with $|G| = n$.
    Let $A, B \subseteq G$, and define $w_Z := |A|$, $w_X := |B|$, and $w := w_Z + w_X$.
    Associate to each group element $g \in G$ a physical qubit.
    \begin{itemize}
        \item A $(G, A, B)$ \emph{symmetric} mirror code, when well-defined, is given by the $n$ stabilizers \begin{align}
            \mathbf{S}(g) := \mathbf{Z}(A g) \mathbf{X}(B g^{-1}) .
        \end{align}

        \item A $(G, A, B)$ \emph{asymmetric} mirror code, when well-defined, is given by the $n$ stabilizers \begin{align}
            \mathbf{S}(g) := \mathbf{Z}(A g) \mathbf{X}(g^{-1} B) .
        \end{align}
    \end{itemize}
\end{construction}
Here, we recall that $Ag := \set{ag \,|\, a \in A}$, and $\mathbf{Z}(T) := \prod_{q \in T} \mathbf{Z}_q$.
Note that if the $\mathbf{Z}$ and $\mathbf{X}$ overlap on a qubit, we say that the stabilizers acts with a $\mathbf{Y}$ on that qubit.
This convention can lead to some stabilizers being able the generate minus the identity, $-\mathbf{I}$, which by convention is not a stabilizer subgroup.
Therefore, as we generate each new stabilizer $\mathbf{S}(g)$, we adopt the convention that if adding $\mathbf{S}(g)$ would result in the stabilizers generating $-\mathbf{I}$, we add $-\mathbf{S}(g)$ instead.
We henceforth ignore this stabilizer phase issue.

We first show that this construction defines a proper Low-Density Parity Check (LDPC) code.
\begin{proposition}[Mirror codes are LDPC]
    For any $w$, $(G, A, B)$ mirror codes with $|A|+|B|\leq w$ are $w$-LDPC, meaning every stabilizer has weight $\leq w$ and every qubit is in the support of at most $w$ stabilizers.
\end{proposition}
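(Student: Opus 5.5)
The plan is to bound the two quantities separately: first the weight of each stabilizer (rows of the check matrix), then the number of stabilizers touching each qubit (columns). I treat the symmetric and asymmetric constructions together, since the argument only uses that left and right multiplication and inversion are bijections of $G$.

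\emph{Row weight.} Fix $g \in G$. The support of $\mathbf{S}(g)$ is contained in $Ag \cup Bg^{-1}$ (symmetric case) or $Ag \cup g^{-1}B$ (asymmetric case). Right multiplication by $g$ is injective, so $|Ag| = |A|$. Likewise $|Bg^{-1}| = |g^{-1}B| = |B|$. Hence $\wt(\mathbf{S}(g)) \le |A| + |B| \le w$. Where the two supports overlap, the qubit carries a $\mathbf{Y}$, which only lowers the count, so the inequality still holds.

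\emph{Column weight.} Fix a qubit $h \in G$ and count the $g$ for which $h \in \operatorname{supp}\mathbf{S}(g)$. Such a $g$ must satisfy either $h \in Ag$ or $h$ lies in the $\mathbf{X}$-part.
\begin{itemize}
\item If $h \in Ag$, then $h = ag$ for some $a \in A$, so $g = a^{-1}h$. This gives at most $|A|$ choices of $g$, one per $a$.
\item In the symmetric case, $h \in Bg^{-1}$ means $h = bg^{-1}$, so $g = h^{-1}b$. In the asymmetric case, $h \in g^{-1}B$ means $h = g^{-1}b$, so $g = bh^{-1}$. Either way each $b \in B$ determines at most one $g$, giving at most $|B|$ choices.
\end{itemize}
By a union bound, at most $|A| + |B| \le w$ stabilizers act nontrivially on $h$.

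The only point needing care is the phase convention, under which some $\mathbf{S}(g)$ may be replaced by $-\mathbf{S}(g)$. This does not change supports, so the counting above is unaffected. No step is a genuine obstacle; the whole argument reduces to the bijectivity of group translations and inversion.
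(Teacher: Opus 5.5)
Your proposal is correct and follows essentially the same approach as the paper. Both proofs bound the stabilizer weight by $|A|+|B|$ using the fact that translates of a set have the same size. Both bound the number of stabilizers on each qubit by noting that each $a \in A$ and each $b \in B$ places a given qubit in the support of exactly one stabilizer, because $aG = Ga = G$. Your version just writes this out explicitly as $g = a^{-1}h$, $g = h^{-1}b$ (symmetric), and $g = bh^{-1}$ (asymmetric).
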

\begin{proof}
    By the definition of the stabilizers, $\mathbf{S}(g)$ will contain at most $|A|$ $\mathbf{Z}$ terms and at most $|B| \mathbf{X}$ terms.
    The weight of $\mathbf{S}(g)$ is thus at most $w$, and might be lower if the $\mathbf{Z}$ and $\mathbf{X}$ terms overlap to combine into a $\mathbf{Y}$.
    Since, for any $a$, $aG=Ga=G$, any term inside $A$ or $B$ will map to every qubit in $G$ exactly once in terms such as $Ag$, $Bg^{-1}$, and $g^{-1}B$.
    Thus, the number of different stabilizers that a qubit can be in the support of is at most $|A|+|B|=w$, proving that mirror codes are LDPC.
\end{proof}

We emphasize that although we define a mirror code with $n$ stabilizers, some stabilizers may be linearly dependent on others, and hence the rate of the corresponding code is still positive for well-chosen $(G, A, B)$.
Unlike the abelian case, however, not all choices of $A, B$ even yield well-defined mirror codes.
We next give an exact characterization of well-defined mirror codes.

\begin{proposition}[Well-defined mirror code characterization] \label{prop:mirror_code_well_defined}
For a group $G$ and subset $A, B \subseteq G$, define the mod-2 kernel functions $K_S, K_A \,:\, G \times G \to \F_2$ by \begin{align}
    K_S(g, h) := |Agh \cap B| ,\quad K_A(g, h) := |g A h \cap B| .
\end{align}
We say that a kernel $K$ is symmetric if $K(g, h) = K(h, g)$ for all $g, h \in G$.
Then a $(G, A, B)$ symmetric mirror code is well-defined if and only if $K_S$ is symmetric.
Likewise, a $(G, A, B)$ asymmetric mirror code is well-defined if and only if $K_A$ is symmetric.
\end{proposition}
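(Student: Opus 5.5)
"Well-defined" here means that all the stabilizers $\mathbf{S}(g)$ pairwise commute. Two Paulis $\mathbf{Z}(P)\mathbf{X}(Q)$ and $\mathbf{Z}(P')\mathbf{X}(Q')$ commute if and only if $|P\cap Q'| + |P'\cap Q| \equiv 0 \pmod 2$. So the plan is to compute, for each pair $g,h\in G$, the two overlaps mod 2 and rewrite them via the kernel. Phases and $\mathbf{Y}$'s do not matter, since a $\mathbf{Y}$ on a qubit is just $\mathbf{Z}\mathbf{X}$ up to phase, and the symplectic form is computed from the $Z$- and $X$-supports separately.

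\textbf{Symmetric case.} For $\mathbf{S}(g)$ and $\mathbf{S}(h)$ the anticommutation parity is $|Ag\cap Bh^{-1}| + |Ah\cap Bg^{-1}|$. Right-multiplying by $h$ is a bijection of $G$, so $|Ag\cap Bh^{-1}| = |Agh\cap B| = K_S(g,h)$. Similarly $|Ah\cap Bg^{-1}| = |Ahg\cap B| = K_S(h,g)$. Hence $\mathbf{S}(g)$ and $\mathbf{S}(h)$ commute if and only if $K_S(g,h)=K_S(h,g)$ in $\F_2$. Quantifying over all pairs gives both directions of the equivalence at once. The case $g=h$ is trivially consistent.

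\textbf{Asymmetric case.} The stabilizers are $\mathbf{Z}(Ag)\mathbf{X}(g^{-1}B)$. The parity is $|Ag\cap h^{-1}B| + |Ah\cap g^{-1}B|$. Left-multiplying by $h$ gives $|Ag\cap h^{-1}B| = |hAg\cap B|$, and left-multiplying by $g$ gives $|Ah\cap g^{-1}B| = |gAh\cap B|$. These are $K_A(h,g)$ and $K_A(g,h)$ respectively, so commutation of every pair is again equivalent to symmetry of $K_A$.

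The only real care needed is choosing the correct side (left versus right) of multiplication, so that each overlap matches the kernel's argument order exactly. One must also confirm that "well-defined" means exactly pairwise commutation, with the $-\mathbf{I}$ issue already handled by the paper's sign convention. I do not expect any substantive obstacle beyond this.
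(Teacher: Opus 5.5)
Your proposal is correct and follows the paper's proof: you compute the symplectic overlap $|Ag\cap Bh^{-1}|+|Ah\cap Bg^{-1}|$ (respectively $|Ag\cap h^{-1}B|+|Ah\cap g^{-1}B|$) and rewrite each term via a translation bijection as a value of $K_S$ (respectively $K_A$). Your explicit left-multiplication step identifying the asymmetric overlaps with $K_A(h,g)$ and $K_A(g,h)$ completes the asymmetric case, which the paper only sketches.
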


\begin{proof}
Consider first the symmetric case.
Two stabilizers $\mathbf{S}(g)$ and $\mathbf{S}(h)$ commute if and only if there are an even number of anticommutations.
The total number of anticommutations is given by $|Ag \cap Bh^{-1}| + |Ah \cap B g^{-1}|$.
Thus, they commute iff $\forall g, h \in G$, \begin{align}
    |Ag \cap Bh^{-1}| = |Ah \cap B g^{-1}| \pmod{2} .
\end{align}
Note that $|Ag \cap Bh^{-1}| = |Agh \cap B|$.
Hence, the above condition is precisely equivalent to $K_S(g, h) = K_S(h, g)$.
In the asymmetric case, the number of anticommutations is given by $|Ag \cap h^{-1} B| + |Ah \cap g^{-1} B|$.
\end{proof}

A priori, there are in addition two other conceivable constructions of mirror codes which differ from the above constructions by order of operation.
We next show, however, that these alternative constructions are in reality equivalent to symmetric or asymmetric mirror codes.

\begin{lemma}[Equivalence of alternative constructions] \label{lemma:equivalence_alt_constructions}
Let $G$ be a group and $A, B \subseteq G$.
Define the stabilizer tableaux \begin{align}
    \set{\mathbf{S}'(g) := \mathbf{Z}(gA) \mathbf{X}(g^{-1} B)}_{g \in G} ,\quad \set{\mathbf{S}''(g) := \mathbf{Z}(gA) \mathbf{X}(B g^{-1})} .
\end{align}
The former tableau corresponds to a code equivalent to a $(G, A^{-1}, B^{-1})$ symmetric mirror code by qubit permutation, and the latter tableau corresponds to a code equivalent to a $(G, B, A)$ asymmetric mirror code by a combination of Hadamard transform and phases.
In each case, one code is well-defined if and only if the corresponding code is well-defined.
\end{lemma}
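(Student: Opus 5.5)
The plan is to write down an explicit qubit relabeling (for the first tableau) or an explicit local Clifford (for the second), and check that it maps each stabilizer of one tableau onto a stabilizer of the target mirror code. The map on stabilizers will be a reindexing $g \mapsto g^{-1}$, which is a bijection of $G$, so the two tableaux agree as sets of generators.

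\emph{First tableau.} Use the qubit permutation $\pi(x) := x^{-1}$ on $G$. Inversion reverses products, so $\pi(gA) = A^{-1}g^{-1}$ and $\pi(g^{-1}B) = B^{-1}g$. Hence $\mathbf{S}'(g)$ is carried to $\mathbf{Z}(A^{-1}g^{-1})\,\mathbf{X}(B^{-1}g)$. Setting $h := g^{-1}$, this is exactly $\mathbf{Z}(A^{-1}h)\,\mathbf{X}(B^{-1}h^{-1})$, the stabilizer $\mathbf{S}(h)$ of the $(G, A^{-1}, B^{-1})$ symmetric mirror code. Qubits where the $\mathbf{Z}$ and $\mathbf{X}$ parts overlap (the $\mathbf{Y}$ positions) are moved consistently by $\pi$, so no extra care is needed there.

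\emph{Second tableau.} Conjugate by the transversal Hadamard $\mathbf{H}^{\otimes n}$, which exchanges $\mathbf{Z}$ and $\mathbf{X}$. It sends $\mathbf{S}''(g)$ to $\mathbf{X}(gA)\,\mathbf{Z}(Bg^{-1})$, up to a sign coming from the $\mathbf{Y}$ positions (since $\mathbf{H}\mathbf{Y}\mathbf{H}^\dag = -\mathbf{Y}$). The asymmetric $(G,B,A)$ mirror code has stabilizers $\mathbf{Z}(Bh)\,\mathbf{X}(h^{-1}A)$. Putting $h := g^{-1}$ gives $\mathbf{Z}(Bg^{-1})\,\mathbf{X}(gA)$, which matches. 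The leftover signs are exactly the ``phases'' in the statement. They are absorbed by the paper's stabilizer sign convention (the definition of LC-equivalence is also only up to phases), or removed by phase gates if one wants an exact match.

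\emph{Well-definedness.} Conjugation by a permutation or by a local Clifford is an automorphism of the Pauli group that preserves commutation relations. The generating sets also correspond bijectively. Therefore every pair of stabilizers in one tableau commutes if and only if every pair in the other does, which is the well-definedness condition. As a cross-check, one could instead verify directly that the kernel criterion of \Cref{prop:mirror_code_well_defined} transforms correctly, for example that $K_S$ for $(A^{-1},B^{-1})$ is symmetric exactly when the commutation condition for $\mathbf{S}'$ holds, but this is not needed.

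The only step needing care is the sign bookkeeping at the $\mathbf{Y}$ positions in the Hadamard case, and there it suffices to note that signs are ignored by convention. The rest is the order-reversal identity $(gA)^{-1} = A^{-1}g^{-1}$.
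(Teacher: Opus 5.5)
Your proof is correct and takes essentially the same route as the paper: the inversion relabeling $x \mapsto x^{-1}$ for the first tableau, and the transversal Hadamard $\mathbf{H}^{\otimes n}$ with the reindexing $g \mapsto g^{-1}$ for the second, where the signs $(-1)^{|gA \cap Bg^{-1}|}$ from the overlap qubits are dismissed as irrelevant phases. You run the maps in the opposite direction from the paper, which is immaterial since they are involutions. One minor correction to your aside: to remove those signs exactly you would conjugate by a Pauli operator, not apply phase gates (which turn $\mathbf{X}$ into $\mathbf{Y}$), but your argument never needs this.
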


\begin{proof}
Consider a $(G, A^{-1}, B^{-1})$ symmetric mirror code where each qubit $q$ is labeled by a group element $g$; that is, $q(g) = g$.
We define a permutation in which each qubit is re-labeled by the inverse element, i.e. $q(g) = g^{-1}$.
The stabilizers under this permutation become \begin{align}
    \mathbf{S}_\pi(g) = \mathbf{Z}((A^{-1} g)^{-1}) \mathbf{X}((B^{-1} g^{-1})^{-1}) = \mathbf{Z}(g^{-1} A) \mathbf{X}(g B) = \mathbf{S}'(g^{-1}) .
\end{align}

Thus, the set of stabilizers of the $(G, A^{-1}, B^{-1})$ symmetric mirror code permuted by $\pi$ are precisely the set of stabilizers given by $\mathbf{S}'$, and hence the codes are equivalent.
Next, note that a $(G, B, A)$ asymmetric mirror code has stabilizers $\mathbf{S}(g) = \mathbf{Z}(B g) \mathbf{X}(g^{-1} A)$, which under conjugation by a Hadamard transform gives \begin{align}
    \mathbf{S}_{\mathbf{H}}(g) & := \mathbf{H}^{\otimes n} \mathbf{S}(g) \mathbf{H}^{\otimes n} = \mathbf{X}(B g) \mathbf{Z}(g^{-1} A) \\
    & = (-1)^{|Bg \,\cap\, g^{-1} A|} \mathbf{Z}(g^{-1} A) \mathbf{X}(B g) = (-1)^{|Bg \,\cap\, g^{-1} A|} \mathbf{S}''(g^{-1}) .
\end{align}
The phase difference is not necessarily 0.
However, the choice of phase on each stabilizer does not affect its error-correcting properties.
In both cases, the equivalence map is executed by operations which leave commutators invariant, and hence one code is well-defined if and only if the corresponding code is well-defined.
\end{proof}
On the other hand, symmetric and asymmetric mirror codes are genuinely different code families, which we prove in \Cref{app:counterexamples_equivalences}.

There is a simple equivalent characterization for commutation in symmetric mirror codes in terms of a certain element of the group algebra $\F_2[G]$ being central.
Similar language is often the convention within the context of 2BGA codes; we give this equivalent characterization in \Cref{app:group_algebra_symmetric_mirror}.
Interestingly, this group algebra formulation does not appear to extend to asymmetric mirror codes.
We make a few initial remarks about these two definitions of mirror codes.
First, note that if $G$ is abelian, then the symmetric and asymmetric constructions are equivalent, and every choice of $A, B$ gives a well-defined mirror code.
In fact, even if $G$ is not abelian, it suffices for $A, B \subseteq Z(G)$ (i.e. the subsets are in the center of $G$) for the two definitions to be equal, and for every choice of $A, B \subseteq Z(G)$ to yield a well-defined mirror code.

Although it is not obvious as to how to produce $A, B \subseteq G$ in general such that $(G, A, B)$ is a well-defined (a)symmetric mirror code based on the previous characterization, there are several simpler conditions which are \emph{sufficient} to ensure stabilizer commutativity.

\begin{lemma}[Center-based sufficient conditions for valid mirror codes] \label{lemma:center_sufficient_AB_mirror_codes}
    Let $G$ be a group and $A, B \subseteq G$. 
    Denote by $A^{-1} := \set{a^{-1} \,|\, a \in A}$ and $AB := \set{ab \,|\, a \in A, b \in B}$.
    \begin{itemize}
        \item If $A^{-1} B \subseteq Z(G)$, then $(G, A, B)$ forms a symmetric mirror code.
        \item If $A B \subseteq Z(G)$, then $(G, A, B)$ forms an asymmetric mirror code.
    \end{itemize}
\end{lemma}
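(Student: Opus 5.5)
By \Cref{prop:mirror_code_well_defined}, it suffices to show that the relevant mod-2 kernel is symmetric. I would prove something stronger: under the hypothesis, there is an explicit bijection between the two sets being counted, so $K_S(g,h)=K_S(h,g)$ holds exactly as cardinalities, not merely mod 2. The natural bijection is the one from the abelian argument in \Cref{sec:introduction}, namely pairing an overlap $ag=bh^{-1}$ with $ah=bg^{-1}$, using centrality in place of commutativity.

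\textbf{Symmetric case.} I would write $|Ag\cap Bh^{-1}|$ as the number of pairs $(a,b)\in A\times B$ with $ag=bh^{-1}$. Each such pair is determined by either coordinate, so this equals the count of these pairs. The condition is equivalent to $a^{-1}b = gh$.

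\begin{itemize}
\item By hypothesis $z:=a^{-1}b\in A^{-1}B\subseteq Z(G)$, so $gh=z$ is central.
\item Then $hg = g^{-1}(gh)g = g^{-1}zg = z = gh$.
\item Hence $a^{-1}b=gh$ if and only if $a^{-1}b=hg$.
\end{itemize}

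More carefully, for any pair $(a,b)$, if $a^{-1}b=gh$ then $gh\in Z(G)$, which forces $hg=gh$. So the condition holds for $(g,h)$ exactly when it holds for $(h,g)$. The identity map on $A\times B$ is therefore the bijection between the sets counted by $K_S(g,h)$ and $K_S(h,g)$, since $|Agh\cap B|$ counts the pairs with $agh=b$. This gives equality of the counts, which is more than the required symmetry of $K_S$.

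\textbf{Asymmetric case.} Here $K_A(g,h)=|gAh\cap B|$ counts the pairs $(a,b)$ with $gah=b$. I would rewrite this condition as $ah = g^{-1}b$, and then look for an expression where the hypothesis $ab\in Z(G)$ can be applied. I would use $a^{-1}g^{-1}b = h$, that is, $g^{-1}b = ah$. Directly manipulating the condition to produce the product $ab$ seems awkward, so instead I would compare with \Cref{lemma:equivalence_alt_constructions} to reduce to the symmetric case with adjusted sets.

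The mismatch between the two cases is this. The anticommutation count is $|Ag\cap h^{-1}B|$, counting pairs with $hag=b$. The symmetric pair counts pairs with $gah=b$. So I need a bijection between solutions of $hag=b$ and of $gah=b$, presumably with $a$ fixed and $b$ replaced by a conjugate. Multiplying on the left by $a$ gives $ahag=ab=:z$, which is central. From this, $ag\cdot ah = z$ after conjugating, since $ahag=z$ with $z$ central implies $ag\,ah=z$ (conjugate by $ah$). Then $gah=a^{-1}z(ag)^{-1}\cdot$... I would set this up so that the map $(a,b)\mapsto(a,b')$ with $b'$ defined by $ab'=ab$ works. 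That forces $b'=b$, so I need to check that $hag=b\iff gah=b$ when $ab$ is central.

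Given $hag=b$, we have $ahag=z$. Centrality of $z$ yields $agah = (ah)^{-1}z(ah)\cdots$; more cleanly, $xy=z$ central implies $yx=z$, applied to $x=a$ and $y=hag$. This gives $haga=z$, hence $hag=za^{-1}$ and so on. The main obstacle is finding the right cyclic-rotation trick that turns $ahag$ into $agah$. My plan is to apply the identity "$xy=z$ central $\Rightarrow$ $yx=z$" repeatedly, with suitable splittings, to reach $gah=b$. If that fails, I would fall back on applying \Cref{lemma:equivalence_alt_constructions} to transfer the symmetric result.
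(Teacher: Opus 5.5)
Your symmetric case is correct and is the paper's argument in different notation. The paper rewrites $ag=bh^{-1}$ as $h=g^{-1}a^{-1}b=a^{-1}bg^{-1}$, which gives $ah=bg^{-1}$ for the same pair $(a,b)$. That is exactly your observation that $gh=a^{-1}b\in Z(G)$ forces $hg=gh$, so the identity on $A\times B$ is the bijection.

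The asymmetric case is not finished as written. You call the passage from $ahag=ab$ to $agah=ab$ the ``main obstacle'' and leave it to trial and error, with \Cref{lemma:equivalence_alt_constructions} as a fallback. The step is in fact already in your own text.

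\begin{itemize}
\item Split $ahag=ab$ as $xy$ with $x=ah$ and $y=ag$.
\item Since $ab\in Z(G)$, we get $yx=x^{-1}(xy)x=(ah)^{-1}(ab)(ah)=ab$, i.e.\ $agah=ab$.
\item Cancelling $a$ on the left gives $gah=b$.
\item Exchanging the roles of $g$ and $h$ gives the converse.
\end{itemize}

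So the identity on $A\times B$ is again a bijection, this time between the pairs counted by $|Ag\cap h^{-1}B|=K_A(h,g)$ and by $|Ah\cap g^{-1}B|=K_A(g,h)$. This is precisely the paper's computation $h=bg^{-1}a^{-1}=a^{-1}(ab)g^{-1}a^{-1}=a^{-1}g^{-1}a^{-1}(ab)=a^{-1}g^{-1}b$, i.e.\ $ag=h^{-1}b\Rightarrow ah=g^{-1}b$.

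Your fallback would not work. The transformations in \Cref{lemma:equivalence_alt_constructions} are of two kinds:
\begin{itemize}
\item inverting qubit labels, which swaps left and right multiplication on both the $\mathbf{Z}$ and $\mathbf{X}$ parts at once;
\item a global Hadamard, which swaps the two parts.
\end{itemize}
Each of these sends a tableau with one part left-multiplied and the other right-multiplied to another such mixed tableau. So they never turn an asymmetric code into a symmetric one, and they give no reduction of the hypothesis $AB\subseteq Z(G)$ to the symmetric case.

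With the short computation above in place of the ``plan,'' your proof is complete and coincides with the paper's.
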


\begin{proof}
Fix two stabilizers $\mathbf{S}(g)$ and $\mathbf{S}(h)$ such that $g \neq h$.
Consider first the symmetric case.
We will show that the given condition implies that the two anticommuting overlap sets $Ag \cap Bh^{-1}$ and $Ah \cap Bg^{-1}$ are in one-to-one correspondence.
That is, the two sets have the same size, which is stronger than having the same parity.
Let $v \in Ag \cap Bh^{-1}$, so that for some $a \in A, b \in B$, $v = ag = bh^{-1}$.
Then $h = g^{-1} a^{-1} b$.
Since $A^{-1} B \in Z(G)$, $h = a^{-1} b g^{-1}$.
Equivalently, $ah = b g^{-1}$, so there is a unique corresponding element $v' \in Ah \cap B g^{-1}$.
This correspondence gives the desired bijection.
In the asymmetric case, the proof proceeds similarly, where we begin with $v = ag = h^{-1} b$, so that by centrality of $AB$,
\begin{align}
    h = b g^{-1} a^{-1} = a^{-1} (ab) g^{-1} a^{-1} = a^{-1} g^{-1} a^{-1} (ab) = a^{-1} g^{-1} b
\end{align}
Rearranging, $ah = g^{-1} b$ as desired.
\end{proof}
We remark that for any $A, B \subseteq G$, if $AB \subseteq Z(G)$, then $[a, b] = 0$ for all $a \in A$ and $b \in B$.
This is because centrality of $AB$ implies that $gab = abg$ for all $g \in G$.
Choosing $g = a^{-1}$, $b = aba^{-1}$, so $ba = ab$.

\begin{lemma}[Normality-based sufficient conditions for well-defined mirror codes] \label{lemma:normality_sufficient_AB_mirror_codes}
    Let $G$ be a group and $A, B \subseteq G$.
    If $A$ and $B$ are both normal subsets of $G$, then $(G, A, B)$ forms a valid mirror code (and the symmetric and asymmetric formulations coincide).
\end{lemma}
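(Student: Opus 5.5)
Normality of $B$ gives $Bg^{-1} = g^{-1}B$ for every $g \in G$, so the symmetric stabilizers $\mathbf{Z}(Ag)\mathbf{X}(Bg^{-1})$ and the asymmetric stabilizers $\mathbf{Z}(Ag)\mathbf{X}(g^{-1}B)$ are literally the same operators. The two formulations therefore coincide, and it suffices to show that the symmetric code is well-defined. By \Cref{prop:mirror_code_well_defined}, this amounts to showing that the kernel $K_S(g,h) = |Agh \cap B| \bmod 2$ is symmetric in $g$ and $h$.

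As in \Cref{lemma:center_sufficient_AB_mirror_codes}, the plan is to prove the stronger statement $|Agh \cap B| = |Ahg \cap B|$ as actual cardinalities, using a conjugation bijection. Since $hg = g^{-1}(gh)g$, we have $Ahg = A g^{-1}(gh) g$. Normality of $A$ gives $Ag^{-1} = g^{-1}A$, so
\begin{align}
    Ahg = g^{-1} A g h g = g^{-1}(Agh)g .
\end{align}
Normality of $B$ gives $g^{-1}Bg = B$. Conjugation $x \mapsto g^{-1}xg$ is a bijection of $G$, so it maps $Agh \cap B$ bijectively onto $g^{-1}(Agh)g \cap g^{-1}Bg = Ahg \cap B$. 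Hence $K_S(g,h) = K_S(h,g)$, and the symmetric code is well-defined.

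The asymmetric code is then well-defined as well, since its stabilizers are identical. As a sanity check one could also verify $K_A$ directly: $gAh = Agh$ by normality of $A$, so $K_A = K_S$.

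There is no serious obstacle here. The only point needing care is the bookkeeping in rewriting $Ahg$ as a conjugate of $Agh$. One must use normality of $A$ to move $g^{-1}$ past $A$, and normality of $B$ to fix $B$ under conjugation; each hypothesis is used exactly once.
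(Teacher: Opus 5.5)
Your proof is correct and follows essentially the paper's route: both prove the stronger claim that the two anticommutation counts are equal as cardinalities, via an explicit bijection. After translating by $h$ and $g^{-1}$, your conjugation $x \mapsto g^{-1}xg$ from $Agh \cap B$ to $Ahg \cap B$ is exactly the paper's map $v \mapsto g^{-1}vh$ from $Ag \cap Bh^{-1}$ to $Ah \cap Bg^{-1}$; your set-level packaging through $K_S$ is cleaner, avoiding the paper's detour through normality of $A^{-1}$ and its pair of injections.
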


\begin{proof}
    If $B$ is a normal subset then $g^{-1} B = B g^{-1}$ and therefore the symmetric and asymmetric formulations are identical, and we henceforth consider the symmetric formulation only.
    We note that $A$ is a normal subset if and only if $A^{-1}$ is: by normality, there is a permutation $\pi_g \,:\, A \to A$ such that
    $g a = \pi_g(a) g$.
    Hence, $g = \pi_g(a) g a^{-1}$ and $\pi_g(a)^{-1} g = g a^{-1}$.
    That is, the permutation $\cdot^{-1} \circ \pi_g \circ \cdot^{-1}$ witnesses the normality of $A^{-1}$.
    Fix stabilizers $\mathbf{S}(g)$ and $\mathbf{S}(h)$; we again give a bijection between $Ag \cap Bh^{-1}$ and $Ah \cap Bg^{-1}$.
    Let $v \in Ag \cap Bh^{-1}$, so $v = ag = b h^{-1}$ for some $a \in A, b \in B$.
    Then $h = g^{-1} a^{-1} b$.
    By normality, there exists permutations $\pi_g \,:\, A \to A$ and $\s_g \,:\, B \to B$ such that \begin{align}
        h = \pi_g(a^{-1}) \sigma_g(b) g^{-1} 
    \end{align}
    and therefore $\pi_g(a^{-1})^{-1} h = \sigma_g(b) g^{-1}$.
    Thus, the maps $(\pi_g, \s_g)$ give an injection from $Ag \cap Bh^{-1}$ to $Ah \cap Bg^{-1}$.
    An analogous injection may be derived in the reverse direction to complete the proof.
\end{proof}

\subsection{Gauge symmetries}

While any choice of $(G, A, B)$ yields a valid mirror code (so long as the stabilizers commute), there are many choices of $(G, A, B)$ which yield essentially the same code.
We say that two codes are equivalent if their stabilizer tableaux can be transformed to each other via very simple operations.
Typically, such operations are restricted to permutations of physical qubits and local Clifford operations, as such maps preserve all of the memory properties of the code.
(We do \emph{not} consider operations that pick a new basis for the stabilizer subgroup, as choosing a good basis plays a significant role in the code's practical performance and is often computationally hard, e.g. finding a short basis.)
Informally, a gauge symmetry of a mirror code is a map from $(G, A, B)$ to some $(G', A', B')$, such that the two parameterizations yield the same mirror code stabilizers up to the aforementioned simple operations.
We here give a classification of these gauge symmetries of mirror codes.

The set of gauge symmetries differ between symmetric and asymmetric mirror codes.
For example, consider the qubit permutation $q(h) = gh$ for some fixed $g \in G$.
The symmetric and asymmetric stabilizers $\mathbf{S}(h)$ respectively map to \begin{align}
    \mathbf{Z}(gAh) \mathbf{X}(g B h^{-1}) ,\quad \mathbf{Z}(g Ah) \mathbf{X}(g h^{-1} B) .
\end{align}
In the symmetric case, this permutation corresponds to $(G, A, B) \to (G, gA, gB)$, whereas there is no analogous correspondence in the asymmetric case because $g$ may not commute with $h^{-1}$.
Likewise, we may always re-label the stabilizers $\mathbf{S}(h) \mapsto \mathbf{S}(gh)$.
This transformation maps the stabilizers in each case to \begin{align}
    \mathbf{Z}(Agh) \mathbf{X}(Bh^{-1} g^{-1}) ,\quad \mathbf{Z}(Agh) \mathbf{X}(h^{-1} g^{-1} B) ,
\end{align}
respectively for symmetric and asymmetric codes.
In the asymmetric case, this transformation proves that $(G, A, B)$ and $(G, Ag, g^{-1} B)$ produce equivalent codes for any $g \in G$.
However, the symmetric case has no corresponding equivalence, since $g$ may not commute with $h$. 

\begin{definition}[Permutation gauge symmetries] \label{def:permutation_gauge_symmetries}
    A map $f \,:\, (G, A, B) \mapsto (G', A', B')$ is a permutation gauge symmetry if for all $G$ there exists a permutation $\pi$ of physical qubits such that for all valid $A, B \subseteq G$ (a)symmetric mirror codes, the $f(G, A, B) = (G', A', B')$ (a)symmetric mirror code is equivalent to the $(G, A, B)$ (a)symmetric mirror code with qubits permuted by $\pi$.
    That is, the two stabilizer tableaux are identical as sets.
\end{definition}

Before we classify permutation gauge symmetries, we prove an instrumental lemma about group actions, which may be of independent interest.
In what follows, let $\operatorname{Orb}_Y(A) := \set{y(A) \,|\, y \in Y}$ be the \emph{orbit} of $A$ under the action of $Y$.
Roughly speaking, this lemma shows that the only group action $Y \,:\, \widetilde{G} \times G \to G$ which has the same orbit as the action of right-multiplication $R \,:\, G \times G \to G$, $R(g, x) = xg$ on every \emph{subset} of a group $G$ is $R$ itself.
In this sense the orbit set exactly pins down the underlying group action.

\begin{lemma}[Orbit pinning with right multipliers] \label{lemma:two_set_three_set}
    Let $G, \widetilde{G}$ be finite groups.
    Define the right-multiplier $R_g \,:\, G \to G$ by $R_g(x) = xg$ and the corresponding group action $R \,:\, G \times G \to G$ by $R(g, x) = R_g(x)$.
    Let $Y \,:\, \widetilde{G} \times G \to G$ be any other group action with the property that for all $C \subseteq G$, \begin{align} \label{eq:two_set_three_set_assumption}
        \operatorname{Orb}_Y(C) = \operatorname{Orb}_R(C) .
    \end{align}
    Then $Y = R$.
    The same claim holds if instead $R$ is replaced by the group action of left multiplication $L(g, x) = gx$.
\end{lemma}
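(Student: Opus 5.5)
The plan is to read the conclusion $Y = R$ as an equality of the induced sets of permutations of $G$. Write $P := \set{Y_t \,|\, t \in \widetilde{G}} \subseteq \operatorname{Sym}(G)$ and $\mathcal{R} := \set{R_g \,|\, g \in G}$. Since $Y$ is a group action, $P$ is a subgroup of $\operatorname{Sym}(G)$, and the claim is $P = \mathcal{R}$.

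I would split this into the inclusion $P \subseteq \mathcal{R}$ and a counting argument for the reverse inclusion. The counting step is easy. Take $C = \set{e}$. Its $R$-orbit is $\set{\set{g} \,|\, g \in G}$, which has $|G|$ elements, so by \eqref{eq:two_set_three_set_assumption} the $Y$-orbit of $\set{e}$ also has $|G|$ elements. Hence $P$ contains at least $|G|$ distinct permutations. Once $P \subseteq \mathcal{R}$ and $|\mathcal{R}| = |G|$, this forces $P = \mathcal{R}$.

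For $P \subseteq \mathcal{R}$, fix $t$ and define $\gamma(x) := x^{-1} Y_t(x)$, so that $Y_t(x) = x\gamma(x)$. The goal is to show $\gamma$ is constant. Applying \eqref{eq:two_set_three_set_assumption} to a two-element set $C = \set{x, y}$ shows there is some $g$ with $Y_t(\set{x,y}) = \set{xg, yg}$. Hence either
\begin{align}
    \gamma(x) = \gamma(y) = g \quad \text{(``aligned'')}, \qquad \text{or} \qquad Y_t(x) = yg,\; Y_t(y) = xg \quad \text{(``crossed'')}.
\end{align}
The crossed case says $Y_t(x)\,Y_t(y)^{-1} = y x^{-1}$, whereas the aligned case says $Y_t(x)\,Y_t(y)^{-1} = x y^{-1}$. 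So every pair satisfies one of these two quotient relations.

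Next I would apply \eqref{eq:two_set_three_set_assumption} to three-element sets $\set{x, y, z}$. This gives $Y_t|_{\set{x,y,z}} = R_h \circ \sigma$ for a single $h$ and some permutation $\sigma$ of the triple. Comparing with the pairwise data, $\sigma$ must induce on each of the three sub-pairs either the identity or the swap, and every such sub-pair must use the same multiplier $h$. Running over the cases $\sigma$ equal to a transposition or a $3$-cycle, and using that $x, y, z$ are distinct, should show that crossed pairs cannot coexist with a third point unless the quotient relations collapse, i.e. $y x^{-1} = x y^{-1}$ together with compatible constraints involving $z$. From this I would conclude that all pairs are aligned, so $\gamma$ is constant and $Y_t \in \mathcal{R}$.

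The degenerate cases $|G| \leq 2$ are checked by hand: for $|G| = 2$ the swap is itself a right multiplication.

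This triple case analysis is the step I expect to be the main obstacle. If the triple constraints alone do not suffice, I would additionally exploit that $P$ is a group. Both $Y_t$ and $Y_t^{-1}$, as well as the products $Y_t Y_s$, satisfy the same pair and triple constraints. A crossed pair $\set{x, y}$ for $Y_t$ can then be transported by elements of $P$ to produce a configuration that some element of $P$ maps in a way violating the pair condition.

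The left-multiplication version follows by conjugating with the inversion bijection $x \mapsto x^{-1}$. This bijection sends subsets to subsets and turns $L$-orbits into $R$-orbits, so the result for $R$ transfers to $L$.
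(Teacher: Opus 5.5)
There is a genuine gap, and it is in the step you flagged yourself. You need each individual $Y_t$ to be a right multiplication ($P\subseteq\mathcal{R}$), and you try to get this from the orbit hypothesis alone. For $G=\mathbb{Z}_n$ with $n\in\{3,4,5\}$ this is false.

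The reflection $x\mapsto -x$ sends every subset of $\mathbb{Z}_n$ to a translate of itself. For $2$-sets, $-\{a,b\}=\{a,b\}-(a+b)$. For $n\le 5$, every subset of size at least $3$ is the complement of a set of size at most $2$, and complements of translates are translates of complements. Hence $\widetilde G=D_n$ acting by $x\mapsto \pm x+c$ (for $n=3$ this is just $S_3$ on three points) satisfies $\operatorname{Orb}_Y(C)=\operatorname{Orb}_R(C)$ for every $C$. Yet $P$ has $2n$ elements, so $Y\neq R$.

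Since $S_3$, $D_4$, $D_5$ are groups, your fallback of exploiting the group structure of $P$ cannot help, and the lemma as literally stated fails for these $G$. The paper's proof gets through only because its cyclic case uses $|Y|=n$: once to make the stabilizer of $\{e,r\}$ trivial, and again to conclude $Y=\langle R_r\rangle$. This holds in the application, where $Y=\{\pi\circ R_g\circ\pi^{-1}\}$ is regular. You must add that hypothesis and actually use it. With $|P|=|G|$, the element $y$ with $y(e)=r$ cannot fix $\{e,r\}$, so $y(r)=r^2$, and injectivity propagates this to $y=R_r$.

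Even where the conclusion is true, your triple step does not work as written. It is not the case that $\sigma$ acts on each sub-pair as the identity or the swap. For $x\mapsto -x$ on $\mathbb{Z}_5$, the triple $\{0,1,2\}$ goes to $\{0,1,2\}+3$ with $\sigma=(0\,2)$, which carries the sub-pair $\{0,1\}$ to $\{1,2\}$. Also, if $\sigma$ did preserve all three sub-pairs it would already be the identity, so your subsequent case split over transpositions and $3$-cycles is not coherent.

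What the pair constraint actually yields is that $Y_t$ preserves each class $E_r=\{\{x,rx\}\,|\,x\in G\}$. $\sigma$ is forced to be trivial only on triples $\{x,rx,sx\}$ whose edge classes $E_r,E_s,E_{sr^{-1}}$ are pairwise distinct, i.e.\ $s\notin\{e,r,r^{-1},r^2\}$ and $s^2\neq r$; this is the paper's key device. Such triples give $\gamma(rx)=\gamma(x)$. You then need a generating set of such $r$ and conclude that $\gamma$ is constant by connectivity, not that ``all pairs are aligned'' (for example, $r=-1$ in $Q_8$ admits no valid $s$).

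Suitable generators exist for non-cyclic $G$ and for $\mathbb{Z}_n$ with $n\ge 6$. They do not exist for $n\in\{3,4,5\}$, which is exactly where the regularity hypothesis has to take over. Your counting step and the inversion reduction for the left-multiplication case are fine.
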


\begin{proof}
Let $y_g(x) := Y(g, x)$.
We first study the action of $Y$ on two-sets in $G$ of the form $\set{x, sx}$.
(Such two-sets may be interpreted as undirected edges in the Cayley graph of $G$.)
Namely, let \begin{align}
    E_{s} := \operatorname{Orb}_{R}(\set{e, s}) = \set{\set{x, sx} \,|\, x \in G} .
\end{align}
We claim that for any $y_g \in Y$, \begin{align} \label{eq:preserving_two-sets}
    y_g(E_s) = E_s .
\end{align}
The proof follows by the transitivity property of operators in $Y$.
In particular, since $E_{s} = \operatorname{Orb}_{R}(\set{e, s})$ and $\operatorname{Orb}_{R}(\set{e, s}) = \operatorname{Orb}_{Y}(\set{e, s})$, for any $\mathcal{E} \in E_s$ there exists some $y_{g^*} \in Y$ such that $y_{g^*}(\set{e, s}) = \CE$.
Hence, \begin{align}
    y_g(\CE) = y_g(y_{g^*}(\set{e, s})) \in \operatorname{Orb}_{Y}(\set{e, s}) = E_s .
\end{align}
Thus, $y_g(E_s) \subseteq E_s$.
The same holds if we replace $y_g$ with $y_g^{-1} = y_{g^{-1}}$ in the above equation, so that $y_g^{-1}(E_s) \subseteq E_s$.
These two results imply Eqn.~\eqref{eq:preserving_two-sets}.

We next study three-sets of the form $\set{x, rx, sx}$, i.e. \begin{align}
    T_{rs} := \operatorname{Orb}_{R'}(\set{e, r, s}) = \set{\set{x, rx, sx} \,|\, x \in G'} .
\end{align}
Each such three-set is an undirected triangle in the Cayley graph of $G$.
For the moment, fix $r, s \in G$ and $t_x := \set{x, rx, sx} \in T_{rs}$.
Assume that $r, s$ are such that $E_r, E_s, E_{s r^{-1}}$ are all distinct.
Since each $E_z$ is an orbit, this implies that $E_r, E_s, E_{s r^{-1}}$ are pairwise disjoint.
Under these circumstances, we may specify a triangle $t_x$ uniquely with three edge two-sets, i.e. $\set{x, rx}, \set{x, sx}, \set{rx, sx}$.
Each point is contained in exactly two edges.
Moreover, $\set{x, rx} \in E_r$, $\set{x, sx} \in E_s$, and $\set{rx, sx} \in E_{s r^{-1}}$.
In this undirected graph specification, $x$ is the unique point contained in an edge in $E_r$ and an edge in $E_s$.
By Eqn.~\eqref{eq:two_set_three_set_assumption}, \begin{align}
    \set{y_g(t_x) \,|\, y_g \in Y} = \operatorname{Orb}_Y(t_x) = \operatorname{Orb}_{R}(t_x) = \set{t_z \,|\, z \in G'}.
\end{align}
Hence, for any $x \in G$ and $y_g \in Y$, $y_g(t_x) \in \set{t_z \,|\, z \in G}$, so $y_g(t_x) = t_z$ for some $z$.
We claim in fact that $y_g(t_x) = t_{y_g(x)}$, i.e. $z = y_g(x)$.
This follows from the uniqueness and preservation properties derived above.
More precisely, within $t_x$, $x$ is the unique point contained in an edge in $E_r$ as well as $E_s$; the same holds true for $z$ in $t_z$.
Moreover, $y_g(E_w) = E_w$ for all $w$ by Eqn.~\eqref{eq:preserving_two-sets}.
Note that $y_g(t_x) = \set{y_g(x), y_g(rx), y_g(sx)}$ and $t_z = \set{z, rz, sz}$.
By Eqn.~\eqref{eq:preserving_two-sets}, $\set{y_g(x), y_g(rx)} = \set{z, rz} \in E_r$ and $\set{y_g(x), y_g(sx)} = \set{z, sz} \in E_s$.
Since $z$ is the only point within $t_z$ in both $E_r$ and $E_s$, we conclude that $z = y_g(x)$.
Similarly, using the fact that $rx$ ($sx$) is the unique point contained in both the $E_r$ ($E_s$) and $E_{s r^{-1}}$ edges of $t_x$, we may conclude that $y_g(rx) = rz$ ($y_g(sx) = sz$).
In sum then, $y_g(rx) = r y_g(x)$ and $y_g(sx) = s y_g(x)$.

We now address the condition we assumed in order to derive these relations, namely that $E_r, E_s, E_{s r^{-1}}$ are distinct two-sets.
This condition is equivalent to the requirements $r \notin \set{e, s, s^{-1}, s^{2}}$ and $s \notin \set{e, r, r^{-1}, r^2}$.
For any non-cyclic group, there are at least two generators $r, s$ which by definition satisfy these relations.
More generally, for any non-cyclic group $G$ we may always take a set of independent generators $r_1, \dots, r_m$ and pair each $r_i$ with some $s_i$ such that $r \notin \langle s_i \rangle$ and $s_i \notin \langle r_i \rangle$ which implies the desired requirement.
Decomposing any $r = u_{i_1} \cdots u_{i_\ell} \in G$ in this manner, we have by induction $y_g(rx) = r y_g(x)$ for all $r \in G$.
We then have $y_g(r) = r y_g(e)$ for all $r \in G$.
That is, $y_g$ is simply a right-multiplication by some element $y_g(e) \in G$ determined by $g$, i.e. $Y = R$.

All of these arguments hold for non-cyclic groups.
We next give a modified argument, requiring only two-sets, which hold for cyclic groups.
Let $G = \langle r \rangle$.
Since $Y$ is a group action, there is a unique $g^* \in G$ such that $y_{g^*}(e) = r$.
Then $y_{g^*}(\set{e, r}) = \set{r, y_g(r)}$, and by Eqn.~\eqref{eq:orbit_equivalence}, $y_g(\set{e, r}) = \set{r^k, r^{k+1}}$ for some integer $k \in [0, n-1]$.
Since $\set{r^k, r^{k+1}}$ must contain $r$, the only choices for this set are $\set{e, r}$ and $\set{r, r^2}$.
Note that for $n \geq 3$, these sets are distinct.
If $\set{r^k, r^{k+1}} = \set{e, r}$, then $y_{g^*}(r) = e$.
This is, however, impossible, because this would imply that $y_{g^*}(\set{e, r}) = \set{e, r}$.
The construction of $Y$ implies that any $y_g$ fixes $\set{e, r}$ if and only if $g = e$ (for $n \geq 3$ where the orbit has $n$ distinct elements), and if $g^* = e$ here then $y_{g^*}(e) \neq r$.
The only remaining possibility is that $\set{r^k, r^{k+1}} = \set{r, r^2}$, i.e. $y_{g^*}(r) = r^2$.
Repeating this argument inductively yields $y_{g^*}(r^k) = r^{k+1}$, i.e. $y_{g^*} = R_{r}$.
Since $|Y| = n$ and $y_{g^*}$ has order $n$, it follows that $Y = \langle R_r \rangle = R'$.
Finally, for $n = 1, 2$ there is no distinction between $e$ and $r^2$ so the same claim holds.
\end{proof}

\begin{theorem}[Permutation gauge symmetries are affine maps] \label{thm:permutation_gauge_symmetry}
A map $f \,:\, (G, A, B) \mapsto (G', A', B')$ is a permutation gauge symmetry if and only if: $G' = \varphi(G)$ where $\varphi \,:\, G \to G'$ is an isomorphism, $A' = u \varphi(A) v$, and \begin{itemize}
    \item for a symmetric mirror code, $B' = u \varphi(B) v^{-1}$ where $u \in G'$ is arbitrary and $v \in Z(G')$, and
    \item for an asymmetric mirror code, $B' = v^{-1} \varphi(B) u$ where $u \in Z(G')$ and $v \in G'$ is arbitrary.
\end{itemize}
In both cases, the permutation is of the form $\pi(g) = u \varphi(g)$.
\end{theorem}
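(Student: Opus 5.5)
The proof splits into two directions. Sufficiency is a direct computation. Necessity uses \Cref{lemma:two_set_three_set} to force the permutation to be affine, and then recovers the constraints on $u$ and $v$ from how the $\mathbf{Z}$ and $\mathbf{X}$ parts transform.

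\textbf{Sufficiency.} Take the symmetric case first, with $\varphi$ an isomorphism and $\pi(g)=u\varphi(g)$. The stabilizer $\mathbf{S}(h)=\mathbf{Z}(Ah)\mathbf{X}(Bh^{-1})$ is sent to
\[
\mathbf{Z}(u\varphi(A)\varphi(h))\,\mathbf{X}(u\varphi(B)\varphi(h)^{-1}).
\]
Set $h'=v^{-1}\varphi(h)$, so that $\varphi(h)=vh'$. The $\mathbf{Z}$ part becomes $\mathbf{Z}(u\varphi(A)vh')=\mathbf{Z}(A'h')$. The $\mathbf{X}$ part becomes $\mathbf{X}(u\varphi(B)h'^{-1}v^{-1})$, and since $v$ is central this equals $\mathbf{X}(u\varphi(B)v^{-1}h'^{-1})=\mathbf{X}(B'h'^{-1})$. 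As $h$ ranges over $G$, $h'$ ranges over $G'$, so the two tableaux agree as sets. The asymmetric case is analogous. We use the relabeling $h'=v^{-1}\varphi(h)$ as before, together with $\mathbf{X}(u\varphi(h)^{-1}\varphi(B))=\mathbf{X}(\varphi(h)^{-1}u\varphi(B))$ for central $u$. This turns the $\mathbf{X}$ part into $\mathbf{X}(h'^{-1}v^{-1}u\varphi(B))$, which matches $B'=v^{-1}\varphi(B)u$ up to moving the central $u$. The well-definedness of both codes is automatic, since a permutation preserves commutation.

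\textbf{Necessity.} Suppose $f$ is a permutation gauge symmetry with witness $\pi:G\to G'$. The map $\pi$ is independent of $A$ and $B$, and $|G|=|G'|$. I would choose $B=\emptyset$, which always gives a well-defined code. Then the tableau is $\{\mathbf{Z}(Ah)\}_h$, and the hypothesis says $\{\pi(Ah)\}_h=\{A'h'\}_{h'}$ for every $A$.

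Transport the structure of $G'$ to $G$ via $\pi$. The family $\{\pi^{-1}(A'h')\}$ is then an orbit of $A'':=\pi^{-1}(A')$ under the action $Y$ of $G'$ on $G$ given by $y_{g'}(x)=\pi^{-1}(\pi(x)g')$. The obstacle here is that the hypothesis is given set by set, with $A'$ depending on $A$. I would handle it as follows. Since every $A$ has some $A''$ with $\operatorname{Orb}_R(A)=\operatorname{Orb}_Y(A'')$, and the orbits of each action partition the power set, the two actions have the same orbit partition on subsets. Applying this to $A''$ and arguing with uniqueness of orbits gives $\operatorname{Orb}_Y(C)=\operatorname{Orb}_R(C)$ for all $C$. \Cref{lemma:two_set_three_set} then gives $Y=R$, with the natural adaptation that $Y$ is an action of a possibly different group. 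So each $y_{g'}$ is a right multiplication by some $\psi(g')\in G$.

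The identity $\pi(xg)=\pi(x)\psi^{-1}(g)$ then shows that $\varphi(x):=\pi(e)^{-1}\pi(x)$ is a homomorphism, and $\varphi$ is a bijection since $\pi$ is. Hence $\varphi$ is an isomorphism and $\pi(g)=u\varphi(g)$ with $u=\pi(e)$.

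\textbf{Recovering $v$ and its constraint.} With $\pi$ known, matching the $\mathbf{Z}$ parts gives $u\varphi(A)\varphi(h)=A'h'$. Taking $A=\{e\}$ defines $v$ through $h'=v^{-1}\varphi(h)$, so that $A'=u\varphi(A)v$ in general. Matching the $\mathbf{X}$ parts for a generic $B$, say $B=\{e\}$ with $A=\{e\}$ (valid, since this is a single-point code and the commutation kernel is symmetric), gives $\varphi(h)^{-1}v^{-1}=v^{-1}\varphi(h)^{-1}$ for all $h$. This forces $v\in Z(G')$ in the symmetric case, and similarly $u\in Z(G')$ in the asymmetric case. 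The main difficulty I expect is the orbit-transfer step and the fact that \Cref{lemma:two_set_three_set} is stated for a single group; both need to be handled carefully.
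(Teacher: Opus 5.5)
Your derivation of $\pi(g)=u\varphi(g)$ is the paper's route: set $B=\emptyset$, conjugate right multiplication by $\pi$, show the two orbit partitions on subsets coincide, and invoke \Cref{lemma:two_set_three_set}. Conjugating in the opposite direction makes no difference. No adaptation of the lemma is needed, since it already allows an acting group $\widetilde G\neq G$. You should say why $\psi$ is a bijection: it is injective because right multiplication on $G'$ is faithful, and $|G|=|G'|$.

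The gap is in recovering $v$. Neither $f$ nor $\pi$ fixes the matching of stabilizers $h\mapsto h'$. That matching may change from one instance $(A,B)$ to the next, which is exactly the caveat the paper raises. So the $v$ you read off from $(\{e\},\emptyset)$ need not be the $v$ of $(\{e\},\{e\})$, or of a general $(A,B)$. The step ``so that $A'=u\varphi(A)v$ in general'' therefore does not follow, and centrality shown in the single instance $(\{e\},\{e\})$ does not transfer to any other instance. In fact centrality cannot hold unconditionally: for $B=\emptyset$, $A'=u\varphi(A)v$ gives the same tableau for every $v\in G'$.

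The repair is to argue per instance.
\begin{itemize}
\item Taking $h'=e$ shows that $(A',B')$ itself lies in the permuted tableau. Hence $(A',B')=(u\varphi(A)v,\,u\varphi(B)v^{-1})$ for some $v=v_{A,B}$.
\item If $u\varphi(A)$ has trivial right stabilizer, the $\mathbf{Z}$ parts force the matching $h'\mapsto vh'$.
\item Comparing $\mathbf{X}$ parts then gives $u\varphi(B)=u\varphi(B)\,h'^{-1}v^{-1}h'v$ for all $h'$. If $u\varphi(B)$ also has trivial right stabilizer, this gives $v\in Z(G')$.
\end{itemize}
This is the ``generic'' sense in which the paper, via its stabilizer-relabeling bijection, asserts centrality.

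Your single-instance test is legitimate for the asymmetric constraint $u\in Z(G')$, because $u=\pi(e)$ does not depend on the instance. The mixed instance $A=B=\{e\}$ is genuinely needed there. Running the orbit argument separately on $B=\emptyset$ and on $A=\emptyset$ with left multiplication, as the paper does, only yields $\pi(x)=u\varphi(x)=\chi(x)u$ with $\chi=u\varphi(\cdot)u^{-1}$. That is consistent with every $u$.
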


\begin{proof}
We begin with symmetric mirror codes.
For sufficiency, note that 
\begin{align}
    \set{\vec Z(u \varphi(A) v h) \vec X(u \varphi(B) v^{-1} h^{-1})}_{h \in G'} & = \set{\vec Z(u \varphi(A) h) \vec X(u \varphi(B) h^{-1})}_{h \in G'} \\
    & = \set{\vec Z(u \varphi(A) \varphi(g)) \vec X(u \varphi(B) \varphi(g)^{-1})}_{g \in G} \\
    & = \set{\vec Z(\pi(Ag)) \vec X(\pi(Bg^{-1}))}_{g \in G}
\end{align}
The first equality relabels the stabilizer indices $h \mapsto vh$ (using the fact that $v \in Z(G)$.
The second equality defines $g := \varphi^{-1}(h)$, and re-indexes the set by elements in $G$.
The final equality implements the assumed form of $\pi$.

For necessity, note that the permutation must hold for all valid $A, B \subseteq G$, so we may consider them separately, e.g. by first setting $B = \emptyset$.
By assumption, \begin{align}
    \set{\pi(Ag) \,|\, g \in G} = \set{A' g' \,|\, g' \in G'} .
\end{align}
We aim to show that $\pi$ is essentially multiplication by some group element, up to an isomorphism.
Intuitively, this proof is relatively straightforward if across all $A, B$, the permutation of \emph{stabilizers} in the defining set of a mirror code were fixed.
However, even though we are fixing the permutation of physical qubits across $A, B$, we require only that the stabilizers are equal as sets, so from instance to instance the permutation of stabilizers could in principle change.
We show that in reality, the stabilizers must also permute the same way by comparing the symmetry to that of a simple multiplication operation.
To that end, define the right multiplier map \begin{align}
    R_g \,: G \to G ,\quad R_g(x) = xg .
\end{align}
We then define an induced map on $G'$, given by \begin{align}
    y_g \,:\, G' \to G' ,\quad y_g = \pi \circ R_g \circ \pi^{-1} .
\end{align}
We also define a right multiplier map on $G'$, given by $R'_{g'}(x) = x g'$ for $g' \in G'$.
Note that $R^{(')}_h \circ R^{(')}_g = R^{(')}_{gh}$, and thus $y_g$ satisfies an analogous property given by \begin{align}
    \label{eq:yg_composition_property}
    y_{h} \circ y_{g} = \pi \circ R_h \circ R_g \circ \pi^{-1} = \pi \circ R_{gh} \circ \pi^{-1} = y_{gh} .
\end{align}
Let $Y := \set{y_g \,|\, g \in G}$ be the set of such induced operators.
Then $Y$ is a group action of $G$ on $G'$ and is therefore itself a group.
For any subset $C \subseteq G'$, we define the orbit of $C$ under $Y$ as \begin{align}
    \operatorname{Orb}_{Y}(C) := \set{y_g(C) \,|\, g \in G} .
\end{align}
Importantly, for $C = \pi(A)$, $\operatorname{Orb}_{Y}(C) = \set{\pi(Ag) \,|\, g \in G}$.
We can compare this orbit to that under direction right multiplication $R'$ in $G'$, given by $\operatorname{Orb}_{R'}(C) := \set{R'_{g'}(C) \,|\, g' \in G'}$.
Specifically, we claim that for any $C' \subseteq G'$, \begin{align} \label{eq:orbit_equivalence}
    \operatorname{Orb}_{Y}(C) = \operatorname{Orb}_{R'}(C) .
\end{align}
To show this, let $A := \pi^{-1}(C)$ so that $\operatorname{Orb}_{Y}(C) = \set{\pi(Ag) \,|\, g \in G}$.
At the same time, by assumption \begin{align}
    \operatorname{Orb}_{Y}(C) = \set{\pi(Ag) \,|\, g \in G} = \set{A' g' \,|\, g' \in G'} = \operatorname{Orb}_{R'}(A') 
\end{align}
for some $A' \subseteq G$.
Now, since $C \in \operatorname{Orb}_{Y}(C)$, the above implies that $C \in \operatorname{Orb}_{R'}(A')$.
But orbits partition the space on which they act, which proves Eqn.~\eqref{eq:orbit_equivalence}.

Now, by \Cref{lemma:two_set_three_set}, $Y = R'$.
That is, each $y_g = \pi \circ R_g \circ \pi^{-1}$ is a right-multiplication in $G'$.
Let $\varphi \,:\, G \to G'$ encode this correspondence, namely \begin{align}
    y_g = R'_{\varphi(g)} .
\end{align}
By construction, $\varphi$ is a bijection.
Moreover, \begin{align}
    y_h \circ y_g = R'_{\varphi(h)} \circ R'_{\varphi(g)} = R'_{\varphi(g) \psi(h)}
\end{align}
by the composition property discussed above.
At the same time, by Eqn.~\eqref{eq:yg_composition_property} \begin{align}
    y_h \circ y_g & = y_{gh} = R'_{\varphi(gh)} .
\end{align}
$R'_{g} = R'_{h}$ if and only if $g = h$, and therefore the above equations imply that $\varphi(gh) = \varphi(g) \varphi(h)$, i.e. $\varphi$ is an isomorphism.
We may write $\pi \circ R_g \circ \pi^{-1} = R'_{\varphi(g)}$, so that $\pi = R'_{\varphi(g)} \circ \pi \circ R_g^{-1}$ for all $g \in G$.
Pointwise, for all $g \in G$ and $x \in G'$, \begin{align}
    \pi(x) = \pi(x g^{-1}) \varphi(g) .
\end{align}
Taking $g \mapsto g^{-1}$ and rearranging, $\pi(xg) = \pi(x) \varphi(g)$.
Finally, setting $x = e$ and defining $u := \pi(e)$,
\begin{align}
    \pi(g) = \pi(e) \varphi(g) = u \varphi(g) .
\end{align}
The only other degree of freedom in the permutation gauge symmetry beyond qubit permutation is a permutation of the stabilizer generators.
That is, a choice of bijection $\psi \,:\, G \to G$, so that \begin{align}
    \set{\vec Z(Ag) \vec X(B g^{-1})}_{g \in G} = \set{\vec Z(A \psi(g)) \vec X(B \psi(g)^{-1})}_{g \in G} .
\end{align}
To be a symmetry on $(G, A, B)$, this re-labeling must be expressible as a map on $A, B$ for any $A, B$.
That is, $A \psi(g) = A' g$ for some $A'$, and similarly for $B$.
This is only possible generically if $\psi(g) = hg$ for some $h \in G$, so that $A' = Ah$.
However, it then follows that $B \psi(g) = B g^{-1} h^{-1} = B' g^{-1}$ for some $B'$ which does not depend on $g$.
This holds only if $g^{-1} h^{-1} = f g^{-1}$ for some fixed $f$ for all $g$.
But for $g_1, g_2 \in G$, we would have $f  = g_1^{-1} h^{-1} g_1 = g_2^{-1} h^{-1} g_2$, implying that \begin{align}
    (g_2 g_1^{-1}) h^{-1} (g_2 g_1^{-1})^{-1} = h^{-1} .
\end{align}
Hence, $h^{-1}$ commutes with all elements in $G$, so $h^{-1} \in Z(G)$ (and thus $h \in Z(G)$).
In this case, $A' = Ah$ and $B' = B h^{-1}$.
Composing a stabilizer relabeling with the permutation, we obtain a map \begin{align}
    \set{\vec Z(\pi(Ahg)) \vec X(\pi(B h^{-1} g^{-1}))}_{g \in G} & = \set{\vec Z(u \varphi(A) \varphi(h) \varphi(g)) \vec X(u \varphi(B) \varphi(h)^{-1} \varphi(g)^{-1}) }_{g \in G} \\
    & = \set{\vec Z(u \varphi(A) v g') \vec X(u \varphi(B) v^{-1} (g')^{-1}) }_{g' \in G'} 
\end{align}
where $v := \varphi(h) \in Z(G')$.
We can relabel the stabilizers after the permutation too, but this just modifies the choice of $v$.
Consequently, the most general map is $A' = u \varphi(A) v$ and $B' = u \varphi(B) v^{-1}$, such that $\varphi$ is an isomorphism from $G$ to $G'$, $u \in G'$ is arbitrary, and $v \in Z(G')$ is in the center of $G'$.
This completes the proof in the symmetric case.

We next prove the asymmetric case.
Now, $u \in Z(G)$ and $v$ is arbitrary.
For sufficiency, \begin{align}
    \set{\vec Z(u \varphi(A) v h) \vec X(h^{-1} v^{-1} \varphi(B) u)}_{h \in G'} & = \set{\vec Z(u \varphi(A) v h) \vec X((vh)^{-1} \varphi(B) u)}_{h \in G'} \\
    & = \set{\vec Z(u \varphi(A) h) \vec X(h^{-1} \varphi(B) u)}_{h \in G'} \\
    & = \set{\vec Z(u \varphi(A) h) \vec X(uh^{-1}  \varphi(B))}_{h \in G'} \\
    & = \set{\vec Z(u \varphi(A) \varphi(g)) \vec X(u\varphi(g)^{-1}  \varphi(B))}_{g \in G} \\
    & = \set{\vec Z(\pi(Ag)) \vec X(\pi(g^{-1} B))}_{g \in G} .
\end{align}
The first line is algebra; the next a relabeling of indices $h \mapsto vh$.
The third line uses the centrality of $u$, and the fourth line defines $g := \varphi^{-1}(h)$.
Finally, the last line uses the assumed form of $\pi$.

For necessity, the same proof in the symmetric case for $B = \emptyset$ implies that $\pi(x) = u \varphi(x)$ where $u = \pi(e)$.
Running the same proof on $A = \emptyset$ and using left-multiplication instead of right gives $\pi(x) = \varphi(x) u$.
Hence, $u \varphi(x) = \varphi(x) u$ for all $x \in G$.
Since $\varphi$ is a bijection, this implies that $u \in Z(G')$.
On the other hand, the stabilizer relabeling is a choice of bijection $\psi$ such that \begin{align}
    \set{\vec Z(Ag) \vec X(B g^{-1})}_{g \in G} = \set{\vec Z(A \psi(g)) \vec X(\psi(g)^{-1}) B}_{g \in G} .
\end{align}
As before, this constrains $\psi(g) = hg$ for some $h \in G$, but this time on the $B$ side we obtain $\psi(g)^{-1} B = g^{-1} h^{-1} B$, so we set $B' = h^{-1} B$ and there is no requirement for $h \in Z(G)$.
That is, $h$ may be arbitrary.
This completes the proof in the asymmetric case.
\end{proof}

One may also consider gauge symmetries corresponding to a local Clifford operation rather than a permutation.
Here we will define such maps individually for each $G$, as the structure depends on the properties of $G$ itself.

\begin{definition}[LC gauge symmetries] \label{def:LC_gauge_symmetries}
    A map $f_G \,:\, (G, A, B) \mapsto (G', A', B')$ is a local-Clifford (LC) gauge symmetry if there exists a local Clifford $\vec C = \bigotimes_{i=1}^n \vec C_i \in \Cliff_n$ such that for all valid $A, B \subseteq G$ (a)symmetric mirror codes, the $f_G(G, A, B) = (G', A', B')$ (a)symmetric mirror code stabilizers are equivalent to the $(G, A, B)$ (a)symmetric mirror code stabilizers conjugated by $\vec C$, up to a phase.
\end{definition}

In what follows, let $\D$ denote the symmetric difference of two sets.
For $a \in \Z_2$, we define \begin{align}
    aA := \begin{cases}
        \emptyset & a = 0 , \\
        A & a = 1 .
    \end{cases}
\end{align}

\begin{theorem}[Classification of LC gauge symmetries] \label{thm:LC_gauge_symmetries}
A map $f_G \,:\, (G, A, B) \mapsto (G', A', B')$ is a LC gauge symmetry on a symmetric mirror code, for $G$ having at least one element of order $\geq 3$, if and only if either $\vec C = \vec I^{\otimes n}$ or $\vec C = \vec H^{\otimes n}$.
The maps can respectively be implemented as $(A', B') = (A, B)$ and $(A', B') = (B, A)$, though these choices are not unique due to the freedom of re-labeling stabilizers in a manner discussed in \Cref{thm:permutation_gauge_symmetry}.
(Here we define a Clifford only up to global phase.)
If every element of $G$ has order at most $2$, then $\vec C = \vec C_{\operatorname{loc}}^{\otimes n}$ for any $\vec C_{\operatorname{loc}} \in \Cliff_1$, and we may implement the corresponding transformation via \begin{align} \label{eq:LC_gauge_symmetric_exponent2}
    A' = a A \,\D\, cB ,\quad B' = bA \,\D\, dB ,
\end{align}
for \begin{align} \label{eq:LC_matrix}
    M := \begin{pmatrix}
        a & b \\ c & d
    \end{pmatrix} \in \operatorname{GL}(\F_2, 2) .
\end{align}
On an asymmetric mirror code with group $G$, if $G$ is abelian, then the code's LC gauge symmetries are characterized identically as in the symmetric mirror code case.
If $G$ is non-abelian, then $\vec C = \vec I^{\otimes n}$.
\end{theorem}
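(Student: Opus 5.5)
The plan is to work qubit by qubit. A local Clifford $\vec C = \bigotimes_q \vec C_q$ acts on each qubit $q$ as an element of $\operatorname{GL}(\F_2,2) \cong S_3$ permuting $\{\vec X, \vec Y, \vec Z\}$, up to phase. The conclusion must hold for \emph{all} valid $A, B$, so the first step is to probe with single-element sets.

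Take $A = \{a\}$ and $B = \emptyset$, which is always valid. The stabilizers are then single-qubit $\vec Z_{ag}$ on every qubit, and the image must again be a mirror code. Its stabilizers have the form $\vec Z(A'g)\vec X(B'g^{-1})$, must all have weight one, and must be in bijection with $G$. This forces each $\vec C_q$ to send $\vec Z$ to one Pauli $P_q$, and likewise with $A=\emptyset$, $B=\{b\}$ for $\vec X$. The image data are encoded by $(A',B')$. Single-qubit stabilizers of type $\vec Z$, $\vec X$ or $\vec Y$ correspond respectively to $A'=\{a'\},B'=\emptyset$, to $A'=\emptyset,B'=\{b'\}$, or to $A'=\{a'\}=B'$ at a qubit with $a'g = b'g^{-1}$. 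That last condition can only hold for a single $g$ per qubit pattern unless $g^2$ is constant, i.e.\ unless the relevant elements square trivially.

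The next step is to use the family $A=\{a\}$, $B=\{b\}$ (valid in the symmetric case when $a^{-1}b$ is central, by \Cref{lemma:center_sufficient_AB_mirror_codes}; e.g.\ $b=a$). In this family the $\vec Z$ part is translated by $g$ and the $\vec X$ part by $g^{-1}$. If $\vec C_q$ sends $\vec Z$ to $\vec Z$ at some qubits and to $\vec X$ at others, the image stabilizer set would need a $\vec Z$-support moving as $g$ on part of the group and as $g^{-1}$ on another. If $G$ has an element $s$ of order $\ge 3$, then $s \ne s^{-1}$ distinguishes the two translation directions. Comparing the stabilizers for $g$ and $gs$ should then force $P_q$ to be the same at every qubit.

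The same argument is meant to exclude $\vec Y$ images: a $\vec Y$ requires the $\vec Z$ and $\vec X$ supports to coincide, and these drift apart under $s$. This leaves only the global swap $\vec Z\leftrightarrow\vec X$, i.e.\ $\vec H^{\otimes n}$, or the identity. Conversely, $\vec H^{\otimes n}$ maps $\vec Z(Ag)\vec X(Bg^{-1})$ to $\vec Z(Bg^{-1})\vec X(Ag)$. After relabeling $g\mapsto g^{-1}$ this is the $(B,A)$ mirror code, since $\vec Z(Bg)\vec X(Ag^{-1})$ is exactly of that form.

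When every element has order $\le 2$, $G$ is an elementary abelian $2$-group and $g=g^{-1}$. Then $\vec S(g) = \vec Z(Ag)\vec X(Bg)$, so each qubit $h$ carries a Pauli determined by the pair $(\mathbf 1_A(hg), \mathbf 1_B(hg)) \in \F_2^2$. Any uniform $\vec C_{\operatorname{loc}}$ acts on this $\F_2^2$ vector by a matrix $M\in\operatorname{GL}(\F_2,2)$. Writing out the images gives $A' = aA\,\D\,cB$ and $B'=bA\,\D\,dB$. The step of showing that $\vec C$ must be uniform still holds, by the transitivity argument above applied with $A=\{e\}$, $B=\{e\}$ and its translates. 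The $\vec Y$ case is now consistent because the supports move together.

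For asymmetric codes with abelian $G$ the construction is identical. For non-abelian $G$, the $\vec Z$ support moves by right multiplication and the $\vec X$ support $g^{-1}B$ by left multiplication. Swapping them would require the orbit structures of left and right translations to match on all subsets, which \Cref{lemma:two_set_three_set} rules out. That leaves $\vec C = \vec I^{\otimes n}$, and a $\vec Y$ image is excluded likewise.

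The main obstacle I expect is the rigidity step: showing that $\vec C_q$ is uniform in $q$ and that there is no mixing in the symmetric order-$\ge3$ case. The difficulty is that the stabilizer labeling may be permuted freely, as in \Cref{thm:permutation_gauge_symmetry}. I would handle it by reusing the orbit-pinning idea of \Cref{lemma:two_set_three_set} on the two-element supports.
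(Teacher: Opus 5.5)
Most of your plan matches the paper's proof: the singleton probes, excluding a $\vec Y$ image via $g \neq g^{-1}$, checking that $\vec H^{\otimes n}$ yields $(B,A)$ after $g \mapsto g^{-1}$, and applying \Cref{lemma:two_set_three_set} to $\operatorname{Orb}_L(B) = \operatorname{Orb}_R(B)$ for non-abelian asymmetric codes. For the exponent-2 case you let a uniform $\vec C_{\operatorname{loc}}$ act linearly on $(\mathbf 1_A, \mathbf 1_B) \in \F_2^2$; the paper instead checks the generators $\vec H, \vec S$, and the two are equivalent.

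The gap is the step you yourself call the main obstacle: showing $\vec C_q$ is the same for every qubit $q$. You leave it to an unexecuted comparison of $\vec S(g)$ and $\vec S(gs)$ for $A=\{a\}$, $B=\{b\}$. That comparison hinges on $s \neq s^{-1}$, so it gives nothing when every element has order at most $2$ --- exactly where you invoke it again for $A=B=\{e\}$. Every later step (ruling out mixed $\vec Z/\vec X$ images, the $\operatorname{GL}(\F_2,2)$ parametrization, the orbit argument) presupposes $\vec C = \vec C_{\operatorname{loc}}^{\otimes n}$. So as written the proof is incomplete.

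The missing observation is already in your first step. The image of $\{\vec Z_{g}\}_{g \in G}$ must be the stabilizer set of a \emph{single} mirror code $(A',B')$ whose stabilizers all have weight one. This forces $|A'|,|B'| \le 1$, leaving three admissible forms:
\begin{itemize}
\item only $A'$ nonempty, giving all $\vec Z$;
\item only $B'$ nonempty, giving all $\vec X$;
\item $A'=\{a'\}$, $B'=\{b'\}$ with $a'g = b'g^{-1}$ for \emph{every} $g$, giving all $\vec Y$.
\end{itemize}
In the third case, taking $g=e$ gives $a'=b'$, hence $g^2=e$ for all $g$. That is the correct form of your ``single $g$'' remark: the condition must hold for all $g$, not merely for one.

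Each form puts the same Pauli on every qubit, so $P_q$ is independent of $q$. The probe $A=\emptyset$, $B=\{e\}$ gives the same for the image of $\vec X$. A single-qubit Clifford is fixed up to phase by its images of $\vec Z$ and $\vec X$, so $\vec C = \vec C_{\operatorname{loc}}^{\otimes n}$ for every $G$ and both code types. This is the paper's one-line argument, and with it the two-element probes and the orbit-pinning rigidity argument are unnecessary.
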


\begin{proof}
Consider first symmetric mirror codes with $G$ having at least one element of order $\geq 3$.
Here sufficiency can be checked by inspection.
For necessity, consider the singleton case $A = \set{e}, B = \emptyset$.
Then the stabilizers are $\set{\vec Z(g)}_{g \in G}$.
Local Cliffords cannot modify the support, and mirror codes with check weight $1$ necessarily use the same Pauli on every stabilizer.
Thus, $\vec C = \vec C_{\text{loc}}^{\otimes n}$ for some $C_{\text{loc}} \in \Cliff_1$, and this statement holds \emph{for any} $G$, regardless of element orders.
In fact, this statement holds for asymmetric mirror codes as well, since there is no distinction between symmetric and asymmetric codes when one of the subsets is empty.
Further, if $G$ has an element of order at least 3, the transformed Pauli cannot be $\vec Y$  as then the map would have to place an element in $B$ and the check weight of the stabilizers would increase since $g^{-1} \neq g$ for some $g \in G$.
Hence, $\vec C$, if not the identity, must map $\vec Z$ to $\vec X$ on every qubit, and using $A = \emptyset, B = \set{e}$ instead, we find that $\vec C$ must also map $\vec X$ to $\vec Z$.
Therefore, $\vec C = \vec H^{\otimes n}$.

If all elements of $G$ have order at most $2$, then $g^{-1} = g$ and the above argument fails.
In fact, in this case we claim that any local Clifford of the form $\vec C = \vec C_{\text{loc}}^{\otimes n}$ executes a LC gauge symmetry.
This structural form is the most general possible by the above, so we need only show sufficiency.
The Hadamard $\vec H$ and phase $\vec S$ gates generate $\Cliff_1$, and we have already shown sufficiency for $\vec H$, so we need only show sufficiency for $\vec S$.
For any $A, B \subseteq G$, \begin{align}
    \vec S^{\otimes n} \vec Z(Ag) \vec X(B g) \vec S^{\dag \otimes n} & = i^{\a(g)} \vec Z (Ag) \vec Y(B g) \\
    & = i^{\a(g)} \vec Z((A \,\D\, B) g) \vec X(B g) ,
\end{align}
where $\a \,:\, G \to \Z_4$ encodes some induced phase.
Thus, $\vec S^{\otimes n}$ maps $A' = A \,\D\, B$ and $B' = B$, where $\D$ denotes the symmetric difference of sets.
In general, the corresponding transformation matrices acting on $A, B$ in the manner of Eqn.~\eqref{eq:LC_gauge_symmetric_exponent2}, \begin{align}
    M_{\vec H}  := \begin{pmatrix}
        0 & 1 \\ 1 & 0
    \end{pmatrix} , \quad M_{\vec S} := \begin{pmatrix}
        1 & 1 \\ 1 & 0
    \end{pmatrix} ,
\end{align}
generate all of $\operatorname{GL}(\F_2, 2)$ by multiplication.
Thus, in the case $G$ has only elements of order $2$, the most general LC gauge symmetry takes the form given in Eqn.~\eqref{eq:LC_gauge_symmetric_exponent2}.

In the asymmetric case, we will prove that if there is a $f_G$ which corresponds to any non-trivial LC gauge symmetry, then $G$ must be abelian.
This will complete the proof, as when $G$ is abelian there is no distinction between the symmetric and asymmetric formulations of mirror codes.
Every single-qubit Clifford is some sequence of $\vec H$ and $\vec S$.
$\vec H$ maps $\vec Z(Ag) \vec X(g^{-1} B)$ to $\vec{Z} (g^{-1} B) \vec X(Ag)$ and $\vec S$ maps $\vec Z(Ag) \vec X(g^{-1} B)$ to $\vec Z(Ag \,\D\, g^{-1} B) \vec X(g^{-1} B)$ (all up to phase).
Recall from above that if $\vec C$ corresponds to a LC gauge symmetry then $\vec C = \vec C_{\text{loc}}^{\otimes n}$. Hence the most general transformation that a LC symmetry could execute is \begin{align}
    \vec Z(Ag) \vec X(g^{-1} B) \longrightarrow \vec Z(aAg \,\D\, cg^{-1} B) \vec X(bAg \,\D\, d g^{-1} B) ,
\end{align}
for $a, b, c, d$ as in Eqn.~\eqref{eq:LC_matrix}.
We claim that if $b = 1$ or $c = 1$, then $G$ is abelian.
(If $b = c = 0$, then $a = d = 1$ because the matrix in Eqn.~\eqref{eq:LC_matrix} is invertible; hence the transformation is the identity map.)
Suppose first that $c = 1$.
Then for $A = \emptyset$, the resulting $\vec Z$ component is $g^{-1} B$ for each $g$.
To be a valid transformation, there must be some $A'$ such that $\set{A'g}_{g \in G} = \set{g^{-1} B}_{g \in G} = \set{g B}_{g \in G}$.
Let $L_g \,:\, G \to G$ be the left-multiplier map $x \mapsto gx$, and $R_g \,:\, G \to G$ be the right-multiplier map $x \mapsto xg$.
Let $L := \set{L_g}$ and $R := \set{R_g}$.
Then the above statement is equivalent to the claim that for all $B \subseteq G$, there exists $A' \subseteq G$ such that \begin{align} \label{eq:LC_proof_orbit_firststep}
    \operatorname{Orb}_{L}(B) = \operatorname{Orb}_{R}(A') .
\end{align}
We claim also that $\operatorname{Orb}_R(A') = \operatorname{Orb}_R(B)$.
This is because $B \in \operatorname{Orb}_L(B)$, so by Eqn.~\eqref{eq:LC_proof_orbit_firststep}, $B \in \operatorname{Orb}_{R}(A')$.
But $B \in \operatorname{Orb}_R(B)$ and since orbits partition the set on which it acts---here the power set of $G$---we must have that $\operatorname{Orb}_R(A') = \operatorname{Orb}_R(B)$.
In combination with Eqn.~\eqref{eq:LC_proof_orbit_firststep}, \begin{align}
    \operatorname{Orb}_{L}(B) = \operatorname{Orb}_{R}(B) 
\end{align}
for all $B \subseteq G$.
Then by \Cref{lemma:two_set_three_set}, $L = R$.
Equivalently, there exists a permutation $\kappa \,:\, G \to G$ such that $L_g = R_{\k(g)}$.
Hence, \begin{align}
    g = ge = L_g(e) = R_{\k(g)}(e) = e \k(g) = \k(g) ,
\end{align}
so in fact $\k(g) = g$ and therefore $L_g = R_g$ for all $g \in G$.
Thus, for all $g, h \in G$, \begin{align}
    gh = L_g(h) = R_g(h) = hg ,
\end{align}
so $G$ is abelian as claimed.
Similarly, if instead $b = 1$, then we set $B = \emptyset$ and repeat an analogous argument.
\end{proof}

In this section we have shown that there are large families of symmetries and transformations that preserve the set of stabilizers of a mirror code up to ordering or up to qubit permutations.
This means that in the typical case, mirror codes will have very large groups of automorphisms.
This is very useful for fault-tolerant quantum computing
for two central reasons.
First, automorphisms of a quantum error-correcting code will preserve the set of stabilizers, but might completely change the set of logical operators of the code, meaning that automorphisms can be used to perform logical computation.
Second, many of the operations discussed can be performed \textit{tranversally}, without performing any entangling gates, meaning that they will not propagate errors and will be fault-tolerant.
For example, applying a single-qubit gate, such as a Hadamard gate tranversally to every qubit can be a fault-tolerant logical operation.
Additionally, depending on the architecture being used, permuting the qubits is as simple as classically permuting a table of addresses that indicates where the qubits are stored, such as in an ion trap.

We can use these gauge symmetries to define a canonical form for abelian mirror codes.
We do not attempt to define a canonical form for non-abelian mirror codes to the inability to compare and order elements of arbitrary groups.

\begin{definition}[Canonical form for abelian mirror codes]
\label{def:canonical_form}
    We say that an abelian mirror code $(G, A, B)$ is in canonical form if $G$ is decomposed into prime powers $p_i^{\alpha_i}$ for lexicographically ordered $(p_i,\alpha_i)$ and for lexicographically sorted and ordered $A$ and $B$, across all gauge symmetries of the code.
    Note that we sort $A$ and $B$ as lists of elements of $G=\mathbb{Z}_{p_1^{\alpha_1}}\times\mathbb{Z}_{p_2^{\alpha_2}}\times\dots\mathbb{Z}_{p_k^{\alpha_k}}$, but also swap $A$ and $B$ if necessary to sort the pair $(A, B)$, with shorter strings considered lexicographically earlier, forcing $|A|\leq|B|$.
\end{definition}

Using this canonical form makes it easier to search for mirror codes.
For example, if both $A$ and $B$ contain at least one element (which they must for the code to have a distance above 1), then $A$ must contain the element $\vec0$.
Various other simplifications can further speed up the search, such as breaking up the search by dimensions of $G$ with a common $p_i$, as all automorphisms of $G$ must be products of automorphisms of the parts with a single prime.

\subsection{CSS Mirror codes}
\label{sec:CSS_mirror_codes}

While no mirror code is a CSS code in the strict sense of having every stabilizer in its tableau being either all $\mathbf{X}$ or all $\mathbf{Z}$, there are some notions of being ``essentially CSS'' that a subset of mirror codes do satisfy.
As the majority of quantum code constructions have focused exclusively on CSS codes, it is useful to delineate precisely what subset of mirror codes are CSS, and how we may test for this. 
We here give some precise formulations of the notion of being essentially CSS, and classify mirror codes which satisfy such conditions.

\begin{definition}[Equivalently CSS] \label{def:equivalently_CSS}
A stabilizer tableau is \emph{equivalently CSS} if there exists a local Clifford operator $\mathbf{C} = \bigotimes_{i=1}^n \mathbf{C}_i$ such that after conjugation by $\mathbf{C}$, the transformed stabilizer tableau is CSS.
If there exists such a transformation wherein $\mathbf{C}$ consists entirely of Hadamard and identity operators, i.e. $\mathbf{C} = \mathbf{H}^{\mathbf{v}}$ for some $\mathbf{v} \in \Z_2^n$, then the code is \emph{equivalently CSS via Hadamards}.
\end{definition}

We first give a characterization of equivalently CSS mirror codes in terms of a 2-colouring condition on $(G, A, B)$, and then give a series of equivalent simpler characterizations of mirror codes which are equivalently CSS via Hadamards.
Recall that each stabilizer is labeled by a group element $g$, as is each qubit.
In what follows, let $\mathbf{P}_g(q) \in \set{\vec I, \vec X, \vec Y, \vec Z}$ be the Pauli applied on qubit $q \in G$ in stabilizer $\mathbf{S}(g)$.
For a general mirror code, it is difficult to give a simple characterization of being equivalently CSS.
Our characterization is written in terms of our group $G$ but does not use the actual mirror structure meaningfully, e.g. the choice of $A, B$.
In fact, the characterization will apply to any $n$-qubit stabilizer code with $n$ stabilizers specified.

\begin{theorem}[Equivalently CSS mirror codes as a 2-colouring condition] \label{thm:equivalently_CSS_2_colouring}
    Let $(G, A, B)$ form a valid (a)symmetric mirror code $C$.
    Then $C$ is equivalently CSS if and only if there exists a 2-colour function $\t \,:\, G \to \Z_2$ and injective labeling functions $\ell_q \,:\, \Z_2 \to \set{\vec X, \vec Y, \vec Z}$ such that for all $g, q \in G$, $\vec P_g(q) \in \set{\vec I, \ell_q(\t(g))}$.
\end{theorem}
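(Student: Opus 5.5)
The plan is to prove both directions directly from the definition of equivalently CSS. A local Clifford $\vec C = \bigotimes_q \vec C_q$ acts on qubit $q$ by conjugation. Up to sign, this action permutes the three non-identity Paulis $\set{\vec X, \vec Y, \vec Z}$ bijectively, and every permutation of them is realized by some element of $\Cliff_1$. The strategy is to translate "the conjugated tableau is CSS" into a statement about which Paulis each stabilizer uses on each qubit. Since the tableau here is the specific list $\set{\vec S(g)}_{g \in G}$, being CSS means that each conjugated $\vec C \vec S(g) \vec C^\dagger$ is, up to phase, either of the form $\vec Z^{\vec v}$ or of the form $\vec X^{\vec v}$.

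For the forward direction, I assume $\vec C$ makes the tableau CSS. I then define $\t(g) = 0$ if $\vec C \vec S(g) \vec C^\dagger$ is $\vec Z$-type and $\t(g) = 1$ if it is $\vec X$-type. If the stabilizer is the identity, either choice works. Next I set $\ell_q(0) := \vec C_q^\dagger \vec Z \vec C_q$ and $\ell_q(1) := \vec C_q^\dagger \vec X \vec C_q$, each taken up to sign. These are injective because conjugation is a bijection and $\vec Z \neq \pm\vec X$. Since conjugation preserves the support, on each $q \in \operatorname{supp} \vec S(g)$ the conjugated Pauli is $\vec Z$ (respectively $\vec X$) exactly when $\vec P_g(q) = \ell_q(0)$ (respectively $\ell_q(1)$). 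This gives $\vec P_g(q) \in \set{\vec I, \ell_q(\t(g))}$.

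For the converse, I am given $\t$ and injective $\ell_q$. For each $q$ I choose $\vec C_q \in \Cliff_1$ with $\vec C_q \ell_q(0) \vec C_q^\dagger = \pm \vec Z$ and $\vec C_q \ell_q(1) \vec C_q^\dagger = \pm \vec X$. This choice is possible because $\Cliff_1$ modulo phases acts as the full symmetric group $S_3$ on $\set{\vec X, \vec Y, \vec Z}$; for instance, $\vec H$ and $\vec S$ generate transpositions. Any ordered pair of distinct Paulis can therefore be sent to $(\vec Z, \vec X)$. Then $\vec C \vec S(g) \vec C^\dagger$ equals, up to phase, $\vec Z^{\vec v}$ when $\t(g) = 0$ and $\vec X^{\vec v}$ when $\t(g) = 1$. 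So the tableau is CSS, and the phase convention adopted for mirror codes makes the signs irrelevant.

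The only step with real content is the transitivity claim: that local Cliffords realize every ordered pair of distinct Paulis. I would verify it explicitly. $\vec H$ swaps $\vec X$ and $\vec Z$, $\vec S$ swaps $\vec X$ and $\vec Y$ up to sign, and these two transpositions generate $S_3$. Everything else is bookkeeping.
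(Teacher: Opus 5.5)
Your proposal is correct and follows the paper's proof essentially step for step. For necessity you use the same definitions of $\t(g)$ and of $\ell_q(0) = \vec C_q^\dagger \vec Z \vec C_q$, $\ell_q(1) = \vec C_q^\dagger \vec X \vec C_q$; for sufficiency you use the same choice of $\vec C_q$ sending $\ell_q(0) \mapsto \vec Z$ and $\ell_q(1) \mapsto \vec X$. Your argument that $\Cliff_1$ acts as the full $S_3$ on $\set{\vec X, \vec Y, \vec Z}$ usefully supplies the existence of such a $\vec C_q$, which the paper only implicitly assumes (it cites uniqueness instead), and your tracking of signs is a harmless refinement.
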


\begin{proof}
We begin with sufficiency.
The criterion is stated such that the colouring function $\t$ indicates for each stabilizer whether it will be a $\vec Z$ or $\vec X$ type stabilizer once transformed into a CSS code.
The condition implies that the colouring is consistent in the sense that a single local Clifford can correctly map all stabilizers into either pure $\vec Z$ or pure $\vec X$.
We construct our local Clifford by choosing $\mathbf{C}_q$ to map $\vec \ell_q(0) \mapsto \vec Z$ and $\vec \ell_q(1) \mapsto \vec X$.
A local Clifford is uniquely specified (up to global phases) by its action on two independent Paulis, so this specification uniquely defines $\mathbf{C}_q$.
Let $\vec C := \bigotimes_{g \in G} \vec C_q$.
Suppose that $\t(g) = 0$, so that $\vec P_g(q) \in \set{\vec I, \ell_q(0)}$ for all $q \in G$.
After conjugation by $\vec C$, the new stabilizer $\vec C \vec S(g) \vec C^\dag$ is pure $\vec Z$ by construction.
Similarly, if $\t(g) = 1$, then the post-conjugation stabilizer is pure $\vec X$, giving a CSS code.

Next, for necessity, suppose that there exists $\vec C_q \in \Cliff_1$ for each $q \in G$ such that after conjugation by $\vec C := \bigotimes_{g \in G} \vec C_q$ the stabilizers are either pure $\vec Z$ or pure $\vec X$.
Let $\t \,:\, G \to \Z_2$ record which of these occurred for each stabilizer.
That is, if $\vec S(g)$ became purely $\vec Z$, then set $\t(g) = 0$, and otherwise set $\t(g) = 1$.
Next, define $\ell_q(0) = \vec C_q^\dag \vec Z \vec C_q$ and $\ell_q(1) = \vec C_q^\dag \vec X \vec C_q$.
By construction, $\ell_q$ is injective and $\vec P_g(Q) \in \set{\vec I, \ell_q(\t(g))}$.
\end{proof}

While this characterization is intuitive, by no means is it clear that it is efficiently checkable for a given code.
We claim, however, that there is a simple $\widetilde{O}(n (|A| + |B|)^2)$-time algorithm to check this characterization (assuming group operations are $\widetilde{O}(1)$ time).
As we are typically interested in codes wherein $|A|, |B|$ are small constants, this runtime is essentially linear.
Our algorithm proceeds as follows.
\begin{enumerate}
    \item[(1) ] For each $q \in G$, compute $X_q, Y_q, Z_q$, where $P_q$ is the set of stabilizers, indexed by group elements $g$, which place Pauli $\vec P$ on $q$.
    If for some $q$ all three sets are non-empty, reject.
    \item[(2) ] We will build a graph with black and red edges.
    Each node is associated with a group element $g$.
    For each $q \in G$ and for each $P \in \set{X, Y, Z}$, add a black edge between all pairs of nodes in $P_q$.
    For each $q \in G$, if exactly two of $X_q, Y_q, Z_q$ are non-empty, say $X_q$ and $Y_q$, then for each $u \in X_q$ and $v \in Y_q$ add a red edge between $u$ and $v$.
    \item[(3) ] ``Merge'' all nodes connected by a black edge.
    Each red edge containing a node in a merged set now contains the single merged node (red self-edges are also kept).
    The graph now has only red edges.
    \item[(4) ] Accept if the resulting graph is bipartite, reject otherwise.
\end{enumerate}

To compute $X_q, Y_q, Z_q$ (on, e.g., symmetric mirror codes), note that the stabilizers $\vec S(g)$ which place $\vec X$ on $q$ are precisely those which have $q \in B g^{-1}$; this is the set $q^{-1} B$.
Likewise the stabilizers which place $\vec Z$ on $q$ are those $\vec S(g)$ for $g \in A^{-1} q$.
Hence, \begin{align}
    Y_q = A^{-1} q \cap q^{-1} B ,\; X_q = q^{-1} B \setminus Y_q ,\; Z_q = A^{-1} q \setminus Y_q .
\end{align}
These each take time $\widetilde{O}(|A| \cdot |B|)$, where we use $\widetilde{O}$ to mask log factors that might arise in, e.g., the length needed to describe a group element in $G$ with $|G| = n$.
If a qubit has three distinct Paulis placed on it across stabilizers, no local Clifford can transform the stabilizers in such a way that the qubit only has $\vec X$ or $\vec Z$ placed on it across all stabilizers.
Thus, a mirror code cannot be equivalently CSS if it fails the first step.
If it succeeds, however, then for each $q$ what the image of $\ell_q$ is precisely the Paulis associated with the two non-empty constructed sets.
Next, the our two edge types encode two constraints.
The first is that if two stabilizers $\vec S(g_1)$ and $\vec S(g_2)$ place the \emph{same} Pauli $P$ on qubit $q$, then their corresponding 2-colour function values must be equal, i.e. $\t(g_1) = \t(g_2)$ (we refer to $\t(g)$ as the \emph{type} of $\vec S(g)$).
This is because the 2-colour function encodes whether the stabilizer will become $\vec X$-type or $\vec Z$-type after the local Clifford, and since the two stabilizers both place the same Pauli on some qubit $q$, they will continue to have the same Pauli on $q$ after the local Clifford conjugation.
Here, black edges connect nodes whose stabilizers must be of the same type.
At the same time, if two stabilizers place \emph{different} Paulis $P, Q$ on qubit $q$, then $\t(g_1) \neq \t(g_2)$, because a single-qubit Clifford is invertible and hence cannot map distinct Paulis to the same Pauli.
Red edges thus connect nodes whose stabilizers must be of opposite types.
Note that this edge construction takes time $\widetilde{O}(|A|^2 + |B|^2 + |A| \cdot |B|) = \widetilde{O}((|A| + |B|)^2)$ per node.
Now, we quotient out by nodes connected via black edges, which effectively enforces the equality of their types.
This final graph has nodes corresponding to sets of stabilizers, such that connected nodes must assign different types to their corresponding sets of stabilizers.
Such an assignment exists if and only if this graph is bipartite.
This argument can be readily formalized to prove that the above algorithm accepts if and only if the code is equivalently CSS; we omit this complete formalization for brevity.
The merging step can be directly in $\widetilde{O}(n (|A| + |B|)^2)$ time.
The quotiented graph has $\leq n$ nodes and $\leq n (|A| + |B|)^2$ edges.
A graph $(V, E)$ can be checked for bipartiteness via a breadth-first search in $O(|V| + |E|) = O(n (|A| + |B|)^2)$ steps using linked-list representations.
Thus, the runtime in total is $\widetilde{O}(n (|A| + |B|)^2)$ as desired.

Often, however, the notion of being equivalently CSS via an arbitrarily local Clifford is too general, and we may instead wish to restrict the local Clifford to Hadamards.
This is particularly relevant in the case of mirror codes, because it is tempting to assume from the form of the stabilizers  $\vec S(g) = \vec Z(Ag) \vec X(Bg^{-1})$ that if the code is equivalently CSS, it is equivalently CSS via Hadamards.
(We will, however, later prove a result that lends some formal credence to the intuition that the only good equivalently CSS mirror codes are equivalently CSS via Hadamards.)
This assumption is not generally true, however, and can be verified explicitly by combining a test for being equivalently CSS and a test for being equivalently CSS via Hadamards.
We next discuss simple tests for being equivalently CSS via Hadamards, beginning with a useful technical lemma.

\begin{lemma}[Hadamard-CSS applies to a union of cosets] \label{lemma:Hadamard-CSS_union_of_cosets}
    Suppose that a valid $(G, A, B)$ symmetric mirror code is equivalently CSS via Hadamards.
    Then the Hadamard transform which makes the code CSS is applied on a union of right cosets of $K_{AB} := \langle AA^{-1} \cup BB^{-1} \rangle$.
    If $(G, A, B)$ forms instead a valid asymmetric mirror code, then the transform is applied on a union of double cosets of $K^{(L)}_{A} := \langle AA^{-1} \rangle$ and $K^{(R)}_B := \langle B^{-1} B \rangle$, i.e. sets of the form $K^{(L)}_{A} g K^{(R)}_{B}$.
\end{lemma}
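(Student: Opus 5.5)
The plan is to turn the CSS condition into a purely combinatorial condition on the set $T \subseteq G$ of qubits receiving a Hadamard, and then read off invariance of $T$ under multiplication.

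First I fix such a $T$, so that $\vec C = \vec H^{\mathbf 1_T}$ makes every $\vec C \vec S(g) \vec C^\dag$ pure $\vec Z$ or pure $\vec X$. Conjugation by $\vec H$ swaps $\vec Z \leftrightarrow \vec X$ and fixes $\vec Y$ up to sign. So no stabilizer may carry a $\vec Y$, and for each $g$ exactly one of two alternatives must hold.
\begin{itemize}
\item $\vec S(g)$ becomes $\vec Z$-type: its $\vec Z$-support lies in $G \setminus T$ and its $\vec X$-support lies in $T$.
\item $\vec S(g)$ becomes $\vec X$-type: its $\vec Z$-support lies in $T$ and its $\vec X$-support lies in $G\setminus T$.
\end{itemize}
The only consequence I need is weaker: for every $g$, the $\vec Z$-support is contained entirely in $T$ or entirely in $G\setminus T$, and likewise for the $\vec X$-support. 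In the symmetric case these supports are $Ag$ and $Bg^{-1}$.

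Next I derive the invariance. Suppose $x \in T$ and $a, a' \in A$. Write $x = ag$ with $g = a^{-1}x$. Then $Ag$ meets $T$, so $Ag \subseteq T$, and hence $a'a^{-1}x = a'g \in T$. The same argument applies to $G\setminus T$, so $T$ is invariant under left multiplication by every element of $AA^{-1}$. For $B$, write $x = bg^{-1}$. Since $Bg^{-1}$ meets $T$, it lies in $T$, so $b'b^{-1}x \in T$; thus $T$ is invariant under left multiplication by $BB^{-1}$.

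Left multiplications that preserve $T$ form a subgroup of $G$: a bijection of a finite set that maps $T$ into $T$ maps it onto $T$, so inverses also preserve $T$. This subgroup contains $AA^{-1}\cup BB^{-1}$, hence contains $K_{AB}$. So $K_{AB}T = T$, and $T$ is a union of right cosets $K_{AB}x$. If $A$ or $B$ is empty, the corresponding generating set is empty and contributes nothing, so this degenerate case needs no separate treatment.

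For the asymmetric code the $\vec Z$-support is still $Ag$, so the argument above gives $K^{(L)}_A T = T$. The $\vec X$-support is now $g^{-1}B$. Writing $x = g^{-1}b$, I get $g^{-1}b' = x\,b^{-1}b'$, so $T$ is invariant under right multiplication by $B^{-1}B$ and hence by $K^{(R)}_B$. Left and right multiplications commute, so $K^{(L)}_A\, T\, K^{(R)}_B = T$, and $T$ is a union of double cosets $K^{(L)}_A g K^{(R)}_B$.

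I expect no serious obstacle. The step that needs care is the first one: I must make sure the "all-in or all-out" property really follows from the CSS condition, in particular that $\vec Y$ positions and the two type alternatives leave no other possibility. The passage from invariance under a generating set to invariance under the generated subgroup is the other point to state explicitly, via the finiteness argument above.
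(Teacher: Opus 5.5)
Your proposal is correct and follows essentially the same route as the paper. Both arguments reduce the Hadamard-CSS condition to each $Ag$ (and $Bg^{-1}$, resp.\ $g^{-1}B$) lying wholly inside or wholly outside $T$. Both then re-index to obtain invariance of $T$ under left multiplication by $AA^{-1}\cup BB^{-1}$ (resp.\ left multiplication by $AA^{-1}$ and right multiplication by $B^{-1}B$). Your finiteness argument, that the left multiplications preserving $T$ form a subgroup and so contain $K_{AB}$, makes explicit a step the paper only asserts.
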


\begin{proof}
Consider the symmetric case first.
By assumption, there is a subset $T \subset G$ such that after Hadamarding qubits in $T$, the resultant stabilizers are either purely $\mathbf{X}$ or purely $\mathbf{Z}$.
Thus, for all $g \in G$, either $Ag \subseteq T, Bg^{-1} \subseteq G\setminus T$ or $Ag \subseteq G \setminus T, Bg^{-1} \subseteq T$.
Equivalently, for $\mathbbm{1}_{T} \,:\, G \to \set{0, 1}$ the indicator function of membership in $T$, $\mathbbm{1}_T(a_1 g) = \mathbbm{1}_T(a_2 g)$ for all $a_1, a_2 \in A$ and $g \in G$.
Letting $g \mapsto a_2^{-1} g$, $\mathbbm{1}_T(g) = \mathbbm{1}_T(a_1 a_2^{-1} g)$, i.e. \begin{align}
    \mathbbm{1}_T(g) = \mathbbm{1}_T(hg) 
\end{align}
for all $h \in AA^{-1}$ and $g \in G$.
Likewise, $\mathbbm{1}_T$ is invariant under left-multiplications by elements of $BB^{-1}$.
Consequently, $\mathbbm{1}_T$ is constant over left multiplications by elements of the subgroup $K_{AB} := \langle AA^{-1} \cup BB^{-1} \rangle$.
In other words, $\mathbbm{1}_T(g)$ depends only on which right coset of $K_{AB}$ contains $g$.
Thus, $T$ is itself a union of right cosets of $K_{AB}$.

The proof is similar in the asymmetric case, but while $\mathbbm{1}_T(a_1 g) = \mathbbm{1}_T(a_2 g)$ for all $a_1, a_2 \in A, g \in G$, now $\mathbbm{1}_T(g b_1) = \mathbbm{1}_T(g b_2)$ for all $b_1, b_2 \in B, g \in G$.
With $g \mapsto g b_1^{-1}$, we have $\mathbbm{1}_T(g) = \mathbbm{1}_T(g b_1^{-1} b_2)$.
Hence, $\mathbbm{1}_T$ is invariant under right-multiplications by elements of $B^{-1} B$.
Hence, $\mathbbm{1}_T$ is constant over left-multiplications by elements of $K^{(L)}_A$ and right-multiplications by elements of $K^{(R)}_B$.
\end{proof}

Note that if $G$ is abelian then $K_{AB} = K^{(L)}_A K^{(R)}_B$, so the symmetric and asymmetric conditions agree.
This lemma already implies a simple characterization of codes equivalently CSS via Hadamards if the group is abelian.

\begin{theorem}[Characterization of Hadamard-CSS abelian mirror codes] \label{thm:Hadamard-CSS_abelian_mirror}
Suppose that $(G, A, B)$ forms a valid mirror code $C$, and that $G$ is abelian.
The following are equivalent.
\begin{enumerate}
    \item[(1) ] $C$ is equivalently CSS via Hadamards.
    
    \item[(2) ] There exists a homomorphism \begin{align}
        \phi \,:\, G \to \Z_2
    \end{align}
    such that $\phi(a)$ is constant for all $a \in A$, $\phi(b)$ is constant for all $b \in B$, and $\phi(a) = 1 - \phi(b)$ for all $a \in A, b \in B$.
    
    \item[(3) ] There exists a subgroup $H \leq G$ of index 2, such that one coset contains $A$ and the other contains $B$.
\end{enumerate}
If $G$ is not abelian, then (2) and (3) are sufficient conditions for the $(G, A, B)$ (a)symmetric mirror code to be equivalently CSS via Hadamards.
\end{theorem}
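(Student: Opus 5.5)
We will prove the cycle of implications $(2)\Leftrightarrow(3)$, $(3)\Rightarrow(1)$ for arbitrary $G$ (symmetric and asymmetric), and $(1)\Rightarrow(3)$ for abelian $G$ using \Cref{lemma:Hadamard-CSS_union_of_cosets}.

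\textbf{$(2)\Leftrightarrow(3)$.} Given $\phi$, it is surjective as soon as $A$ and $B$ are both nonempty, since $\phi$ then takes both values. Set $H:=\ker\phi$, which has index 2. By (2), $A$ lies in one coset and $B$ in the other. Conversely, an index-2 subgroup $H$ is normal, and $G/H\cong\Z_2$; take $\phi$ to be the quotient map. The degenerate cases where $A$ or $B$ is empty are handled by convention: the statement only has content when both are nonempty, since otherwise the code is already CSS after at most a global Hadamard.

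\textbf{$(3)\Rightarrow(1)$, for any $G$.} Let $H$ be as in (3), and say $A\subseteq xH$ and $B\subseteq yH$ with $xH\neq yH$. Apply $\vec H$ on every qubit of $T:=yH$, say; we must choose $T$ so that each stabilizer becomes pure. Since $H$ is normal with $G/H\cong\Z_2$, the coset of $Ag$ is $\phi(A)+\phi(g)$. In the symmetric case the coset of $Bg^{-1}$ is $\phi(B)-\phi(g)=\phi(B)+\phi(g)$, because every element of $\Z_2$ is its own inverse. As $\phi(A)\neq\phi(B)$, the sets $Ag$ and $Bg^{-1}$ always lie in opposite cosets. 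The asymmetric case is identical, since $\phi(g^{-1}B)=\phi(B)+\phi(g)$.

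Hadamarding the coset $H$ (i.e.\ $T=H$) therefore works: for each $g$, exactly one of $Ag$ and $Bg^{-1}$ lies in $H$. If $Ag\subseteq H$, then the $\vec Z$'s become $\vec X$'s while the $\vec X$'s on $G\setminus H$ stay as they are, giving a pure $\vec X$ stabilizer. Otherwise the stabilizer becomes pure $\vec Z$. The two sets are disjoint, so no $\vec Y$ arises.

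\textbf{$(1)\Rightarrow(3)$ for abelian $G$.} Let $T$ be the Hadamarded set. By \Cref{lemma:Hadamard-CSS_union_of_cosets}, $T$ is a union of cosets of $K:=K_{AB}=\langle (A-A)\cup(B-B)\rangle$. The CSS condition says that for every $g$, $\mathbbm 1_T(a+g)\neq \mathbbm 1_T(b-g)$ for $a\in A$, $b\in B$. We pass to the quotient $Q:=G/K$, in which $A$ maps to a single point $\bar a$, $B$ maps to a single point $\bar b$, and $\mathbbm 1_T$ descends to $f:Q\to\Z_2$ satisfying
\begin{align}
    f(\bar a+x)\neq f(\bar b-x)\quad\text{for all } x\in Q.
\end{align}
Substituting $x\mapsto x-\bar a$ and setting $c:=\bar a+\bar b$ gives $f(x)\neq f(c-x)$ for all $x$.

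The key step, and the expected main obstacle, is to turn this involution condition into an index-2 subgroup. First, the map $x\mapsto c-x$ can have no fixed point, since a fixed point would force $f(x)\neq f(x)$. Hence $2x\neq c$ for all $x\in Q$, i.e.\ $c\notin 2Q$. Since $Q/2Q$ is an elementary abelian 2-group and $c\notin 2Q$, there is a homomorphism $\psi:Q\to\Z_2$ with $\psi(c)=1$; choose a functional on $Q/2Q$ that is nonzero on the class of $c$. Set $\phi:=\psi\circ(G\to Q)$. Then $\phi$ is constant on $A$ and constant on $B$, because each maps to a single point of $Q$. Moreover $\phi(a)+\phi(b)=\psi(c)=1$, so the two constants differ, which gives (2) and hence (3).

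The remaining care is ensuring the fixed-point argument really uses only $f(x)\neq f(c-x)$, and that $\mathbbm 1_T$ descends to $Q$ correctly (invariance under $K$ from the lemma). We also need to check the sign conventions in the symmetric case, namely that the condition involves $b-g$ and so yields the sum $\bar a+\bar b$.
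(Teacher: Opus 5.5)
Your proposal is correct and follows essentially the paper's route: you use \Cref{lemma:Hadamard-CSS_union_of_cosets} to descend $\mathbbm{1}_T$ to $Q=G/K_{AB}$, a fixed-point argument shows $c=\bar a+\bar b\notin 2Q$, and a homomorphism to $\Z_2$ that is nonzero on $c$ finishes. Your final step is more careful than the paper's, which asserts $Q/Q^2=\Z_2$; that is false in general (e.g.\ $Q=\Z_2^2$), whereas choosing a functional on the $\F_2$-vector space $Q/2Q$ nonzero on the class of $c$ is exactly what is needed. Your nonemptiness caveat is likewise genuinely required, not just a convention: for $A=\emptyset$ and $|G|$ odd, (1) and (2) hold (take $T=\emptyset$ and $\phi\equiv 0$) but (3) fails.
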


\begin{proof}
We first prove that (1) and (2) are equivalent.
For sufficiency, suppose such a $\phi$ exists.
Let $H = \ker \phi$.
By assumption, $\phi$ is constant on $A$ and $B$, so let $\phi(A) = \set{\m}$ and $\phi(B) = \set{1-\m}$, where $\m \in \Z_2$.
Beginning with the stabilizers \begin{align}
    \mathbf{S}(g) := \mathbf{Z}(Ag) \mathbf{X}(Bg^{-1}) ,
\end{align}
We Hadamard all qubits in $H$.
Note that for all $a \in A$ and $b \in B$, \begin{align}
    \phi(ag) = \m + \phi(g) ,\; \phi(bg^{-1}) = 1 - \m - \phi(g) = 1 + \m + \phi(g) .
\end{align}
Thus, for all $g \in G$, $Ag$ and $Bg^{-1}$ are contained in distinct cosets of $H$.
That is, either $Ag \subseteq H, Bg^{-1} \subseteq G\setminus H$ or $Ag \subseteq G \setminus H, Bg^{-1} \subseteq H$.
Thus, the Hadamard on $H$ transforms each $\mathbf{S}(g)$ to be either purely $\mathbf{X}$ or purely $\mathbf{Z}$.
We observe that this argument applies similarly when $Bg^{-1} \mapsto g^{-1} B$, and does not depend on whether $G$ is abelian.

For necessity, let $T \subseteq G$ be the subset of qubits which we Hadamard to produce a CSS code.
\Cref{lemma:Hadamard-CSS_union_of_cosets} implies that $T$ is a union of right cosets of $K_{AB} = \langle AA^{-1} \cup BB^{-1} \rangle$.
Since $K_{AB} \vartriangleleft G$, we may define the quotient group $Q := G/K_{AB}$.
Since $\mathbbm{1}_T$ is constant on cosets of $K_{AB}$, there is a well-defined map $\widetilde{\mathbbm{1}}_T \,:\, Q \to \Z_2$, given by $\mathbbm{1}_T = \widetilde{\mathbbm{1}}_T \circ \mathcal{Q}$, where $\mathcal{Q} \,:\, G \to Q$ is the quotient homomorphism.
Now, note that $\mathcal{Q}(a_1) = \mathcal{Q}(a_2)$ for all $a_1, a_2 \in A$, since $\mathcal{Q}(a_1 a_2^{-1}) = e$ as $a_1 a_2^{-1} \in AA^{-1} \subseteq K_{AB}$.
Likewise, $\mathcal{Q}(b_1) = \mathcal{Q}(b_2)$ for all $b_1, b_2 \in B$.
Denote $\a := \mathcal{Q}(a)$ and $\b := \mathcal{Q}(b)$ for $a \in A, b \in B$.
Since Hadamarding all qubits in $T$ produces pure $\vec Z$ or pure $\vec X$ stabilizers, $\mathbbm{1}_T(ag) = 1 - \mathbbm{1}_T(bg^{-1})$ for all $a \in A, b \in B, g \in G$.
Over $Q$, this condition is equivalently \begin{align}
    \widetilde{\mathbbm{1}}_T(\a h) = 1 - \widetilde{\mathbbm{1}}_T(\b h^{-1})
\end{align}
for all $h \in Q$.
Taking $h \mapsto \a^{-1} h$, \begin{align}
    \widetilde{\mathbbm{1}}_T(h) = 1 - \widetilde{\mathbbm{1}}_T(\g h^{-1})
\end{align}
where $\gamma := \a \b$.
Consequently, $\g \neq x^2$ for any $x \in Q$.
If $\exists x \in Q \,:\, x^2 = c$, then with $h = x$ above, $\widetilde{\mathbbm{1}}_T(x) = 1 - \widetilde{\mathbbm{1}}_T(x)$, a contradiction.
To conclude, define the squared subgroup $Q^2 := \set{x^2 \,|\, x \in Q}$, noting that $Q^2 \leq Q$.
Note that $Q/Q^2 = \Z_2$.
Let $\mathcal{R} \,:\, Q \to Q/Q^2$ be the quotient homomorphism.
Since $\gamma \notin Q^2$, $\mathcal{R}(\g) = 1$.
Hence, $\mathcal{R}(\a \b) = \mathcal{R}(\a) + \mathcal{R}(\b) = 1$, so $\mathcal{R}(\a) = 1 - \mathcal{R}(\b)$.
Pulling back, define $\phi \,:\, G \to \Z_2$ by $\phi = \mathcal{R} \circ \mathcal{Q}$.
By the above, $\phi(a)$ is constant for $a \in A$, $\phi(b)$ is constant for $b \in B$, and $\phi(a) = \mathcal{R}(\a) = 1 - \mathcal{R}(\b) = 1 - \phi(b)$.

Next we show that (2) and (3) are equivalent.
Let $C, D$ be the cosets of the assumed subgroup $H \leq G$, such that $A \subseteq C$ and $B \subseteq D$.
Since $H$ has index 2, $G/H = \Z_2$.
Let $\phi \,:\, G \to \Z_2$ be the quotient homomorphism.
Then $\phi$ is constant over $A$ and constant over $B$, and $\phi(a) = 1 - \phi(b)$ $\forall a \in A, b \in B$ as desired.
Conversely, let $\phi \,:\, G \to \Z_2$ be a homomorphism constant on $A$, constant on $B$, and $\phi(a) = 1 - \phi(b) \,\forall a \in A, b \in B$.
Then $\ker \phi$ is a index-2 subgroup of $G$ such that $A, B$ are contained in distinct cosets as claimed.
\end{proof}

Beyond abelian groups, mirror codes equivalently CSS via Hadamards have a more intricate characterization.
This is due in part to the fact that $K_{AB}$ need not be a normal subgroup, so that a quotient group $G/K_{AB}$ is not even well-defined.
However, even when $K_{AB}$ is normal, there are mirror codes which are equivalently CSS via Hadamards but have no non-trivial homomorphism to $\Z_2$.
We give an explicit example using the special linear group in \Cref{app:counterexamples_equivalences}.

Fortunately, there is a relatively simple characterization in both the symmetric and asymmetric cases as the bipartiteness of a certain graph constructed from the cosets of the group.
The relation between CSS codes and various bipartite graphs is abundant across the study of stabilizer codes~\cite{khesin2025universal,kissinger2022phase}.

\begin{definition}[Coset constraint graphs] \label{def:coset_constraint_graph}
Let $G$ be a group and $A, B \subseteq G$.
\begin{itemize}
    \item The right coset constraint graph $\G_{GAB}$ is an undirected graph with $m := |G|/|K_{AB}|$ vertices, where $K_{AB} = \langle AA^{-1} \cup BB^{-1} \rangle$.
    Each vertex is associated with a right coset of $K_{AB}$ in $G$.
    There are $n = |G|$ edges in this graph, one for each $g \in G$, given by \begin{align}
        e_g := (K_{AB}(ag), K_{AB}(b g^{-1}))
    \end{align}
    where $a, b$ are fixed elements of $A, B$, respectively; the exact choice does not affect $e_g$.

    \item The double coset constraint graph $\G^{\mathrm{D}}_{GAB}$ is defined similarly, except that vertices are associated with a double coset $K^{(L)}_{A} g K^{(R)}_B$, where $K^{(L)}_{A} := \langle AA^{-1} \rangle$ and $K^{(R)}_B := \langle B^{-1} B \rangle$.
    Each edge is given by \begin{align}
        e^{\mathrm{D}}_g := (K^{(L)}_{A} (a g) K^{(R)}_B, K^{(L)}_{A} (g^{-1} b) K^{(R)}_B) ,
    \end{align}
    where again $a, b$ are respectively fixed elements of $A, B$ with the exact choice irrelevant.
\end{itemize}
\end{definition}

We next show that the bipartiteness of these constraint graphs precisely characterize whether a corresponding mirror code is equivalently CSS via Hadamards.

\begin{theorem}[Characterization of Hadamard-CSS mirror codes] \label{thm:Hadamard-CSS_general_mirror}
A valid $(G, A, B)$ (a)symmetric mirror code is equivalently CSS via Hadamards if and only if the right (double) coset constraint graph $\G_{GAB}$ is bipartite.
\end{theorem}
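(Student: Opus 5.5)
The plan is to reduce the statement to a 2-colouring problem via \Cref{lemma:Hadamard-CSS_union_of_cosets}, treating the symmetric case in detail; the asymmetric case follows by replacing right cosets with double cosets. A Hadamard set $T \subseteq G$ makes the code CSS exactly when, for every $g \in G$, the sets $Ag$ and $Bg^{-1}$ lie on opposite sides of the partition $(T, G \setminus T)$. Both $Ag$ and $Bg^{-1}$ must then be contained entirely in $T$ or entirely in its complement. So the condition reads: $\mathbbm{1}_T$ is constant on $Ag$, constant on $Bg^{-1}$, and these two constants differ.

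The first step is to check that the constraint graph is well defined. For $a_1, a_2 \in A$ we have $a_1 g = (a_1a_2^{-1}) a_2 g$, and $a_1a_2^{-1} \in K_{AB}$, so $K_{AB}(a_1 g) = K_{AB}(a_2 g)$. The same argument with $BB^{-1}$ gives independence of the choice of $b$ in $K_{AB}(bg^{-1})$. Hence the whole set $Ag$ lies in the single right coset $K_{AB}(ag)$, and likewise $Bg^{-1} \subseteq K_{AB}(bg^{-1})$. In the asymmetric case, $Ag \subseteq K^{(L)}_A(ag)K^{(R)}_B$, and $g^{-1}B \subseteq K^{(L)}_A (g^{-1}b) K^{(R)}_B$ because $g^{-1}b_1 = g^{-1}b_2 (b_2^{-1}b_1)$ with $b_2^{-1}b_1 \in B^{-1}B$.

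For necessity, suppose $T$ works. By \Cref{lemma:Hadamard-CSS_union_of_cosets}, $T$ is a union of right (double) cosets, so $\mathbbm{1}_T$ descends to a colouring $c$ of the vertices of $\G_{GAB}$. For each $g$, the CSS condition $\mathbbm{1}_T(ag) \neq \mathbbm{1}_T(bg^{-1})$ says exactly that the two endpoints of $e_g$ get different colours. Therefore $c$ is a proper 2-colouring and $\G_{GAB}$ is bipartite. In particular, $\G_{GAB}$ cannot contain a self-loop $e_g$, that is, one with $K_{AB}ag = K_{AB}bg^{-1}$.

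For sufficiency, take a proper 2-colouring $c$ of the vertices and let $T$ be the union of the cosets coloured $1$. For each $g$, $Ag$ lies inside one coset and $Bg^{-1}$ lies inside another, and these two cosets are the endpoints of $e_g$. Since the colouring is proper, they lie on opposite sides of $T$. So Hadamarding $T$ turns $\mathbf{S}(g)$ into a pure $\vec X$ or pure $\vec Z$ operator, and the code is equivalently CSS via Hadamards.

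The only delicate points are the containment of $Ag$ and $Bg^{-1}$ (respectively $g^{-1}B$) in a single coset, which is the well-definedness step above, and the handling of self-loops. A self-loop must count as non-bipartite: if both endpoints of $e_g$ are the same coset, no choice of $T$ can separate $Ag$ from $Bg^{-1}$, which matches the convention for bipartiteness. Everything else is bookkeeping, so I expect no serious obstacle beyond stating the double-coset version carefully.
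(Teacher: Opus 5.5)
Your proposal is correct and takes the same route as the paper. You invoke \Cref{lemma:Hadamard-CSS_union_of_cosets} so that the Hadamard set descends to a 2-colouring of the right (double) cosets, and then observe that the CSS condition on each $\mathbf{S}(g)$ says exactly that the endpoints of $e_g$ receive different colours. You also fill in details the paper leaves implicit: you check that $Ag$ and $Bg^{-1}$ (resp.\ $g^{-1}B$) each lie in a single coset, you treat self-loops explicitly, and you state the condition correctly as every edge crossing the cut, whereas the paper's wording says ``no edges crossing the cut''.
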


\begin{proof}
In the symmetric case, the code is equivalently CSS via Hadamards if and only if there is a subset $T \subseteq G$ such that applying $\vec H$ on each qubit in $T$ transforms every stabilizer into either pure $\vec Z$ or pure $\vec X$.
By \Cref{lemma:Hadamard-CSS_union_of_cosets}, $T$ is a union of right cosets of $K_{AB}$.
For a given stabilizer $\mathbf{S}(g) = \mathbf{Z}(Ag) \mathbf{X}(Bg^{-1})$, we must Hadamard exactly one of $Ag$ or $Bg^{-1}$.
We are then forced to Hadamard the entire coset containing either $Ag$ or $Bg^{-1}$.
If we select, e.g., the coset containing $Ag$, then we cannot select the coset containing $Bg^{-1}$, and vice versa.
This rule applies for any $g$, and corresponds exactly to a bipartition of the vertices in $\G_{GAB}$ with no edges crossing the cut.
Conversely, if the graph is bipartite, Hadamarding one of the two sides of the cut results in a CSS code by construction.
The asymmetric case proof is analogous.
\end{proof}

\subsection{Choosing the group and subsets}

It is challenging to give generic recipes for constructing mirror codes that are guaranteed to have excellent properties.
Our codes are therefore found primarily by direct search, as discussed in \Cref{sec:code_search_benchmarking}.
However, there are certain fairly general choices of $(G, A, B)$ which provably yield mirror codes with poor properties.
We here discuss some of these choices.

As a starting point, we characterize the properties of $(G, A, B)$ which give distance at most 2.
This characterization relies on an analysis of when weight-2 Paulis commute with the stabilizers of the $(G, A, B)$ mirror code.
As before, $\D$ denotes the symmetric difference of sets in what follows.

\begin{lemma}[Centralizer elements of mirror codes] \label{lemma:centralizer_mirror_stabilizers}
Let $G$ be a finite group of order $n$ and $A, B \subseteq G$.
Let $\vec P = \vec Z(S) \vec X(T)$ be a Pauli, where $S, T \subseteq G$.
Then $\vec P$ commutes with all stabilizers of the $(G, A, B)$ mirror code if and only if \begin{align} \label{eq:commuting_Pauli_symmetric}
    \Delta_{u \in S} \; u^{-1} B = \D_{u \in T} \; A^{-1} u 
\end{align}
in the symmetric formulation, and \begin{align} \label{eq:commuting_Pauli_asymmetric}
    \Delta_{u \in S} \; B u^{-1} = \D_{u \in T} \; A^{-1} u 
\end{align}
in the asymmetric formulation.
\end{lemma}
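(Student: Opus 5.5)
The plan is to reduce commutation with each stabilizer to a parity condition indexed by $g \in G$, and then repackage the family of parity conditions, one for each $g$, as an equality of two subsets of $G$ via indicator functions over $\F_2$.

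First, fix $g \in G$ and compute the symplectic inner product of $\vec P = \vec Z(S)\vec X(T)$ with $\vec S(g) = \vec Z(Ag)\vec X(Bg^{-1})$ in the symmetric case. Anticommutations arise only where a $\vec Z$ of one operator meets an $\vec X$ of the other. A $\vec Y$ in either operator is just a coincident $\vec Z$ and $\vec X$, so the count remains additive. Hence $\vec P$ and $\vec S(g)$ commute iff
\begin{align}
    |S \cap B g^{-1}| + |T \cap A g| \equiv 0 \pmod 2 .
\end{align}
The next step rewrites each count as a sum over $u$ of an indicator in $g$. For $u \in S$ we have $u \in Bg^{-1}$ iff $ug \in B$ iff $g \in u^{-1}B$. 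For $u \in T$ we have $u \in Ag$ iff $g \in A^{-1}u$. Therefore
\begin{align}
    |S \cap Bg^{-1}| \equiv \sum_{u \in S} \mathbbm{1}_{u^{-1}B}(g), \qquad |T \cap Ag| \equiv \sum_{u \in T} \mathbbm{1}_{A^{-1}u}(g) \pmod 2 .
\end{align}

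Now use the elementary fact that, mod 2, the sum of indicators of a family of sets equals the indicator of their iterated symmetric difference. Thus the commutation condition at $g$ reads $\mathbbm{1}_{\D_{u\in S} u^{-1}B}(g) = \mathbbm{1}_{\D_{u\in T} A^{-1}u}(g)$. Requiring this for all $g \in G$ is exactly equality of the two sets, which is Eqn.~\eqref{eq:commuting_Pauli_symmetric}. Each step is an equivalence, so both directions follow at once.

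The asymmetric case is identical except for the $\vec X$-part $g^{-1}B$. Here $u \in g^{-1}B$ iff $gu \in B$ iff $g \in Bu^{-1}$, which gives Eqn.~\eqref{eq:commuting_Pauli_asymmetric}. The $\vec Z$-part $Ag$ is unchanged. The only point needing care is the bookkeeping of left versus right inverses in the non-abelian setting, together with checking that overlapping $\vec Z$/$\vec X$ supports (the $\vec Y$ cases) do not spoil additivity of the anticommutation count. Neither is a real obstacle, since the symplectic form is bilinear in the $\vec Z$ and $\vec X$ components.
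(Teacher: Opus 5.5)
Your proposal is correct and follows essentially the same route as the paper. You reduce commutation with each $\vec S(g)$ to the parity condition $|S \cap Bg^{-1}| + |T \cap Ag| \equiv 0 \pmod 2$. You rewrite the memberships as $g \in u^{-1}B$ and $g \in A^{-1}u$ (and $g \in Bu^{-1}$ in the asymmetric case), then identify the mod-$2$ indicator sums with iterated symmetric differences, whose equality for every $g$ is exactly the stated set identity.
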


\begin{proof}
We first work with the symmetric case.
Then $\vec P$ commutes with the $(G, A, B)$ symmetric mirror code if and only if $\forall g \in G$ \begin{align}
    |S \cap Bg^{-1}| + |T \cap Ag| \equiv 0 \pmod{2}  .
\end{align}
Note that $u \in Bg^{-1}$ is equivalent to $g \in u^{-1} B$.
Likewise, $u \in Ag$ is equivalent to $g \in A^{-1} u$.
Hence, the above condition can be re-written as \begin{align}
    \sum_{u \in S} \mathbbm{1}[g \in u^{-1} B] + \sum_{u \in T} \mathbbm{1}[g \in A^{-1} u] & \equiv 0 \pmod{2} \quad \forall g \in G.
\end{align}
This expression is equivalent to \begin{align}
    \mathbbm{1}\left[g \in (\Delta_{u \in S} \; u^{-1} B) \;\D\; (\D_{u \in T} \; A^{-1} u) \right] = 0 \quad \forall g \in G.
\end{align}
This condition holds exactly when the set in question, $(\Delta_{u \in S} \; u^{-1} B) \;\D\; (\D_{u \in T} \; A^{-1} u)$ is empty.
Thus, we obtain precisely Eqn.~\eqref{eq:commuting_Pauli_symmetric}.
The asymmetric proof proceeds identically with $u^{-1} B \leftrightarrow B u^{-1}$, giving Eqn.~\eqref{eq:commuting_Pauli_asymmetric}.
\end{proof}

For example, $\vec Z(\set{u, v})$ commutes with a symmetric $(G, A, B)$ mirror code's stabilizers if and only if $u^{-1} B = v^{-1} B$; this typically would not occur unless, say, $B = \langle v u^{-1} \rangle$.
A more interesting case is a Pauli $\vec Z_u \vec X_v$, which commutes with a symmetric $(G, A, B)$ mirror code's stabilizers if and only if $u^{-1} B = A^{-1} v$.

Ideally, we could apply \Cref{lemma:centralizer_mirror_stabilizers} to the study of which choices of $A, B$ produce mirror codes with very low weight (e.g. 2) logical operators.
However, this lemma alone is insufficient, because any given Pauli in \Cref{lemma:centralizer_mirror_stabilizers} shown to commute with all stabilizers could itself be a stabilizer.
Hence, we require a further result which ensures that some low-weight Pauli not only commutes with all stabilizers, but is not itself a stabilizer.
We specialize henceforth to the weight 2 case, and when $B$ takes the form $u A^{-1} v$.

\begin{lemma}[$A$ should not be an inverted translate of $B$] \label{lemma:A_B_inverted_translate}
Let $G$ be a finite group and $A, B \subseteq G$.
If $B = u A^{-1} v$ for some $u, v \in G$, then the $(G, A, B)$ symmetric mirror code (if valid) has either logical dimension $k = 0$ or distance $d \leq 2$.
If $B = A^{-1} w$ for some $w \in G$, then the $(G, A, B)$ asymmetric mirror code (if valid) has either $k = 0$ or $d \leq 2$.
\end{lemma}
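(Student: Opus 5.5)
We exhibit an explicit weight-2 Pauli $\vec P = \vec Z_x \vec X_y$ that commutes with every stabilizer, and then argue that either it is a nontrivial logical ($d \le 2$) or, if it and all its translates are stabilizers, the stabilizer group is too large and $k = 0$.

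\textbf{Symmetric case.} Suppose $B = uA^{-1}v$. By \Cref{lemma:centralizer_mirror_stabilizers}, $\vec Z_x \vec X_y$ commutes with all stabilizers if and only if $x^{-1}B = A^{-1}y$. We have $x^{-1}B = x^{-1}uA^{-1}v$, so we want $x^{-1}u A^{-1} v = A^{-1} y$. Take $x = u$, so the condition becomes $A^{-1}v = A^{-1}y$, which holds with $y = v$. Thus $\vec P := \vec Z_u \vec X_v$ is in the centralizer. Its weight is at most $2$; if $u = v$ it is a single $\vec Y$, of weight $1$.

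\textbf{Symmetry-generated family.} The argument also yields many such operators. Replacing $(u,v)$ by $(uh, h^{-1}v)$ leaves $B = uA^{-1}v$ unchanged, because $uhh^{-1}A^{-1}\cdots$ does not work directly. I would instead use the permutation gauge symmetries of \Cref{thm:permutation_gauge_symmetry} to translate. More concretely, I would check directly that $\vec Z_{ug}\vec X_{g^{-1}v}$ satisfies the centralizer condition for $g \in Z(G)$. Even without this, one operator suffices to reach the dichotomy via the following step.

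\textbf{Main step: the dichotomy.} If $\vec P \notin \CS$, then $d \le 2$ and we are done. Otherwise $\vec P = \vec Z_u \vec X_v$ is a stabilizer. The obstacle is to deduce from this that $k = 0$. My plan is to use the structure: $\vec P$ commutes with, and lies in, a group generated by the $\vec S(g)$. Every element of $\CS$ has the form $\prod_{g\in W}\vec S(g)$, with $\vec Z$-part $\Delta_{g\in W} Ag$ and $\vec X$-part $\Delta_{g\in W} Bg^{-1}$. For the $\vec Z$-part to equal $\{u\}$ we need $\mathbbm{1}_W * \mathbbm{1}_A = \delta_u$ in $\F_2[G]$. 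This means $\mathbbm{1}_A$ has an inverse in the group algebra, up to translation. Then, using the relation $B = uA^{-1}v$ (the "reversed" algebra element), the $\vec X$-part is also invertible. I would then show that the $n$ stabilizers are linearly independent, because the map $W \mapsto (\mathbbm{1}_W * A, \ldots)$ is injective when $A$ is a unit. This gives $n$ independent stabilizers on $n$ qubits, hence $k = 0$. This independence argument is the step I expect to be delicate, especially for non-abelian $G$, where left versus right multiplication must be tracked carefully.

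\textbf{Asymmetric case.} The asymmetric case with $B = A^{-1}w$ is handled in the same way using \Cref{eq:commuting_Pauli_asymmetric}. The condition $Bx^{-1} = A^{-1}y$ reads $A^{-1}wx^{-1} = A^{-1}y$, which is solved by $x = e$ and $y = w$. The same dichotomy then applies, with the unit argument carried out on the right.
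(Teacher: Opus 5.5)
Your proposal is correct and is essentially the paper's proof. It uses the same operator $\vec Z_u \vec X_v$ obtained from \Cref{lemma:centralizer_mirror_stabilizers}; your choice $x=e$, $y=w$ in the asymmetric case is the special case $u=e$, $v=w$. It also uses the same logical-or-stabilizer dichotomy and the same reason for $k=0$: a stabilizer with $\vec Z$-part $\{u\}$ makes the $\vec Z$-part map $\mathbbm{1}_W \mapsto \mathbbm{1}_A\mathbbm{1}_W$ on $\F_2[G]$ bijective. The paper phrases this as ``right translates $\{uh\} = \Delta_{g\in Ch} Ag$ give every singleton'', and you phrase it as ``$\mathbbm{1}_A$ is a unit''.

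The step you flag as delicate closes at once. The image of that map is closed under right translation, so it is all of $\F_2[G]$; the map is therefore injective and the $n$ generators are independent. Note that the $\vec Z$-part is $\mathbbm{1}_A\mathbbm{1}_W$, not $\mathbbm{1}_W * \mathbbm{1}_A$. The $\vec Z$-parts are identical in the asymmetric case, so no separate right-handed argument is needed. The garbled aside about translated families should simply be deleted.
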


\begin{proof}
Write $w = vu$ so that in the asymmetric case, $B u^{-1} = A^{-1} v$.
By \Cref{lemma:centralizer_mirror_stabilizers}, the assumption implies that $\vec Z_u \vec X_v$ commutes with all of the stabilizers of the $(G, A, B)$ (a)symmetric mirror code.
Hence, $\vec Z_u \vec X_v$ is either a stabilizer or a logical operator.
If it is a logical operator, we are done, as $d \leq 2$.
We claim that if $\vec Z_u \vec X_v$ is a stabilizer, then the code has logical dimension $k = 0$.
If $\vec Z_u \vec X_v$ were a stabilizer, then there exists $C \subseteq G$ such that \begin{align}
    \vec Z_u \vec X_v = \prod_{g \in C} \vec Z(Ag) \vec X (B g^{-1}) = \vec Z(\D_{g \in C} \; Ag) \vec X(\D_{g \in C} \; B g^{-1})
\end{align}
in the symmetric case, with $Bg^{-1} \rightarrow g^{-1} B$ in the asymmetric case.
Thus, in either case, \begin{align}
    \set{u} = \D_{g \in C} \; Ag .
\end{align}
For any $h \in G$ then, $\set{uh} = \D_{g \in C} \; Agh = \D_{g \in Ch} Ag$.
Thus, every singleton can be expressed as the $\vec Z$ component of some stabilizer.
But $\set{\vec Z_{uh}}_{h \in G}$ span the space of $\vec Z$ operators, which implies that the stabilizer subgroup has dimension $n$.
Since the logical dimension $k$ is $n$ minus the stabilizer subgroup dimension, we have $k = 0$.
\end{proof}

A loose interpretation of this result is that when constructing a mirror code, one should avoid choosing $A, B$ which are ``reflected translations'' of each other, i.e. where $A^{-1}$ and $B$ are related by some left and right translation.
While this relation may not seem generic, it appears quite naturally in a particularly interesting class of mirror codes, namely those which are equivalently CSS yet not via Hadamards.
Such codes can be mapped to a CSS code by some local Clifford, but have at least one stabilizer with a $\vec Y$ on some qubit.

\begin{proposition}[Balanced, equivalently CSS mirror codes with a $\vec Y$ are bad] \label{prop:codes_with_Y_suck}
Let $G$ be a finite group and $A, B \subseteq G$ with $|A| = |B|$.
If $(G, A, B)$ forms a valid (a)symmetric mirror code $C$ which is equivalently CSS (see \Cref{def:equivalently_CSS}) and which has a stabilizer containing a $\vec Y$ acting on some qubit $q \in G$, then $C$ either has logical dimension $k = 0$ or has distance $\leq 2$.
\end{proposition}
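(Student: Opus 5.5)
The plan is to use the presence of a $\vec Y$ together with the equivalently-CSS hypothesis to force $B$ to be an inverted translate of $A$. Then \Cref{lemma:A_B_inverted_translate} finishes the proof directly.

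\emph{Step 1: only two Paulis can appear on $q$.} By \Cref{thm:equivalently_CSS_2_colouring}, if $C$ is equivalently CSS, then for every qubit $q$ the Paulis $\vec P_g(q)$, as $g$ ranges over $G$, lie in $\set{\vec I, \ell_q(0), \ell_q(1)}$. So at most two distinct non-identity Paulis act on $q$ across all stabilizers. This is the same fact as step (1) of the recognition algorithm: no single-qubit Clifford can map three distinct Paulis into $\set{\vec X, \vec Z}$.

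\emph{Step 2: the symmetric case.} Fix the qubit $q$ on which some stabilizer acts as $\vec Y$. As computed in the discussion of the algorithm, the stabilizers placing a $\vec Z$-component on $q$ are indexed by $g \in A^{-1}q$. Those placing an $\vec X$-component are indexed by $g \in q^{-1}B$. The resulting Paulis are as follows:
\begin{itemize}
    \item $\vec Y$ on $A^{-1}q \cap q^{-1}B$, which is nonempty by hypothesis;
    \item $\vec Z$ on $A^{-1}q \setminus q^{-1}B$;
    \item $\vec X$ on $q^{-1}B \setminus A^{-1}q$.
\end{itemize}
By Step 1, one of the last two sets must be empty, so one of $A^{-1}q$ and $q^{-1}B$ contains the other. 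Both have cardinality $|A| = |B|$, so the containment is an equality: $A^{-1}q = q^{-1}B$. Rearranging gives $B = qA^{-1}q$, which has the form $uA^{-1}v$ with $u = v = q$. Now \Cref{lemma:A_B_inverted_translate} gives $k=0$ or $d \le 2$.

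\emph{Step 3: the asymmetric case.} Here the $\vec X$-placing stabilizers on $q$ are those $g$ with $q \in g^{-1}B$, i.e.\ $g \in Bq^{-1}$. The same argument gives $A^{-1}q = Bq^{-1}$, hence $B = A^{-1}q^2$. This is the form $A^{-1}w$ with $w = q^2$ required in the asymmetric half of \Cref{lemma:A_B_inverted_translate}, which again yields $k=0$ or $d\le 2$.

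The only delicate point is the bookkeeping of which index sets place which Pauli on $q$, since it differs between the symmetric and asymmetric constructions. Note also that the balance hypothesis $|A| = |B|$ is used exactly once, to upgrade a containment to an equality. Without it, the argument would only give $A^{-1}q \subseteq q^{-1}B$ or the reverse, which is why the statement assumes balance.
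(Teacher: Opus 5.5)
Your proposal is correct and matches the paper's proof essentially step for step. The restriction to at most two distinct Paulis on $q$ forces $A^{-1}q \subseteq q^{-1}B$ or the reverse (with $Bq^{-1}$ in place of $q^{-1}B$ in the asymmetric case); balance $|A|=|B|$ upgrades this to equality, and \Cref{lemma:A_B_inverted_translate} with $u=v=q$ (respectively $w=q^2$) concludes, with your appeal to \Cref{thm:equivalently_CSS_2_colouring} simply making explicit the two-Pauli step the paper asserts directly.
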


\begin{proof}
We first prove the symmetric case.
Suppose on some stabilizer $\vec S(g)$ there is a Pauli $\vec Y$ acting on qubit $q \in G$.
Since the code is equivalently CSS, there are at most 2 distinct Paulis acting on $q$ across all stabilizers, and thus there cannot be both a stabilizer which places only a $\vec X$ (i.e. without a $\vec Z$) on $q$ and a stabilizer which places only a $\vec Z$ on $q$.
If no stabilizer places only $\vec X$ operators on $q$, then every time an $\vec X$ is placed on $q$ a $\vec Z$ must also be placed on $q$.
That is, if $q \in Bg^{-1}$ then $q \in Ag$.
Now, $q \in Bg^{-1}$ if and only if $g \in q^{-1} B$, and $q \in Ag$ if and only if $g \in A^{-1} q$.
So equivalently, if $g \in q^{-1} B$ then $g \in A^{-1} q$, i.e. $q^{-1} B \subseteq A^{-1} q$.
If instead no stabilizer places only $\vec Z$ on $q$, then $A^{-1} q \subseteq q^{-1} B$.
In sum, either \begin{align}
    q^{-1} B \subseteq A^{-1} q \text{ or } A^{-1} q \subseteq q^{-1} B .
\end{align}
Now, further assuming that $|A| = |B|$, we instead have an exact equivalence: \begin{align}
    A^{-1} q = q^{-1} B .
\end{align}
Applying \Cref{lemma:A_B_inverted_translate} with $u = v = q$ then completes the proof.
In the asymmetric case, we instead have $A^{-1} q = B q^{-1}$, i.e. $B = A^{-1} q^2$, applying \Cref{lemma:A_B_inverted_translate} with $w = q^2$ again completes the proof.
\end{proof}

Our no-go theorem does not apply if $|A| \neq |B|$ because the poor distance is caused by an over-abundance of symmetry between $A$ and $B$.
The choice of $A, B$ such that $|A| = |B|$ is very natural however, especially for noise models that are unbiased between $\vec Z$ and $\vec X$ errors, because equal $\vec Z$ and $\vec X$ weight intuitively ``protects'' equally well against both error types.


\subsection{Relation with other quantum LDPC codes}

A fine-grained comparison of mirror codes with its immediate relatives---two-block group algebra (2BGA) codes and its notable children---is shown in \Cref{fig:mirror_classification}.
We here justify the containments and separations given in the figure.

\begin{proposition}[Normal 2BGA codes are mirror codes]
Let $G$ be a group and $A, B \subseteq G$ form a $(G, A, B)$ 2BGA code $C_{G, A, B}$.
If $A, B$ are normal subsets, then with \begin{align}
    G' := \Z_2 \times G ,\; A' := \set{0} \times A ,\; B' := \set{1} \times B^{-1} ,
\end{align}
$(G', A', B')$ forms a valid mirror code which is equivalent to $C_{G, A, B}$ up to a Hadamard transform on a subset of qubits in $C_{G, A, B}$ and a qubit permutation.
Note that $A', B'$ are normal subsets of $G'$, and thus the $(G', A', B')$ symmetric and asymmetric mirror code constructions are identical.
If $G$ is abelian, then so is $G'$.
\end{proposition}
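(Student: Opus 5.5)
The plan is a direct computation: write out the stabilizers of the $(G',A',B')$ mirror code, Hadamard one of the two copies of $G$, and match the resulting pure-$\vec Z$ and pure-$\vec X$ checks against the checks of $C_{G,A,B}$ under an explicit qubit relabelling.

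\textbf{Step 1: validity and normality.} If $A$ is a union of conjugacy classes of $G$, then $\set{0}\times A$ is a union of conjugacy classes of $\Z_2\times G$. Likewise, since $B^{-1}$ is normal (as shown in the proof of \Cref{lemma:normality_sufficient_AB_mirror_codes}), $\set{1}\times B^{-1}$ is normal. So \Cref{lemma:normality_sufficient_AB_mirror_codes} applies: the code is valid and the symmetric and asymmetric forms coincide. Abelianness of $G'$ when $G$ is abelian is immediate.

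\textbf{Step 2: the mirror stabilizers.} Label qubits $(s,x)$ with $s\in\Z_2$, and call the copy $\set{s}\times G$ layer $s$. Since $(s,g)^{-1}=(s,g^{-1})$ in $\Z_2\times G$,
\begin{align}
    \vec S((0,g)) &= \vec Z(\set{0}\times Ag)\,\vec X(\set{1}\times B^{-1}g^{-1}) ,\\
    \vec S((1,g)) &= \vec Z(\set{1}\times Ag)\,\vec X(\set{0}\times B^{-1}g^{-1}) .
\end{align}
The $\vec Z$ and $\vec X$ parts always lie on opposite layers, so no $\vec Y$'s appear.

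\textbf{Step 3: Hadamard layer $1$.} Applying $\vec H$ to every qubit of layer $1$ gives $n$ pure-$\vec Z$ checks $\vec Z(\set{0}\times Ag)\vec Z(\set{1}\times B^{-1}g^{-1})$ and $n$ pure-$\vec X$ checks $\vec X(\set{0}\times B^{-1}g^{-1})\vec X(\set{1}\times Ag)$. Alternatively, \Cref{thm:Hadamard-CSS_abelian_mirror}(2) with $\phi$ the projection onto the $\Z_2$ factor certifies Hadamard-CSS-ness; the explicit form above is what the matching needs.

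\textbf{Step 4: match with the 2BGA checks.} I will use the Lin--Pryadko convention: $H_X=[\,a\mid b\,]$ and $H_Z=[\,b^*\mid a^*\,]$, with $a,b\in\F_2[G]$ the indicators of $A,B$ and $^*$ the antipode $g\mapsto g^{-1}$. For normal subsets, $a$ and $b$ are central in $\F_2[G]$, so left versus right regular representations do not matter. Concretely, the $X$-check indexed by $h$ acts on $Ah$ in the left block and $Bh$ in the right block. The $Z$-check acts on $B^{-1}h$ in the left block and $A^{-1}h$ in the right block.

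Now apply the inversion permutation $x\mapsto x^{-1}$ on layer $0$, and use normality in the form $(B^{-1}g^{-1})^{-1}=gB=Bg$ and $(Ag)^{-1}=g^{-1}A^{-1}=A^{-1}g^{-1}$.
\begin{itemize}
    \item The $\vec X$-checks become $\vec X(\set{0}\times Bg)\,\vec X(\set{1}\times Ag)$. Swapping the two layers' roles, this is exactly the 2BGA $X$-check with index $h=g$, provided layer $1$ plays the left block and layer $0$ the right block.
    \item The $\vec Z$-checks become $\vec Z(\set{0}\times A^{-1}g^{-1})\,\vec Z(\set{1}\times B^{-1}g^{-1})$. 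With the same layer assignment and $h=g^{-1}$, these are exactly the 2BGA $Z$-checks.
\end{itemize}
Since the check sets coincide as sets, the codes are equal up to this permutation composed with the Hadamards, and any phases are irrelevant.

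\textbf{Main obstacle.} The substance is bookkeeping of conventions. The 2BGA definition may use left and right actions, or $B$ versus $B^{-1}$, in a different order, and the inversions must line up. If the paper's 2BGA convention differs, I would first try absorbing the discrepancy into the qubit permutation, for example by inverting the other layer or swapping layers. Failing that, I would absorb it into the gauge freedom of \Cref{thm:permutation_gauge_symmetry}. Normality is used exactly to commute $g$ past $A$ and $B$ in every such rearrangement.
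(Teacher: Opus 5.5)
Your proposal is correct and follows essentially the same route as the paper: an explicit stabilizer-by-stabilizer match obtained by Hadamarding one copy of $G$ and inverting the labels on one copy, with normality used to commute $g$ past $A$ and $B$.

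The differences are only bookkeeping. The paper's 2BGA convention puts $(Ag, gB)$ on the $\mathbf{Z}$-checks, which is the $\mathbf{X}\leftrightarrow\mathbf{Z}$ swap of your Lin--Pryadko convention and is absorbed by Hadamarding layer $0$ instead of layer $1$; the paper also inverts layer $1$ rather than layer $0$. You additionally spell out why $\{0\}\times A$ and $\{1\}\times B^{-1}$ are normal in $\mathbb{Z}_2\times G$, which the paper only asserts.
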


\begin{proof}
The stabilizers of the $(G, A, B)$ 2BGA code are given by \begin{align}
    \mathbf{S}_{\mathbf{Z}}(g) = \mathbf{Z}_L(Ag) \mathbf{Z}_R(gB) ,\; \mathbf{S}_{\mathbf{X}}(g) = \mathbf{X}_L(g B^{-1}) \mathbf{X}_R(A^{-1} g) 
\end{align}
for each $g \in G$.
Since $A, B$ are normal subsets, \Cref{lemma:normality_sufficient_AB_mirror_codes} implies that $A^{-1}, B^{-1}$ are also normal and that the $(G', A', B')$ mirror code is identical whether symmetric or asymmetric; we use the symmetric construction.
This code has stabilizers \begin{align}
    \mathbf{S}((0, g)) = \mathbf{Z}(\set{0} \times Ag) \mathbf{X}(\set{1} \times B^{-1} g^{-1}) ,\; \mathbf{S}((1, g)) = \mathbf{Z}(\set{1} \times Ag) \mathbf{X}(\set{0} \times B^{-1} g^{-1}) 
\end{align}
for each $g \in G$.
We give a sequence of transformations from the former set of stabilizers into the latter.
First, we lift $G$ to $G' = \Z_2 \times G$ and re-label qubits such that the left qubit associated with group element $g$ becomes the qubit associated with $(0, g)$; the right qubit associate with $g$ becomes $(1, g)$.
Next, we apply a Hadamard $\mathbf{H}$ on all qubits associated with group elements of the form $(1, g)$.
The 2BGA stabilizers thus become \begin{align}
    \mathbf{S}_{\mathbf{Z}}'(g) = \mathbf{Z}(\set{0} \times Ag) \mathbf{X}(\set{1} \times gB) ,\; \mathbf{S}_{\mathbf{X}}'(g) = \mathbf{X}(\set{0} \times g B^{-1}) \mathbf{Z}(\set{1} \times A^{-1} g) .
\end{align}
Finally, we apply a permutation on the physical qubits.
Presently, each qubit is labeled by the function $q(b, g) = (b, g)$.
We permute such that $q(0, g) = (0, g)$ and $q(1, g) = (1, g^{-1})$.
This permutation yields stabilizers \begin{align}
    \mathbf{S}_{\mathbf{Z}}''(g) = \mathbf{Z}(\set{0} \times Ag) \mathbf{X}(\set{1} \times B^{-1} g^{-1}) ,\; \mathbf{S}_{\mathbf{X}}''(g) = \mathbf{X}(\set{0} \times g B^{-1}) \mathbf{Z}(\set{1} \times g^{-1} A) .
\end{align}
We observe now that $\mathbf{S}_{\mathbf{Z}}''(g) = \mathbf{S}((0, g))$, and that, by normality of $A, B^{-1}$, $\mathbf{S}_{\mathbf{X}}''(g) = \mathbf{S}((1, g^{-1}))$.
Therefore the sets of stabilizers are identical.
\end{proof}

Note that a bivariate bicycle (BB) code is an abelian 2BGA code with group $G = \Z_\ell \times \Z_m$, while a generalized bicycle (GB) code is an abelian 2BGA code with group $G = \Z_\ell$.
This completes the containments shown in \Cref{fig:mirror_classification}.

Generally, a 2GBA code can be defined from a group $G$ and any two subsets $A, B$.
However, there also exists 2BGA codes which are not mirror codes of either type; we prove this result in \Cref{app:counterexamples_equivalences}.

\begin{proposition}[Not all 2BGA codes are mirror codes] \label{prop:2BGA_not_in_mirror}
    There exists $(G, A, B)$, wherein either $A$ or $B$ is not a normal subset of $G$, such that the no $(G', A', B')$ (a)symmetric mirror code yields the same set of stabilizers.
\end{proposition}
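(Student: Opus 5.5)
The plan is to exhibit an explicit small counterexample and rule out every mirror presentation by an invariant that every mirror code must satisfy. I would take $G = S_3$ with $A = \set{e, (12)}$ and $B = \set{e, (13)}$, neither of which is a normal subset. Here "same set of stabilizers" is read as equality of the generating set up to the qubit permutation and local Clifford freedom used throughout the section; otherwise the claim is trivial, since mirror stabilizers are never pure $\vec Z$. The resulting 2BGA code has $n = 12$ qubits and $24$ generators, namely $\mathbf{S}_{\mathbf{Z}}(g) = \mathbf{Z}_L(Ag)\mathbf{Z}_R(gB)$ and $\mathbf{S}_{\mathbf{X}}(g) = \mathbf{X}_L(gB^{-1})\mathbf{X}_R(A^{-1}g)$, so any candidate mirror code must use a group $G'$ of order $12$. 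I would first compute the 2BGA generating set explicitly and check that it contains no repeated elements, even up to phase. This gives $24$ distinct generators.

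The first obstruction is a counting invariant. A $(G', A', B')$ mirror code is presented by exactly $|G'| = n$ generators $\mathbf{S}(g)$, whereas a 2BGA code is presented by $2|G|$ generators on $2|G|$ qubits. Since every 2BGA code has $2n$ generators against the mirror code's $n$, I do not think this count alone can carry the proof. The natural move is to compare presentations as multisets. If the 2BGA generators are pairwise distinct and number $24 \neq 12$, no mirror tableau can coincide with them as a set, even after permutation and local Cliffords, because both operations act bijectively on generators.

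If the intended notion instead compares stabilizer groups, or generating sets up to dependencies, I would switch to a finer invariant. A mirror code has a transitive automorphism structure: $G'$ acts regularly on the qubits through the relabeling $h \mapsto hg$, which carries $\mathbf{S}(x)$ to $\mathbf{S}(xg)$ in the symmetric case after the inversion trick of \Cref{lemma:equivalence_alt_constructions}. It also has a fixed profile of Pauli types per generator, with weights at most $|A'|$ for $\vec Z$ and $|B'|$ for $\vec X$. I would then compute, for the chosen 2BGA code, the multiset of per-qubit Pauli patterns across generators, in the spirit of the $X_q, Y_q, Z_q$ sets in the CSS test. I would show that, after local Clifford normalization, the left and right qubit halves exhibit incompatible local structure, so that no regular group action on the qubits preserves the tableau. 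Equivalently, I would try a direct finite search over the five groups of order $12$ and all $A', B'$ with $|A'| + |B'| = 4$, reduced by the gauge symmetries of \Cref{thm:permutation_gauge_symmetry} and \Cref{thm:LC_gauge_symmetries}, and verify that none matches.

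The main obstacle is pinning down precisely which equivalence notion the statement intends, and then proving non-equivalence rigorously rather than by search. My first attempt would be the transitivity argument. A mirror tableau's automorphism group, acting on qubits, contains a subgroup acting regularly and permuting the generators. So I would compute the permutation automorphism group of the chosen 2BGA tableau and show it does not act transitively on the $12$ qubits, for instance because of a qubit-distinguishing invariant such as the overlap counts between the generators supported on it. Since non-normality of $B$ breaks the left/right swap that held in the normal case, I expect the $L$ and $R$ halves to be distinguishable.
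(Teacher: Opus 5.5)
There is a genuine gap: neither of your two concrete obstructions separates your example from mirror codes.

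\textbf{The counting argument rests on a miscount.} With $G=S_3$ there are $|G|=6$ generators of each type, so the 2BGA code has $12$ generators on $12$ qubits. That is exactly as many as a mirror code on a group of order $12$. In general a 2BGA code on $G$ has $2|G|$ generators on $2|G|$ qubits. The paper's proof that normal 2BGA codes are mirror codes matches the two tableaux bijectively, via $\mathbf{S}''_{\mathbf{Z}}(g)=\mathbf{S}((0,g))$ and $\mathbf{S}''_{\mathbf{X}}(g)=\mathbf{S}((1,g^{-1}))$. Any argument that never uses non-normality, as a count does not, would contradict that containment.

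\textbf{The transitivity invariant is also false.} Relabeling qubits by $x\mapsto xg$ sends $\mathbf{S}(h)=\mathbf{Z}(Ah)\mathbf{X}(Bh^{-1})$ to $\mathbf{Z}(Ahg)\mathbf{X}(Bh^{-1}g)$. For generic $A,B$ this is a generator $\mathbf{S}(h')$ only if $h'=hg$ and $Bh^{-1}g=Bg^{-1}h^{-1}$. Already for abelian $G'$ this forces $2g=0$. The $\mathbf{Z}$ and $\mathbf{X}$ parts translate in opposite directions (this is the ``mirror''), and passing to the form $\mathbf{Z}(gA)\mathbf{X}(g^{-1}B)$ of \Cref{lemma:equivalence_alt_constructions} does not change that. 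The only permutation symmetries available uniformly in $A,B$ are the maps $\pi(x)=u\varphi(x)$ of \Cref{thm:permutation_gauge_symmetry}. These preserve the code essentially only when $u\varphi(A)v=A$ and $u\varphi(B)v^{-1}=B$, so a generic mirror code has a small, intransitive automorphism group. Showing that your 2BGA code's automorphism group is intransitive therefore rules out nothing.

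\textbf{The missing idea is an invariant of the stabilizer group rather than of a tableau.} The paper proves the statement in \Cref{thm:code_family_inequivalences} using the weight enumerator $W(\mathcal{S})=\sum_j w_j x^j$ of \Cref{def:weight_enum}. This is invariant under change of generating set, qubit permutations, and local Cliffords, which is the same freedom under which normal 2BGA codes are mirror codes. The paper computes it for a specific 2BGA code and checks numerically that it matches no mirror code of the same length.

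Your search fallback has the right shape but compares the wrong objects. Matching tableaux with $|A'|+|B'|=4$ only addresses a weaker, tableau-level notion, and even there $\mathbf{Y}$ overlaps allow $|A'|+|B'|>4$. Under group-level equivalence, a mirror code with different $|A'|,|B'|$ could generate the same group.

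To turn your plan into a proof, compare a generating-set-independent invariant such as $W(\mathcal{S})$ against all $(G',A',B')$ with $|G'|=12$, reduced by the gauge symmetries. Until that is done, you also have no evidence that your particular $S_3$ example is not in fact a mirror code.
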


At the same time, since there exists mirror codes of either type which are not equivalently CSS, mirror codes are also not contained in 2BGA codes.

\section{Fault-Tolerant Syndrome Extraction} \label{sec:FT_SEC}

When doing syndrome extraction in practice, it is desirable to do this fault-tolerantly, where the circuit distance of the whole syndrome extraction circuit is still equal to $d$, the distance of the code.
However, fully fault-tolerant circuits might require many ancillary qubits and might lower the pseudothreshold of a code.
For sufficiently small physical error rate $p$, a fully fault-tolerant circuit will always perform better and achieve a lower logical error rate $p_L$ than a non-fault tolerant one.
However, for many practical ranges of $p$, sometimes a less fault-tolerant syndrome extraction circuit will have a lower $p_L$, due to the added errors introduced by the operations used in making the circuit fault-tolerant.
These things also come as a trade-off: one can slightly increase their circuit distance by adding some flags without making the circuits fully fault-tolerant.
This would have the advantage of being less expensive than full-fault tolerance in terms of ancilla overhead, and might be desirable depending on the size of one's architecture or their current physical error rates.

One of the most efficient ways to implement a fault-tolerant syndrome extraction circuit is to carefully order the controlled operations in a stabilizer measurement that would normally be less fault tolerant, making sure that any potential hook error that propagates to multiple qubits does so in a way that does not reduce the circuit distance.
Proving the fault tolerance of this is difficult in general, but can be done for the surface and toric codes.
A carefully-scheduled bare ancilla syndrome extraction circuit is fault tolerant for the surface code~\cite{surface2, surface3, surface4, surface1}.
For larger or higher weight codes, it can be difficult to prove that the syndrome extraction circuit has a particular circuit distance, so for a given schedule, the circuit distance can be estimated~\cite{BB}.
Even more efficient syndrome extraction circuits can exist, but they start to depend very sensitively on the exact stabilizers of the code being used.

The ultimate comparison is the tradeoff between \textit{qubit overhead per logical qubit}, and the \textit{logical error rate per logical qubit per round}.
Specifically, if one incorporates their choice of code, syndrome extraction circuit, and decoder, all for a given noise model, one can compare this tradeoff across various codes and circuits for different physical error rates.
The qubit overhead is the total number of qubits used in the entire computation, including both data qubits and check qubits.
The overhead per logical qubit is the total overhead divided by the number of logical qubits.
This quantity is relevant as it allows us to determine the total number of physical qubits needed to perform a given logical computation.
The logical error rate per logical qubit per round is a normalized quantity which computes the probability that the outcome of a given logical operator would change during a round of syndrome extraction.
This is not computed strictly by division, but is instead derived from simulated net outcomes after several rounds.

\begin{figure}[h]
    \centering
    \includegraphics[width=\linewidth]{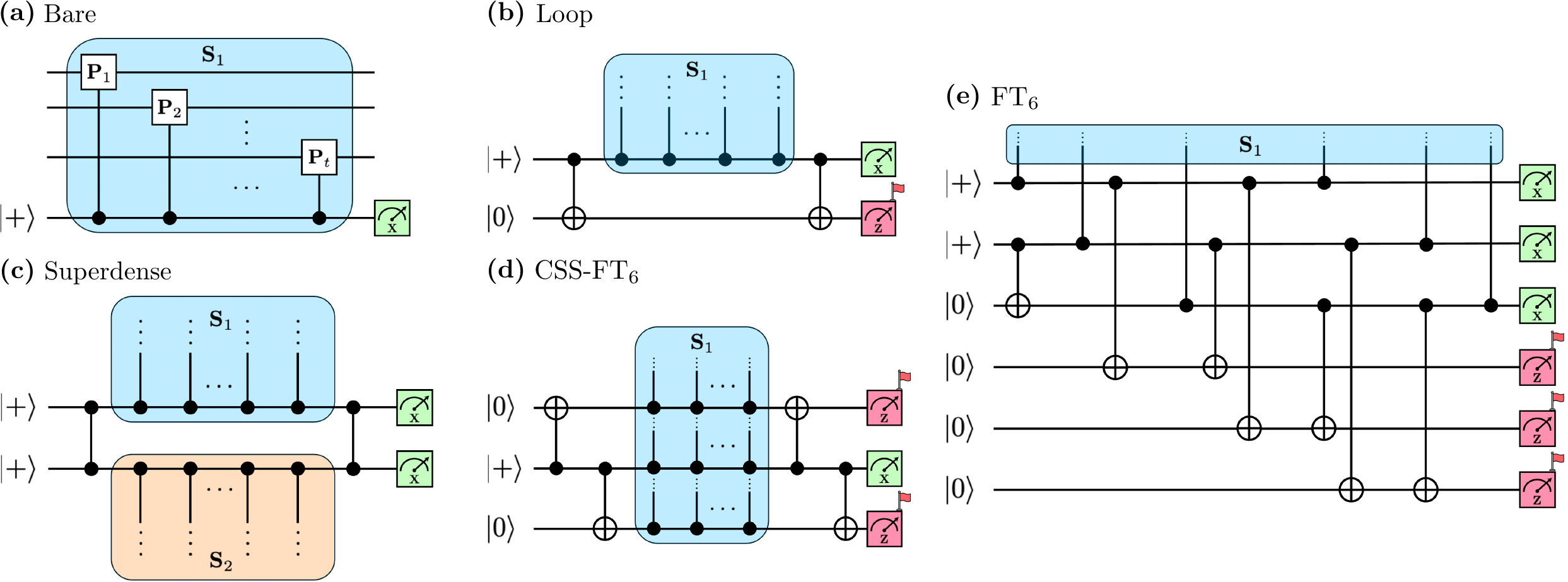}
    \caption{Five syndrome extraction circuits offering various levels of fault tolerance. Colours indicate the basis of the measurement. Measurements in the $\mathbf{X}$ basis are used to compute syndromes, while $\mathbf{Z}$ basis measurements are flags, which should return the $\bra{0}$ outcome if no faults occurred. These are also denoted with flags. \textbf{(a)} The bare ancilla syndrome extraction circuit is the most efficient but least fault tolerant circuit. \textbf{(b)} The loop circuit uses a single flag qubit to detect if any errors occur on the ancilla qubit that might propagate to multiple data qubits. In this circuit and later ones, the dots with dotted lines indicate controlled Pauli operations just like in the bare ancilla circuit. This is more fault tolerant than the bare ancilla circuit. \textbf{(c)} The superdense circuit pairs up stabilizers, using them to flag each other. If an error happens that might propagate to the qubits of one stabilizer, this will flip the other stabilizer's syndrome. The top (bottom) measurement reports the syndrome outcome for $\mathbf{S}_1$ ($\mathbf{S}_2$). \textbf{(d)} The $\text{CSS-FT}_6$ circuit uses two flag qubits and distributes the controlled operations among its three ancilla qubits. This is fault tolerant when specifically measuring an all-$\mathbf{X}$ stabilizer of a CSS code with weight $\leq6$. To measure an all-$\mathbf{Z}$ stabilizer of such a code, everything must be done in the other basis, which means swapping the controls and targets of the $\leq6$ external and 4 internal CNOT gates, and changing the state preparation and measurement bases from $\mathbf{X}$ to $\mathbf{Z}$ and vice versa. In circuits (a)-(d), the $\mathbf{X}$ measurements compute the stabilizer's syndrome. \textbf{(e)} This circuit is fully fault-tolerant for any stabilizer code of weight $\leq6$.
    The circuit has been laid out so each operation that involves a data qubit is at a different timestep.
    If it is desirable, the three flags could be measured by a single reused qubit to bring the ancilla count down from 6 to 4. The stabilizer's syndrome is computed by adding the parities of the three $\mathbf{X}$ measurements.}
    \label{fig:FTcirc}
\end{figure}

In this section, we present several circuits for syndrome extraction and discuss their overheads and fault tolerance properties.
We start with well-established circuits.
All of the circuits discussed here are shown in \cref{fig:FTcirc}.

A baseline starting point is the \textit{bare} ancilla syndrome extraction circuit.
This uses no extra two-qubit gates and 1 qubit per stabilizer.
It offers no fault tolerance properties aside from those that can be gained by a carefully chosen schedule, but such improvements are common to all the circuits we present.
Additionally, all the circuits we will discuss require us to use one controlled operation for each Pauli in the stabilizer, so we discuss only their additional two-qubit gate count, as they will all use at least $w$ such gates, where $w$ is the stabilizer weight.
The bare ancilla circuit is vulnerable to a Pauli $X$ error flipping the ancilla qubit somewhere between the $2^\text{nd}$ and $(w-1)^\text{th}$ controlled operation, and thus propagating to 2 or more data qubits.
If $w\leq3$, this is not a concern.

The next step up is adding a single flag to this circuit, resulting in a single \textit{loop}, a well-known technique for efficiently increasing the circuit distance~\cite{chamberland2018flag}.
This ancillary qubit acts as a detector that expects the measurement outcome $\bra{+}$, and will thus tell the decoder that an error has occurred if this outcome is not observed.
Thus, we would require two errors for this to go undetected and to propagate errors to multiple outputs.
This uses 2 ancilla qubits and 2 CNOT gates per stabilizer.
This is fault-tolerant for CSS codes where $w\leq5$.

An interesting variation of the loop circuit is the \textit{superdense} syndrome extraction circuit, inspired by one of the same name for colour codes~\cite{superdense}.
This circuit combines two stabilizers to measure their respective Paulis on the data qubits but also to flag any faults that occur during the computation of the paired stabilizer.
This is particularly helpful if the stabilizers overlap heavily, as they do in the colour code.
As in the loop case, this requires multiple errors to cause a fault that propagates to more data qubits.
This uses 1 ancilla qubit and 1 CZ gate per stabilizer.

We now move on to the fully fault-tolerant circuits.\footnote{To our knowledge, these constructions are original. These were developed stemming from initial discussions with Mackenzie Shaw and using the formalism of \textit{Fault Tolerance by Construction}~\cite{FTbC}.}
The fault tolerance of these circuits was proved using the methods of this formalism.
The first of these involves adding two flags to the ``main'' ancillary qubit, the one whose outcome determines the syndrome.
Here, the controlled operations are spread across the three qubit lines, and the four combinations of the two flags' outcomes will tell us which of the three qubit lines experienced a fault, if any.
If $w\leq6$ and no more than two controlled operations are placed on each qubit line, the circuit \textit{CSS-$\text{FT}_6$} will be fault-tolerant for CSS codes when measuring an all-$\mathbf{X}$ stabilizer.
This follows from the fact that the ZX-calculus version of a CNOT gate under edge flip noise is fault-equivalent to a circuit CNOT in a CSS noise model, but not under circuit-level noise~\cite{FTbC}.
To fault-tolerantly measure the all-$\mathbf{Z}$ stabilizers of a weight-6 CSS code, we simply apply a change of basis to every part of the circuit.
The CNOT gates in the middle now have the controls on the data qubits and the targets on the ancilla qubits, the 4 CNOT gates on the ancilla qubits also switch control and target qubits, the prepared states switch from $\ket0$ to $\ket+$ and vice versa, and we change the basis of the measurements to the other of $\mathbf{X}$ and $\mathbf{Z}$.
This circuit uses 4 CNOT gates and 3 ancilla qubits per stabilizer.
By exhaustive analysis, we have verified that this circuit is the optimal fault-tolerant circuit for weight 6 CSS codes in terms of both CNOT count and the number of ancilla qubits.
Note, more efficient code-specific circuits are possible, but they will not be fault-tolerant for all possible weight 6 CSS codes.

Our last circuit is a fully fault-tolerant circuit for measuring stabilizers of weight $\leq6$ for arbitrary stabilizer codes, including non-CSS ones.
This \textit{$\text{FT}_6$} circuit is laid out to avoid multiple simultaneous controlled operations on the data, as $n$ stabilizers of weight $w$ cannot be measured faster than in $w$ rounds regardless.
The syndrome of the stabilizer is determined by adding the parities of the three $\mathbf{X}$ measurements.
This circuit uses 7 CNOT gates and 6 ancilla qubits per stabilizer.
Depending on the available architecture, the ancilla count can be reduced to 4 per stabilizer by reusing the flag qubit for its three measurements.
If measurements are very slow or resonator idling noise is a concern, it might be worthwhile to separate the flags onto separate qubits.

To actually schedule all of the syndrome extraction circuits to run in parallel, we use a SAT-solver to find a low-depth solution to make all the circuits measuring stabilizers commute.
We now have all of the pieces we need to perform a detailed, end-to-end search and analysis of mirror codes.

\section{Code Search and Benchmarking}
\label{sec:code_search_benchmarking}

A naive algorithm for listing all possible abelian mirror codes is very straightforward.
For a given $|A|$ and $|B|$, we can simply iterate over $n$, find all possible ways to factor $n$ into powers of primes to create $G$, and examine all possible subsets $A,B\subseteq G$ of the appropriate size.
Even for the non-abelian case, we merely have to list all possible non-abelian groups $G$, which we do using \texttt{GAP}, and then enumerate all possible $A$ and $B$ as before.
The only other detail in the non-abelian case is to make sure to generate both the symmetric and asymmetric mirror code.

Such an algorithm will produce many equivalent mirror codes as detailed in~\cref{thm:permutation_gauge_symmetry}, to increase the maximum tractable value of $n$, we wish to skip generating as many codes as possible that would be equivalent to those we have seen before.
We used various such optimizations to perform an (almost) exhaustive search of codes with $|A|=|B|=3$ up to 300 qubits for abelian codes and 150 qubits for non-abelian ones.
For other values of $|A|$ and $|B|$, we searched up to various other values of $n$.

The full details of these optimizations are technical, but we outline them briefly here using an example in the abelian case.
Suppose we are searching for codes for the group $G=\mathbb{Z}_{2}\times\mathbb{Z}_{2}\times\mathbb{Z}_{4}\times \mathbb{Z}_{3}\times\mathbb{Z}_{3}$
First, we split $G$ into subgroups each of is a product of cyclic groups of sizes given by powers of a single prime.
In this case, these would be $\mathbb{Z}_{2}\times\mathbb{Z}_{2}\times\mathbb{Z}_{4}$ and $\mathbb{Z}_{3}\times\mathbb{Z}_{3}$.
This is done as automorphisms of abelian groups act independently on each such piece of $G$, allowing us to consider the automorphisms separately and combine the results together.
Specifically, by storing all possible candidate substrings of $A$ and $B$ for each such piece of $G$, we can reuse these candidates again for the next group which has the same factor.
For example, for the subgroup $\mathbb{Z}_{2}\times\mathbb{Z}_{2}\times\mathbb{Z}_{4}$, we would store candidate ``parts'' of $A$ and $B$, where the entries in these parts would only have 3 terms instead of 5 as expected for $G$.
If we later wish to investigate code with the group $\mathbb{Z}_{2}\times\mathbb{Z}_{2}\times\mathbb{Z}_{4}\times\mathbb{Z}_5$, the same candidate parts can be reused.

For generating each such part, we remember that we only keep a code in canonical form, that is one where the terms are lexicographically smallest.
Thus we ensure $A$ begins with $(0,0,...,0)$.
For all later terms, we explore them using a nested depth-first search.
At each step, we consider what automorphisms fix the set of terms we have in our current list, and we recurse by considering what entries are lexicographically that our current ones but also fixed or made larger by these automorphisms.
This ensures that no automorphism can make our list lexicographically smaller.
We also do this across all permutations of our final candidate list of vectors to make sure that our ordering is lexicographically minimal even when treating $A$ and $B$ as sets rather than lists.
When it comes time to add the first entry of $B$, we recognize that since $(A,B)\to(A+g,B+g)$ is a relabeling of the qubits and $(A,B)\to(A+g,B-g)$ is a relabeling of the stabilizers, we are free to adjust $B$ by any $2g$, fixing its first entry to have only 0s and 1s if the prime base of the subgroup is 2 and only 0s otherwise.
To combine these candidate parts across subgroups, we try all possible combinations, stopping early if we find ourselves building a string that is not lexicographically sorted across some corresponding permutation.

We showcase some abelian mirror codes in~\cref{tab:all_codes}, with their $(G, A, B)$ constructors and $\llbracket n,k,d\rrbracket$ code parameters, as well as whether or not they are equivalently CSS via Hadamards.

\medskip

\begingroup
\renewcommand{\tabularxcolumn}[1]{>{\centering\arraybackslash}m{#1}}
\setlength{\tabcolsep}{3pt}
\renewcommand{\arraystretch}{1.08}
\setlength{\LTpre}{0pt}
\setlength{\LTpost}{0pt}
\setlength{\LTleft}{0pt}
\setlength{\LTright}{0pt}
\footnotesize
\begin{tabularx}{\textwidth}{|>{\centering\arraybackslash}m{0.05\textwidth}|>{\centering\arraybackslash}m{0.05\textwidth}|>{\centering\arraybackslash}m{0.05\textwidth}|>{\centering\arraybackslash}m{0.08\textwidth}|>{\centering\arraybackslash}m{0.05\textwidth}|>{\centering\arraybackslash}m{0.05\textwidth}|X|X|X|}
\hline
$n$ & $k$ & $d$ & CSS? & $w_X$ & $w_Z$ & $G$ & $A$ & $B$ \\
\hline\hline
\endfirsthead
\hline
$n$ & $k$ & $d$ & CSS? & $w_X$ & $w_Z$ & $G$ & $A$ & $B$ \\
\hline\hline
\endhead
\hline
\multicolumn{9}{r}{\emph{Continued on next page}} \\
\endfoot
\endlastfoot
30 & 4 & $5$ & \texttt{Yes} & 2 & 3 & $\mathbb{Z}_{2}\times\allowbreak \mathbb{Z}_{3}\times\allowbreak \mathbb{Z}_{5}$ & $(0,0,0)$,\allowbreak\ $(0,0,1)$ & $(1,0,0)$,\allowbreak\ $(1,1,0)$,\allowbreak\ $(1,2,2)$ \\
\hline
36 & 4 & $6$ & \texttt{Yes} & 2 & 3 & $\mathbb{Z}_{4}\times\allowbreak \mathbb{Z}_{3}\times\allowbreak \mathbb{Z}_{3}$ & $(0,0,0)$,\allowbreak\ $(2,0,1)$ & $(1,0,0)$,\allowbreak\ $(1,1,0)$,\allowbreak\ $(1,2,1)$ \\
\hline\hline
16 & 4 & $4$ & \texttt{Yes} & 2 & 4 & $\mathbb{Z}_{16}$ & $(0)$,\allowbreak\ $(4)$ & $(1)$,\allowbreak\ $(3)$,\allowbreak\ $(5)$,\allowbreak\ $(11)$ \\
\hline
24 & 6 & $4$ & \texttt{Yes} & 2 & 4 & $\mathbb{Z}_{8}\times\allowbreak \mathbb{Z}_{3}$ & $(0,0)$,\allowbreak\ $(2,0)$ & $(1,0)$,\allowbreak\ $(1,1)$,\allowbreak\ $(3,0)$,\allowbreak\ $(5,1)$ \\
\hline
36 & 3 & $7$ & \texttt{No} & 2 & 4 & $\mathbb{Z}_{4}\times\allowbreak \mathbb{Z}_{9}$ & $(0,0)$,\allowbreak\ $(0,1)$ & $(0,0)$,\allowbreak\ $(1,1)$,\allowbreak\ $(2,3)$,\allowbreak\ $(3,2)$ \\
\hline
42 & 6 & $6$ & \texttt{Yes} & 2 & 4 & $\mathbb{Z}_{2}\times\allowbreak \mathbb{Z}_{3}\times\allowbreak \mathbb{Z}_{7}$ & $(0,0,0)$,\allowbreak\ $(0,0,1)$ & $(1,0,0)$,\allowbreak\ $(1,0,2)$,\allowbreak\ $(1,1,0)$,\allowbreak\ $(1,1,3)$ \\
\hline
44 & 4 & $7$ & \texttt{Yes} & 2 & 4 & $\mathbb{Z}_{4}\times\allowbreak \mathbb{Z}_{11}$ & $(0,0)$,\allowbreak\ $(0,1)$ & $(1,0)$,\allowbreak\ $(1,2)$,\allowbreak\ $(3,0)$,\allowbreak\ $(3,6)$ \\
\hline
48 & 4 & $8$ & \texttt{Yes} & 2 & 4 & $\mathbb{Z}_{16}\times\allowbreak \mathbb{Z}_{3}$ & $(0,0)$,\allowbreak\ $(4,1)$ & $(1,0)$,\allowbreak\ $(3,0)$,\allowbreak\ $(7,2)$,\allowbreak\ $(9,1)$ \\
\hline
54 & 6 & $7$ & \texttt{Yes} & 2 & 4 & $\mathbb{Z}_{2}\times\allowbreak \mathbb{Z}_{27}$ & $(0,0)$,\allowbreak\ $(0,3)$ & $(1,0)$,\allowbreak\ $(1,1)$,\allowbreak\ $(1,6)$,\allowbreak\ $(1,16)$ \\
\hline
56 & 8 & $6$ & \texttt{Yes} & 2 & 4 & $\mathbb{Z}_{8}\times\allowbreak \mathbb{Z}_{7}$ & $(0,0)$,\allowbreak\ $(0,1)$ & $(1,0)$,\allowbreak\ $(1,2)$,\allowbreak\ $(3,0)$,\allowbreak\ $(3,3)$ \\
\hline
70 & 10 & $6$ & \texttt{Yes} & 2 & 4 & $\mathbb{Z}_{2}\times\allowbreak \mathbb{Z}_{5}\times\allowbreak \mathbb{Z}_{7}$ & $(0,0,0)$,\allowbreak\ $(0,0,1)$ & $(1,0,0)$,\allowbreak\ $(1,0,2)$,\allowbreak\ $(1,1,0)$,\allowbreak\ $(1,1,3)$ \\
\hline
72 & 6 & $9$ & \texttt{Yes} & 2 & 4 & $\mathbb{Z}_{8}\times\allowbreak \mathbb{Z}_{9}$ & $(0,0)$,\allowbreak\ $(0,1)$ & $(1,0)$,\allowbreak\ $(3,0)$,\allowbreak\ $(5,1)$,\allowbreak\ $(7,6)$ \\
\hline
72 & 8 & $7$ & \texttt{Yes} & 2 & 4 & $\mathbb{Z}_{8}\times\allowbreak \mathbb{Z}_{9}$ & $(0,0)$,\allowbreak\ $(0,1)$ & $(1,0)$,\allowbreak\ $(1,2)$,\allowbreak\ $(3,0)$,\allowbreak\ $(3,4)$ \\
\hline
80 & 6 & $\leq 10$ & \texttt{Yes} & 2 & 4 & $\mathbb{Z}_{2}\times\allowbreak \mathbb{Z}_{8}\times\allowbreak \mathbb{Z}_{5}$ & $(0,0,0)$,\allowbreak\ $(1,0,1)$ & $(0,1,0)$,\allowbreak\ $(0,3,0)$,\allowbreak\ $(1,5,1)$,\allowbreak\ $(1,7,2)$ \\
\hline
90 & 10 & $7$ & \texttt{Yes} & 2 & 4 & $\mathbb{Z}_{2}\times\allowbreak \mathbb{Z}_{9}\times\allowbreak \mathbb{Z}_{5}$ & $(0,0,0)$,\allowbreak\ $(0,1,0)$ & $(1,0,0)$,\allowbreak\ $(1,0,1)$,\allowbreak\ $(1,2,0)$,\allowbreak\ $(1,4,1)$ \\
\hline
91 & 4 & $\leq 13$ & \texttt{No} & 2 & 4 & $\mathbb{Z}_{7}\times\allowbreak \mathbb{Z}_{13}$ & $(0,0)$,\allowbreak\ $(0,1)$ & $(0,0)$,\allowbreak\ $(1,0)$,\allowbreak\ $(2,3)$,\allowbreak\ $(5,4)$ \\
\hline
126 & 14 & $7$ & \texttt{Yes} & 2 & 4 & $\mathbb{Z}_{2}\times\allowbreak \mathbb{Z}_{9}\times\allowbreak \mathbb{Z}_{7}$ & $(0,0,0)$,\allowbreak\ $(0,1,0)$ & $(1,0,0)$,\allowbreak\ $(1,0,1)$,\allowbreak\ $(1,2,0)$,\allowbreak\ $(1,4,1)$ \\
\hline
132 & 8 & $\leq 11$ & \texttt{Yes} & 2 & 4 & $\mathbb{Z}_{4}\times\allowbreak \mathbb{Z}_{3}\times\allowbreak \mathbb{Z}_{11}$ & $(0,0,0)$,\allowbreak\ $(0,0,1)$ & $(1,0,0)$,\allowbreak\ $(1,1,1)$,\allowbreak\ $(3,0,2)$,\allowbreak\ $(3,1,5)$ \\
\hline
144 & 6 & $\leq 14$ & \texttt{Yes} & 2 & 4 & $\mathbb{Z}_{16}\times\allowbreak \mathbb{Z}_{9}$ & $(0,0)$,\allowbreak\ $(8,1)$ & $(1,0)$,\allowbreak\ $(3,0)$,\allowbreak\ $(5,1)$,\allowbreak\ $(15,7)$ \\
\hline
162 & 6 & $\leq 15$ & \texttt{Yes} & 2 & 4 & $\mathbb{Z}_{2}\times\allowbreak \mathbb{Z}_{81}$ & $(0,0)$,\allowbreak\ $(0,3)$ & $(1,0)$,\allowbreak\ $(1,4)$,\allowbreak\ $(1,16)$,\allowbreak\ $(1,51)$ \\
\hline
174 & 6 & $\leq 16$ & \texttt{Yes} & 2 & 4 & $\mathbb{Z}_{2}\times\allowbreak \mathbb{Z}_{3}\times\allowbreak \mathbb{Z}_{29}$ & $(0,0,0)$,\allowbreak\ $(0,0,1)$ & $(1,0,0)$,\allowbreak\ $(1,0,4)$,\allowbreak\ $(1,1,7)$,\allowbreak\ $(1,1,18)$ \\
\hline\hline
18 & 4 & $4$ & \texttt{Yes} & 3 & 3 & $\mathbb{Z}_{2}\times\allowbreak \mathbb{Z}_{9}$ & $(0,0)$,\allowbreak\ $(0,1)$,\allowbreak\ $(0,2)$ & $(1,0)$,\allowbreak\ $(1,1)$,\allowbreak\ $(1,5)$ \\
\hline
30 & 4 & $6$ & \texttt{Yes} & 3 & 3 & $\mathbb{Z}_{2}\times\allowbreak \mathbb{Z}_{3}\times\allowbreak \mathbb{Z}_{5}$ & $(0,0,0)$,\allowbreak\ $(0,1,0)$,\allowbreak\ $(0,2,1)$ & $(1,0,0)$,\allowbreak\ $(1,1,1)$,\allowbreak\ $(1,2,3)$ \\
\hline
30 & 8 & $4$ & \texttt{Yes} & 3 & 3 & $\mathbb{Z}_{2}\times\allowbreak \mathbb{Z}_{3}\times\allowbreak \mathbb{Z}_{5}$ & $(0,0,0)$,\allowbreak\ $(0,0,1)$,\allowbreak\ $(0,1,3)$ & $(1,0,0)$,\allowbreak\ $(1,0,2)$,\allowbreak\ $(1,1,1)$ \\
\hline
36 & 6 & $6$ & \texttt{No} & 3 & 3 & $\mathbb{Z}_{2}\times\allowbreak \mathbb{Z}_{2}\times\allowbreak \mathbb{Z}_{3}\times\allowbreak \mathbb{Z}_{3}$ & $(0,0,0,0)$,\allowbreak\ $(0,1,0,1)$,\allowbreak\ $(1,0,0,2)$ & $(0,0,0,0)$,\allowbreak\ $(0,1,1,0)$,\allowbreak\ $(1,1,2,0)$ \\
\hline
48 & 8 & $6$ & \texttt{No} & 3 & 3 & $\mathbb{Z}_{2}\times\allowbreak \mathbb{Z}_{2}\times\allowbreak \mathbb{Z}_{2}\times\allowbreak \mathbb{Z}_{2}\times\allowbreak \mathbb{Z}_{3}$ & $(0,0,0,0,0)$,\allowbreak\ $(0,0,0,1,1)$,\allowbreak\ $(0,0,1,0,2)$ & $(0,0,1,1,0)$,\allowbreak\ $(0,1,0,0,1)$,\allowbreak\ $(1,0,0,0,2)$ \\
\hline
56 & 6 & $8$ & \texttt{Yes} & 3 & 3 & $\mathbb{Z}_{8}\times\allowbreak \mathbb{Z}_{7}$ & $(0,0)$,\allowbreak\ $(0,1)$,\allowbreak\ $(2,3)$ & $(1,0)$,\allowbreak\ $(1,3)$,\allowbreak\ $(5,2)$ \\
\hline
60 & 4 & $10$ & \texttt{No} & 3 & 3 & $\mathbb{Z}_{2}\times\allowbreak \mathbb{Z}_{2}\times\allowbreak \mathbb{Z}_{3}\times\allowbreak \mathbb{Z}_{5}$ & $(0,0,0,0)$,\allowbreak\ $(0,0,1,0)$,\allowbreak\ $(0,0,2,1)$ & $(0,1,0,0)$,\allowbreak\ $(1,0,1,1)$,\allowbreak\ $(1,1,2,2)$ \\
\hline
72 & 8 & $8$ & \texttt{Yes} & 3 & 3 & $\mathbb{Z}_{8}\times\allowbreak \mathbb{Z}_{3}\times\allowbreak \mathbb{Z}_{3}$ & $(0,0,0)$,\allowbreak\ $(0,0,1)$,\allowbreak\ $(2,0,2)$ & $(1,0,0)$,\allowbreak\ $(1,1,0)$,\allowbreak\ $(3,2,0)$ \\
\hline
72 & 12 & $6$ & \texttt{Yes} & 3 & 3 & $\mathbb{Z}_{2}\times\allowbreak \mathbb{Z}_{4}\times\allowbreak \mathbb{Z}_{3}\times\allowbreak \mathbb{Z}_{3}$ & $(0,0,0,0)$,\allowbreak\ $(0,2,0,1)$,\allowbreak\ $(1,0,0,2)$ & $(0,1,0,0)$,\allowbreak\ $(0,3,1,0)$,\allowbreak\ $(1,1,2,0)$ \\
\hline
84 & 6 & $10$ & \texttt{Yes} & 3 & 3 & $\mathbb{Z}_{4}\times\allowbreak \mathbb{Z}_{3}\times\allowbreak \mathbb{Z}_{7}$ & $(0,0,0)$,\allowbreak\ $(0,0,1)$,\allowbreak\ $(0,1,3)$ & $(1,0,0)$,\allowbreak\ $(1,1,5)$,\allowbreak\ $(3,0,1)$ \\
\hline
85 & 8 & $9$ & \texttt{No} & 3 & 3 & $\mathbb{Z}_{5}\times\allowbreak \mathbb{Z}_{17}$ & $(0,0)$,\allowbreak\ $(0,1)$,\allowbreak\ $(1,9)$ & $(0,0)$,\allowbreak\ $(0,4)$,\allowbreak\ $(1,2)$ \\
\hline
90 & 8 & $10$ & \texttt{Yes} & 3 & 3 & $\mathbb{Z}_{2}\times\allowbreak \mathbb{Z}_{3}\times\allowbreak \mathbb{Z}_{3}\times\allowbreak \mathbb{Z}_{5}$ & $(0,0,0,0)$,\allowbreak\ $(0,0,1,0)$,\allowbreak\ $(0,0,2,1)$ & $(1,0,0,0)$,\allowbreak\ $(1,1,0,0)$,\allowbreak\ $(1,2,0,2)$ \\
\hline
96 & 16 & $6$ & \texttt{Yes} & 3 & 3 & $\mathbb{Z}_{2}\times\allowbreak \mathbb{Z}_{2}\times\allowbreak \mathbb{Z}_{2}\times\allowbreak \mathbb{Z}_{4}\times\allowbreak \mathbb{Z}_{3}$ & $(0,0,0,0,0)$,\allowbreak\ $(0,0,0,2,1)$,\allowbreak\ $(0,0,1,0,2)$ & $(0,0,0,1,0)$,\allowbreak\ $(0,1,0,1,1)$,\allowbreak\ $(1,0,0,1,2)$ \\
\hline
120 & 8 & $\leq 12$ & \texttt{Yes} & 3 & 3 & $\mathbb{Z}_{8}\times\allowbreak \mathbb{Z}_{3}\times\allowbreak \mathbb{Z}_{5}$ & $(0,0,0)$,\allowbreak\ $(0,0,1)$,\allowbreak\ $(2,1,3)$ & $(1,0,0)$,\allowbreak\ $(1,1,1)$,\allowbreak\ $(7,0,2)$ \\
\hline
126 & 12 & $10$ & \texttt{Yes} & 3 & 3 & $\mathbb{Z}_{2}\times\allowbreak \mathbb{Z}_{9}\times\allowbreak \mathbb{Z}_{7}$ & $(0,0,0)$,\allowbreak\ $(0,1,0)$,\allowbreak\ $(0,5,1)$ & $(1,0,0)$,\allowbreak\ $(1,1,1)$,\allowbreak\ $(1,6,6)$ \\
\hline
144 & 12 & $12$ & \texttt{Yes} & 3 & 3 & $\mathbb{Z}_{2}\times\allowbreak \mathbb{Z}_{8}\times\allowbreak \mathbb{Z}_{3}\times\allowbreak \mathbb{Z}_{3}$ & $(0,0,0,0)$,\allowbreak\ $(0,2,0,1)$,\allowbreak\ $(1,0,0,2)$ & $(0,1,0,0)$,\allowbreak\ $(0,3,1,0)$,\allowbreak\ $(1,5,2,0)$ \\
\hline
216 & 12 & $14$ & \texttt{Yes} & 3 & 3 & $\mathbb{Z}_{2}\times\allowbreak \mathbb{Z}_{4}\times\allowbreak \mathbb{Z}_{3}\times\allowbreak \mathbb{Z}_{3}\times\allowbreak \mathbb{Z}_{3}$ & $(0,0,0,0,0)$,\allowbreak\ $(0,2,0,0,1)$,\allowbreak\ $(1,0,0,1,0)$ & $(0,1,0,0,0)$,\allowbreak\ $(0,3,1,0,0)$,\allowbreak\ $(1,1,2,1,1)$ \\
\hline
240 & 8 & $\leq 20$ & \texttt{Yes} & 3 & 3 & $\mathbb{Z}_{16}\times\allowbreak \mathbb{Z}_{3}\times\allowbreak \mathbb{Z}_{5}$ & $(0,0,0)$,\allowbreak\ $(0,0,1)$,\allowbreak\ $(2,1,3)$ & $(1,0,0)$,\allowbreak\ $(5,0,2)$,\allowbreak\ $(11,1,1)$ \\
\hline
288 & 12 & $\leq 18$ & \texttt{Yes} & 3 & 3 & $\mathbb{Z}_{2}\times\allowbreak \mathbb{Z}_{16}\times\allowbreak \mathbb{Z}_{3}\times\allowbreak \mathbb{Z}_{3}$ & $(0,0,0,0)$,\allowbreak\ $(0,2,0,1)$,\allowbreak\ $(1,0,0,2)$ & $(0,1,0,0)$,\allowbreak\ $(0,7,1,0)$,\allowbreak\ $(1,3,2,0)$ \\
\hline\hline
24 & 6 & $4$ & \texttt{No} & 3 & 4 & $\mathbb{Z}_{2}\times\allowbreak \mathbb{Z}_{4}\times\allowbreak \mathbb{Z}_{3}$ & $(0,0,0)$,\allowbreak\ $(0,0,1)$,\allowbreak\ $(0,2,2)$ & $(0,0,0)$,\allowbreak\ $(0,1,0)$,\allowbreak\ $(1,0,0)$,\allowbreak\ $(1,3,0)$ \\
\hline
36 & 4 & $\leq 8$ & \texttt{No} & 3 & 4 & $\mathbb{Z}_{2}\times\allowbreak \mathbb{Z}_{2}\times\allowbreak \mathbb{Z}_{9}$ & $(0,0,0)$,\allowbreak\ $(0,1,1)$,\allowbreak\ $(1,0,2)$ & $(0,0,0)$,\allowbreak\ $(0,1,4)$,\allowbreak\ $(1,1,1)$,\allowbreak\ $(1,1,3)$ \\
\hline
42 & 6 & $7$ & \texttt{Yes} & 3 & 4 & $\mathbb{Z}_{2}\times\allowbreak \mathbb{Z}_{3}\times\allowbreak \mathbb{Z}_{7}$ & $(0,0,0)$,\allowbreak\ $(0,0,1)$,\allowbreak\ $(0,1,3)$ & $(1,0,0)$,\allowbreak\ $(1,0,2)$,\allowbreak\ $(1,1,1)$,\allowbreak\ $(1,2,4)$ \\
\hline
42 & 10 & $5$ & \texttt{Yes} & 3 & 4 & $\mathbb{Z}_{2}\times\allowbreak \mathbb{Z}_{3}\times\allowbreak \mathbb{Z}_{7}$ & $(0,0,0)$,\allowbreak\ $(0,1,1)$,\allowbreak\ $(0,2,3)$ & $(1,0,0)$,\allowbreak\ $(1,0,1)$,\allowbreak\ $(1,1,3)$,\allowbreak\ $(1,1,6)$ \\
\hline
42 & 12 & $4$ & \texttt{Yes} & 3 & 4 & $\mathbb{Z}_{2}\times\allowbreak \mathbb{Z}_{3}\times\allowbreak \mathbb{Z}_{7}$ & $(0,0,0)$,\allowbreak\ $(0,0,1)$,\allowbreak\ $(0,0,3)$ & $(1,0,0)$,\allowbreak\ $(1,0,1)$,\allowbreak\ $(1,1,5)$,\allowbreak\ $(1,2,5)$ \\
\hline
45 & 8 & $6$ & \texttt{No} & 3 & 4 & $\mathbb{Z}_{3}\times\allowbreak \mathbb{Z}_{3}\times\allowbreak \mathbb{Z}_{5}$ & $(0,0,0)$,\allowbreak\ $(0,0,1)$,\allowbreak\ $(0,1,3)$ & $(0,0,0)$,\allowbreak\ $(0,0,2)$,\allowbreak\ $(1,0,1)$,\allowbreak\ $(2,2,1)$ \\
\hline
48 & 10 & $6$ & \texttt{No} & 3 & 4 & $\mathbb{Z}_{4}\times\allowbreak \mathbb{Z}_{4}\times\allowbreak \mathbb{Z}_{3}$ & $(0,0,0)$,\allowbreak\ $(0,2,1)$,\allowbreak\ $(2,0,2)$ & $(0,0,0)$,\allowbreak\ $(0,1,0)$,\allowbreak\ $(1,0,0)$,\allowbreak\ $(3,3,0)$ \\
\hline
54 & 4 & $\leq 12$ & \texttt{No} & 3 & 4 & $\mathbb{Z}_{2}\times\allowbreak \mathbb{Z}_{3}\times\allowbreak \mathbb{Z}_{9}$ & $(0,0,0)$,\allowbreak\ $(0,0,1)$,\allowbreak\ $(0,1,0)$ & $(0,0,0)$,\allowbreak\ $(0,1,4)$,\allowbreak\ $(1,0,2)$,\allowbreak\ $(1,0,8)$ \\
\hline
60 & 4 & $\leq 13$ & \texttt{No} & 3 & 4 & $\mathbb{Z}_{4}\times\allowbreak \mathbb{Z}_{3}\times\allowbreak \mathbb{Z}_{5}$ & $(0,0,0)$,\allowbreak\ $(0,1,1)$,\allowbreak\ $(2,0,2)$ & $(0,0,0)$,\allowbreak\ $(1,1,1)$,\allowbreak\ $(1,2,4)$,\allowbreak\ $(2,0,4)$ \\
\hline
60 & 6 & $\leq 10$ & \texttt{No} & 3 & 4 & $\mathbb{Z}_{4}\times\allowbreak \mathbb{Z}_{3}\times\allowbreak \mathbb{Z}_{5}$ & $(0,0,0)$,\allowbreak\ $(0,1,0)$,\allowbreak\ $(0,2,1)$ & $(0,0,0)$,\allowbreak\ $(1,0,1)$,\allowbreak\ $(2,0,4)$,\allowbreak\ $(3,0,3)$ \\
\hline
60 & 8 & $\leq 9$ & \texttt{No} & 3 & 4 & $\mathbb{Z}_{2}\times\allowbreak \mathbb{Z}_{2}\times\allowbreak \mathbb{Z}_{3}\times\allowbreak \mathbb{Z}_{5}$ & $(0,0,0,0)$,\allowbreak\ $(0,1,0,1)$,\allowbreak\ $(1,0,1,3)$ & $(0,0,0,0)$,\allowbreak\ $(0,1,1,1)$,\allowbreak\ $(1,0,0,3)$,\allowbreak\ $(1,1,2,0)$ \\
\hline
62 & 5 & $\leq 13$ & \texttt{No} & 3 & 4 & $\mathbb{Z}_{2}\times\allowbreak \mathbb{Z}_{31}$ & $(0,0)$,\allowbreak\ $(0,1)$,\allowbreak\ $(1,14)$ & $(0,0)$,\allowbreak\ $(0,6)$,\allowbreak\ $(0,20)$,\allowbreak\ $(1,11)$ \\
\hline
62 & 10 & $7$ & \texttt{Yes} & 3 & 4 & $\mathbb{Z}_{2}\times\allowbreak \mathbb{Z}_{31}$ & $(0,0)$,\allowbreak\ $(0,1)$,\allowbreak\ $(0,12)$ & $(1,0)$,\allowbreak\ $(1,1)$,\allowbreak\ $(1,2)$,\allowbreak\ $(1,6)$ \\
\hline
63 & 6 & $\leq 11$ & \texttt{No} & 3 & 4 & $\mathbb{Z}_{9}\times\allowbreak \mathbb{Z}_{7}$ & $(0,0)$,\allowbreak\ $(0,1)$,\allowbreak\ $(3,3)$ & $(0,0)$,\allowbreak\ $(0,3)$,\allowbreak\ $(1,2)$,\allowbreak\ $(8,2)$ \\
\hline
66 & 4 & $\leq 14$ & \texttt{No} & 3 & 4 & $\mathbb{Z}_{2}\times\allowbreak \mathbb{Z}_{3}\times\allowbreak \mathbb{Z}_{11}$ & $(0,0,0)$,\allowbreak\ $(0,1,0)$,\allowbreak\ $(0,2,1)$ & $(0,0,0)$,\allowbreak\ $(0,0,4)$,\allowbreak\ $(1,1,1)$,\allowbreak\ $(1,1,7)$ \\
\hline
75 & 4 & $\leq 17$ & \texttt{No} & 3 & 4 & $\mathbb{Z}_{3}\times\allowbreak \mathbb{Z}_{25}$ & $(0,0)$,\allowbreak\ $(0,1)$,\allowbreak\ $(1,8)$ & $(0,0)$,\allowbreak\ $(0,2)$,\allowbreak\ $(0,16)$,\allowbreak\ $(2,11)$ \\
\hline
85 & 8 & $\leq 12$ & \texttt{No} & 3 & 4 & $\mathbb{Z}_{5}\times\allowbreak \mathbb{Z}_{17}$ & $(0,0)$,\allowbreak\ $(0,1)$,\allowbreak\ $(1,9)$ & $(0,0)$,\allowbreak\ $(0,3)$,\allowbreak\ $(0,8)$,\allowbreak\ $(2,2)$ \\
\hline
93 & 5 & $\leq 21$ & \texttt{No} & 3 & 4 & $\mathbb{Z}_{3}\times\allowbreak \mathbb{Z}_{31}$ & $(0,0)$,\allowbreak\ $(0,1)$,\allowbreak\ $(1,12)$ & $(0,0)$,\allowbreak\ $(0,4)$,\allowbreak\ $(0,23)$,\allowbreak\ $(1,26)$ \\
\hline
93 & 7 & $\leq 16$ & \texttt{No} & 3 & 4 & $\mathbb{Z}_{3}\times\allowbreak \mathbb{Z}_{31}$ & $(0,0)$,\allowbreak\ $(1,1)$,\allowbreak\ $(2,12)$ & $(0,0)$,\allowbreak\ $(0,4)$,\allowbreak\ $(2,10)$,\allowbreak\ $(2,15)$ \\
\hline
93 & 10 & $\leq 11$ & \texttt{No} & 3 & 4 & $\mathbb{Z}_{3}\times\allowbreak \mathbb{Z}_{31}$ & $(0,0)$,\allowbreak\ $(0,1)$,\allowbreak\ $(0,12)$ & $(0,0)$,\allowbreak\ $(0,4)$,\allowbreak\ $(1,18)$,\allowbreak\ $(2,18)$ \\
\hline
99 & 4 & $\leq 23$ & \texttt{No} & 3 & 4 & $\mathbb{Z}_{3}\times\allowbreak \mathbb{Z}_{3}\times\allowbreak \mathbb{Z}_{11}$ & $(0,0,0)$,\allowbreak\ $(0,1,1)$,\allowbreak\ $(0,2,3)$ & $(0,0,0)$,\allowbreak\ $(0,1,5)$,\allowbreak\ $(1,0,4)$,\allowbreak\ $(2,1,8)$ \\
\hline
99 & 6 & $\leq 19$ & \texttt{No} & 3 & 4 & $\mathbb{Z}_{9}\times\allowbreak \mathbb{Z}_{11}$ & $(0,0)$,\allowbreak\ $(3,1)$,\allowbreak\ $(6,3)$ & $(0,0)$,\allowbreak\ $(0,2)$,\allowbreak\ $(1,1)$,\allowbreak\ $(1,4)$ \\
\hline
\caption{Highlighted abelian $(G, A, B)$ mirror codes found from an exhaustive search for almost all $n\leq300$ qubits (for $|A|=|B|=3$ and smaller $n$ for other weights).
The parameters of each code are $\llbracket n, k, d \rrbracket$; entries of $d$ of the form $\leq c$ were estimated rather than computed.
$w_Z$ and $w_X$ denote the weight of the $\vec Z$ and $\vec X$ part of each stabilizer, respectively, so that the total weight of each stabilizer is at most $w_Z + w_X$.}\label{tab:all_codes}\\
\end{tabularx}
\endgroup

\begin{figure}[h!]
\makebox[\textwidth][c]{
    \centering
    \includegraphics[width=0.8\linewidth]{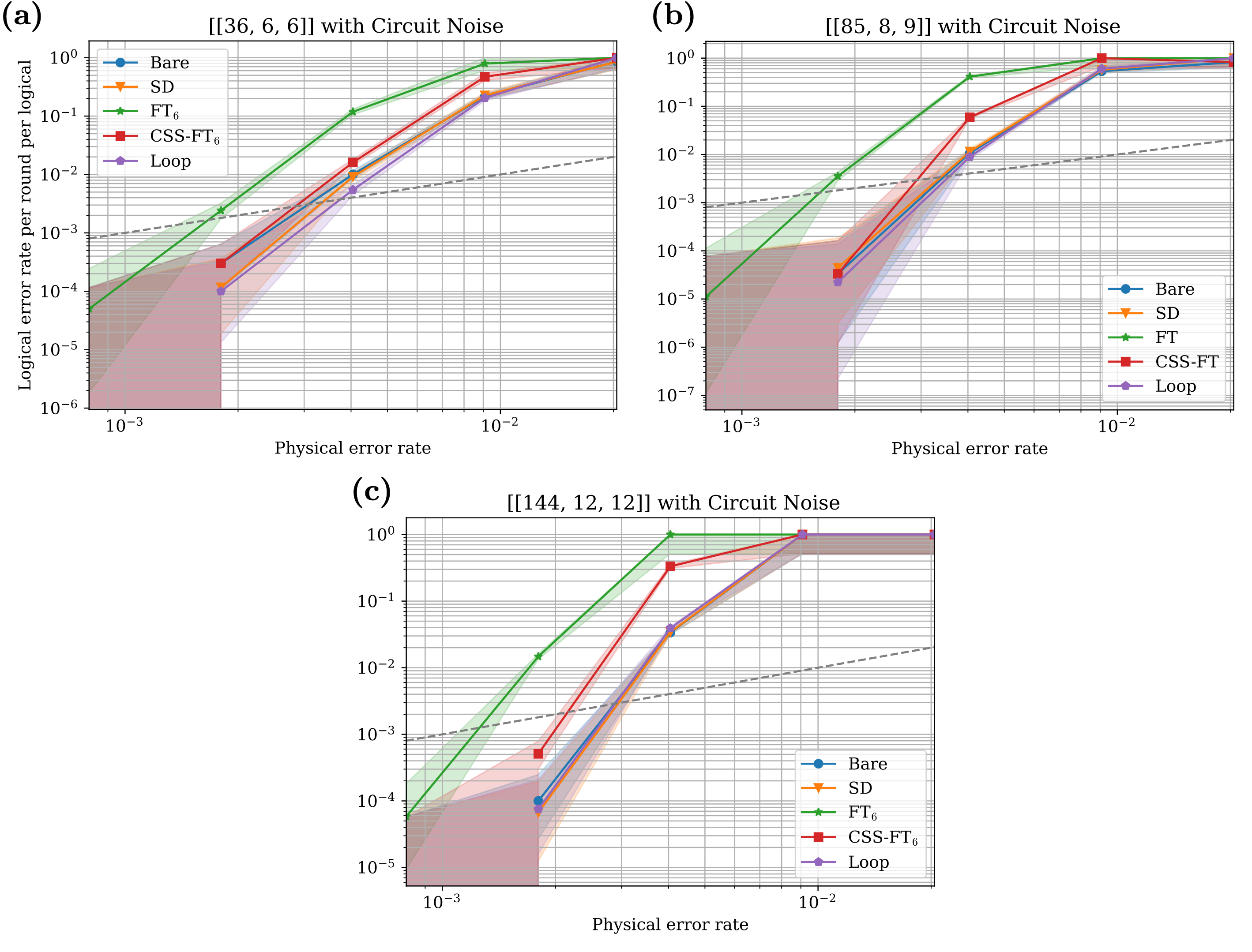}
    }
    \caption{A comparison of a several mirror codes under various syndrome extraction circuits using the \texttt{SI1000} circuit-level noise model.}
    \label{fig:plot_circuit}
\end{figure}

\begin{figure}[h!]
\makebox[\textwidth][c]{
    \centering
    \includegraphics[width=0.8\linewidth]{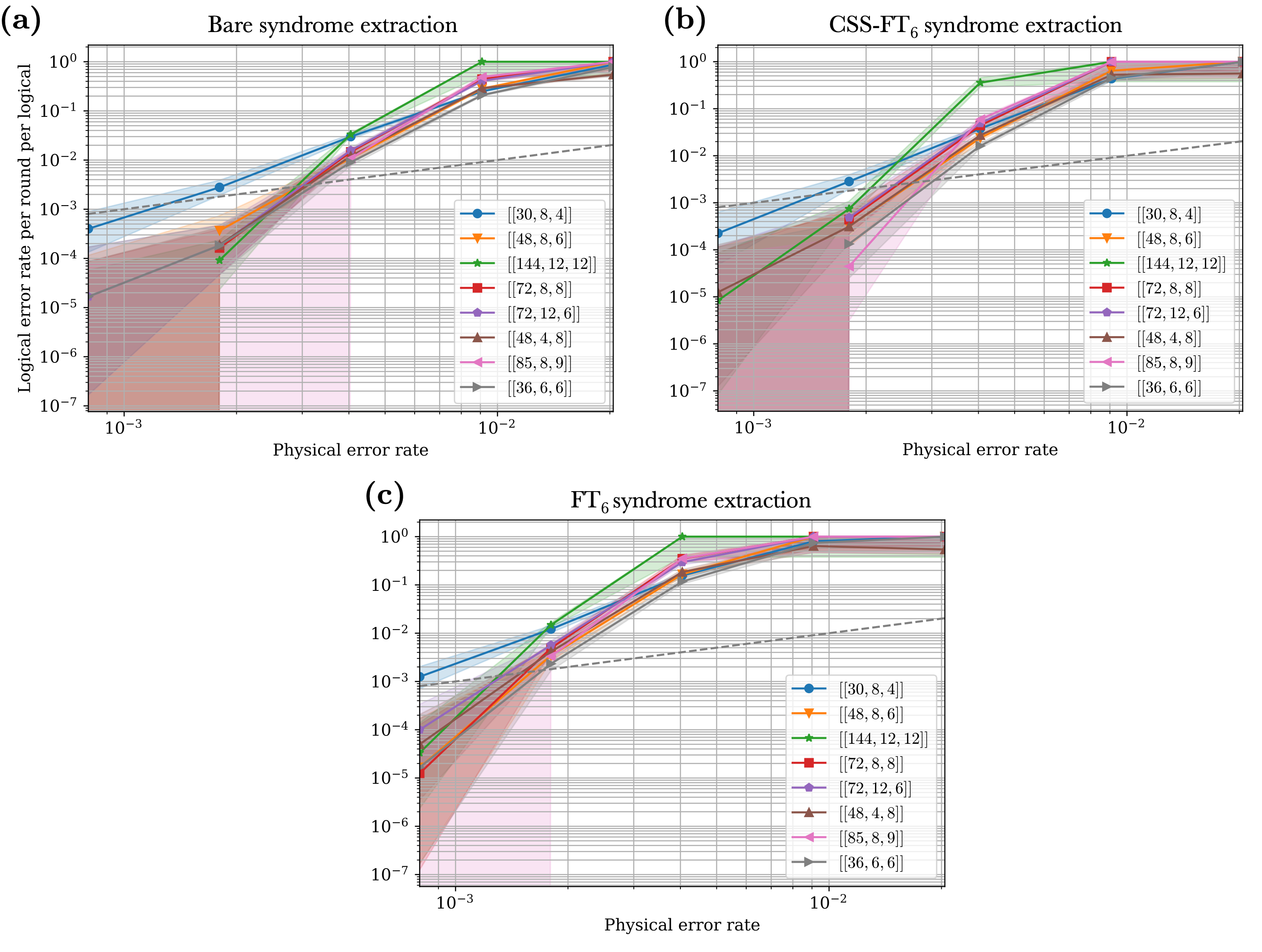}}
    \caption{A comparison of several syndrome extraction circuits for various mirror codes using the \texttt{SI1000} circuit-level noise model.}
    \label{fig:plot_codes}
\end{figure}

We benchmark the logical error rate of a few select codes under the various syndrome extraction circuits discussed in \cref{sec:FT_SEC} in \cref{fig:plot_circuit}, using the \texttt{SI1000} circuit-level noise model.
We see that most of these mirror codes have pseudothresholds of around 0.2\%, for most syndrome extraction methods.
We also see that the more fault-tolerant syndrome extraction circuits have curves that are shifted left and are most sloped, indicating that at lower physical error rates, the logical error rate might be improved with more fault-tolerant circuits.
Additionally, we compare the performance of several mirror codes with a few different syndrome extraction circuits in \cref{fig:plot_codes}.
Interestingly, we find that nearly all these codes have similar pseudothresholds, indicating that device size might be a dominant factor in code choice in the near future.

\section{Discussion and Outlook} \label{sec:outlook}

In this work, we constructed a simple family of quantum LDPC stabilizer codes which are not generically CSS.
We also devised a series of syndrome extraction circuits with provable fault tolerance guarantees that increase in generality as the number of ancillary qubits per stabilizer measurement increases; these circuits are general to quantum LDPC codes and have no particular dependence on the mirror construction.
We then performed an exhaustive search to find several novel high-rate, high-distance codes for almost all $n \leq 300$ (for $|A| = |B| = 3$, and smaller bounds for other weights), as shown in \Cref{tab:all_codes}.
Combining some of these constructions with a general-purpose decoder, we performed an end-to-end error-corrected quantum memory experiment with a standard circuit-level noise model.
These experiments demonstrated our codes to have a pseudothreshold comparable to that of state-of-the-art quantum LDPC constructions such as the $\llbracket 144, 12, 12 \rrbracket$ bivariate bicycle code, on the order of $p = 10^{-3}$.
They also numerically illustrate the comparative performance of our fault-tolerant circuit constructions.
As the fault tolerance increases, so too does the overhead.
This overhead broadens the set of possible faults, but decreases their individual probabilities of causing a logical error.
Hence, as fault tolerance and overhead increase, the logical error rates have a smaller pseudothreshold, but drop more steeply below the pseudothreshold.
Consequently, using near-term devices with sharp overhead constraints and larger physical error rates, one may wish to start by using less fault-tolerant circuits, and then adopt increasingly fault-tolerant circuits---which have higher overhead and require lower physical error rate to gain an advantage---as devices scale in size and fall in noise rates.

In sum, our findings suggest that mirror codes, in combination with fault-tolerant syndrome extraction and standard decoders, serve as an excellent candidate for near-term fault-tolerant memory realizations, especially on devices which have the connectivity to implement quantum LDPC codes with geometrically non-local checks, but which have limited physical qubit count, e.g. $n \leq 250$, and thus may require codes smaller than, for example, the $\llbracket 144, 12, 12 \rrbracket$ code, which requires 288 qubits to implement.
Our results also corroborate a more general intuition that, at small scales, stabilizer codes can dominate in performance over CSS codes, and therefore that many practically useful quantum codes lie outside of the CSS regime.

There are several interesting directions stemming from this work.
The first direction concerns questions about analyzing the code properties of mirror codes.
Although mirror codes can be well-defined for a general non-abelian group, we found through an exhaustive search of up to $n \leq 150$ (excluding $n = 128$) that every mirror code for non-abelian $G$ was dominated in parameters by an abelian mirror code.
We conjecture that this pattern extends indefinitely, a substantial departure from 2BGA code-based intuition, wherein several interesting non-abelian 2BGA codes are known~\cite{2BGA}.

\begin{conjecture}
For every $(G, A, B)$ (a)symmetric mirror code with parameters $\llbracket n, k, d \rrbracket$, where $G$ is non-abelian, there exists a $(G', A', B')$ mirror code with parameters $\llbracket n', k', d' \rrbracket$, where $G'$ is abelian, such that $n' \leq n$, $k' \geq k$, and $d' \geq d$, and at least one of the inequalities is strict.
\end{conjecture}

We have also observed that for every non-CSS mirror code we found with parameters $\llbracket n,k,d \rrbracket$ and check weight $w$, there exists also a CSS mirror code with parameters $\llbracket 4n, 2k, 2d \rrbracket$ with the same $|A|$ and $|B|$, and thus also of check weight $w$.
This correspondence is reminiscent of a known map from $\llbracket n, k, d \rrbracket$ stabilizer codes to $\llbracket 4n, 2k, 2d \rrbracket$ self-dual CSS codes~\cite{bravyi2010majorana}, except that this known map does not preserve check weight.
With a few notable exceptions, such as the $\llbracket 85, 8, 9 \rrbracket$ code, we also observed a similar correspondence wherein a non-CSS $\llbracket n, k, d \rrbracket$ mirror code with weight $w$ often implies the existence of a CSS $\llbracket 2n, 2k, d \rrbracket$ mirror code with weight $w$.
(Note that this latter correspondence has one code being non-CSS and the other being CSS; the correspondence would be trivial without this condition, as the map can be achieved by adjoining two copies of the code.)
It is an interesting open question as to under what conditions these correspondences are guaranteed.

More generally, it is an open question as to how to give sufficient conditions on $G, A, B$ which guarantee properties of a mirror code, such as rate and distance.
Such conditions may or may not depend on the fine-grained choice of $G$.
In other words, for a fixed $n$, does the exact choice of $G$ make a significant difference on the guarantees one can prove on the code?
One possible way to approach this question is by developing techniques to study a randomly chosen mirror code, either for a fixed $n$ or a fixed $G$.
While randomized analyses are most on uniformly random codes, they have also been applied to LDPC codes in classical coding theory~\cite{mosheiff2020ldpc}.
Conversely, we have here shown some simple ways in which the choice of $(G, A, B)$ can yield a mirror code with very poor properties, such as $k = 0$ or $d \leq 2$.
Adding to this list of poor $(G, A, B)$ remains a useful open question.
As a concrete example, a large fraction of groups at relevant $n$ have orders which are powers of two.
Yet almost all codes we searched where $n$ is a power of two have poor parameters: can one prove that this is always the case?
If so, the cost of exhaustive code searches would decrease by a significant constant factor solely by skipping $n$ where $n = 2^t$ for some $t \geq 1$.
Finally, we have observed that for a fixed $|A|, |B|$, there appears to be a ceiling value $g_{\max}$, such that for all $n$ sufficiently large, every $(G, A, B)$ mirror code has parameters $\llbracket n, k, d \rrbracket$ satisfying $kd/n \leq g_{\max}$.
For example, in our code search, for $|A| = 2, |B| = 3$, this ceiling is $2/3$, and for $|A| = 2, |B| = 4$, this ceiling is $6/7$, and for $|A| = |B| = 3$, this ceiling is 1.
(By increasing $|A|, |B|$, the ceiling can be made substantially larger than $1$.)
We conjecture that this behavior is universal.

\begin{conjecture}
Fix integers $s, t \geq 0$.
There exists a $g_{\max}^{(s,t)} \in \R_{\geq 0}$ and $n_0 > 1$ such that for every $n \geq n_0$, $G$ with $|G| = n$, $A, B \subseteq G$ with $|A| = s, |B| = t$, if $(G, A, B)$ forms a valid $\llbracket n, k, d \rrbracket$ (a)symmetric mirror code, then $kd/n \leq g_{\max}^{(s,t)}$.
\end{conjecture}

A second direction concerns logical computation with mirror codes.
We have here taken a first step in analytically understanding logical operations on mirror codes by classifying the effect of all permutations and local Clifford operations on $(G, A, B)$ mirror codes which map the code to another $(G', A', B')$ mirror code, including the special case of an automorphism map wherein $(G', A', B') = (G, A, B)$.
To analytically study logical operations on mirror codes, a natural next step is to characterize how such maps affect the logical basis of a mirror code.
Given the rich set of operations on mirror codes which leave the stabilizer subgroup invariant, if one can show that a subset of these operations have non-trivial logical effect, then one can utilize simple operations such as permutations to execute logical gates with essentially zero quantum cost.
Such permutation-induced logical operations have recently been studied for their potential utility in certain algorithmic applications~\cite{koh2026entangling}.
More generally, analytically obtaining the logical Pauli, Clifford, or more general basis of a given mirror code remains an important open question.

A final key direction, which is essentially independent of mirror codes, concerns our fault-tolerant syndrome extraction circuit constructions.
In principle, the superdense circuit may have an advantage over the bare circuit for some stabilizer codes, but in our numerical experiments we observe no substantial distinction in their performances against circuit-level noise.
To understand rigorously for which codes superdense circuits offer an improvement over bare circuits remains an open problem.
Moreover, our circuits in practice gain protection from errors from two distinct sources.
One is the design of the circuit itself, and the other is the \emph{scheduling} of the circuit operations.
This is because the operation scheduling impacts both how long qubits idle and which hook errors may occur.
Hence, the more operations that can be scheduled in parallel in a single time step, the less noise qubits accrue from idling, and the more clever the schedule, the less hook errors can cause decoding failures.
In this work, we scheduled our circuits by optimizing for parallelization and not hook error reduction, and we did so by reducing the scheduling problem into a SAT problem, and then applying a state-of-the-art SAT solver.
It remains open to optimize over both parallelization and hook error reduction.
From a computation standpoint, can we instead find generic or analytical scheduling solutions without a SAT solver reduction, or otherwise obtain a more efficient algorithm to schedule better?
As a concrete example of these questions, given a code, is it possible to schedule our loop syndrome extraction circuit carefully enough to be fault tolerant (i.e. where the circuit distance equals the code distance)?
In general, simply by improving scheduling, one can likely increase the error threshold of a code substantially.

\section*{Acknowledgments}
We are grateful to Victor Albert, Craig Gidney, Zhiyang (Sunny) He, Joschka Roffe, Noah Shutty, and Katherine Van Kirk for insightful discussions.
We thank Mackenzie Shaw in particular for fruitful discussions that led to the construction of the CSS-fault-tolerant syndrome extraction circuit.
ABK is supported by the Engineering and Physical Sciences Research Council grant number EP/Z002230/1: (De)constructing quantum software (DeQS).
JZL is supported in part by a National Defense Science and Engineering Graduate (NDSEG) fellowship.

\printbibliography

@article{2BGA,
  title={Quantum two-block group algebra codes},
  author={Lin, Hsiang-Ku and Pryadko, Leonid P},
  journal={Physical Review A},
  volume={109},
  number={2},
  pages={022407},
  year={2024},
  publisher={APS}
}

@article{BB,
  title={High-threshold and low-overhead fault-tolerant quantum memory},
  author={Bravyi, Sergey and Cross, Andrew W and Gambetta, Jay M and Maslov, Dmitri and Rall, Patrick and Yoder, Theodore J},
  journal={Nature},
  volume={627},
  number={8005},
  pages={778--782},
  year={2024},
  publisher={Nature Publishing Group UK London}
}

@article{koh2026entangling,
  title={Entangling logical qubits without physical operations},
  author={Koh, Jin Ming and Gong, Anqi and Diaconu, Andrei C and Tan, Daniel Bochen and Geim, Alexandra A and Gullans, Michael J and Yao, Norman Y and Lukin, Mikhail D and Majidy, Shayan},
  journal={arXiv preprint arXiv:2601.20927},
  year={2026}
}

@article{bravyi2010majorana,
  title={Majorana fermion codes},
  author={Bravyi, Sergey and Terhal, Barbara M and Leemhuis, Bernhard},
  journal={New Journal of Physics},
  volume={12},
  number={8},
  pages={083039},
  year={2010}
}

@inproceedings{mosheiff2020ldpc,
  title={LDPC codes achieve list decoding capacity},
  author={Mosheiff, Jonathan and Resch, Nicolas and Ron-Zewi, Noga and Silas, Shashwat and Wootters, Mary},
  booktitle={2020 IEEE 61st Annual Symposium on Foundations of Computer Science (FOCS)},
  pages={458--469},
  year={2020},
  organization={IEEE}
}

@article{FTbC,
  title={Fault Tolerance by Construction},
  author={Rodatz, Benjamin and Po{\'o}r, Boldizs{\'a}r and Kissinger, Aleks},
  journal={arXiv preprint arXiv:2506.17181},
  year={2025}
}

@article{superdense,
  title={New circuits and an open source decoder for the color code},
  author={Gidney, Craig and Jones, Cody},
  journal={arXiv preprint arXiv:2312.08813},
  year={2023}
}

@article{surface1,
  title={Surface code off-the-hook: diagonal syndrome-extraction scheduling},
  author={Kishony, Gilad and Fowler, Austin},
  journal={arXiv preprint arXiv:2602.09099},
  year={2026}
}

@article{surface2,
  title={Topological quantum memory},
  author={Dennis, Eric and Kitaev, Alexei and Landahl, Andrew and Preskill, John},
  journal={Journal of Mathematical Physics},
  volume={43},
  number={9},
  pages={4452--4505},
  year={2002},
  publisher={American Institute of Physics}
}

@article{surface3,
  title={Surface codes: Towards practical large-scale quantum computation},
  author={Fowler, Austin G and Mariantoni, Matteo and Martinis, John M and Cleland, Andrew N},
  journal={Physical Review A—Atomic, Molecular, and Optical Physics},
  volume={86},
  number={3},
  pages={032324},
  year={2012},
  publisher={APS}
}

@article{khesin2025universal,
  title={Universal graph representation of stabilizer codes},
  author={Khesin, Andrey Boris and Lu, Jonathan Z and Shor, Peter W},
  journal={PRX Quantum},
  volume={6},
  number={4},
  pages={040325},
  year={2025},
  publisher={APS}
}

@article{surface4,
  title={Low-distance surface codes under realistic quantum noise},
  author={Tomita, Yu and Svore, Krysta M},
  journal={Physical Review A},
  volume={90},
  number={6},
  pages={062320},
  year={2014},
  publisher={APS}
}

@article{kissinger2022phase,
  title={Phase-free ZX diagrams are CSS codes (... or how to graphically grok the surface code)},
  author={Kissinger, Aleks},
  journal={arXiv preprint arXiv:2204.14038},
  year={2022}
}

@article{gidney2022benchmarking,
  title={Benchmarking the planar honeycomb code},
  author={Gidney, Craig and Newman, Michael and McEwen, Matt},
  journal={Quantum},
  volume={6},
  pages={813},
  year={2022},
  publisher={Verein zur F{\"o}rderung des Open Access Publizierens in den Quantenwissenschaften}
}

@inproceedings{algo1,
  title={A fast quantum mechanical algorithm for database search},
  author={Grover, Lov K},
  booktitle={Proceedings of the twenty-eighth annual ACM symposium on Theory of computing},
  pages={212--219},
  year={1996}
}

@article{algo2,
  title={Polynomial-time algorithms for prime factorization and discrete logarithms on a quantum computer},
  author={Shor, Peter W},
  journal={SIAM review},
  volume={41},
  number={2},
  pages={303--332},
  year={1999},
  publisher={SIAM}
}

@article{algo3,
  title={Quantum computer systems for scientific discovery},
  author={Alexeev, Yuri and Bacon, Dave and Brown, Kenneth R and Calderbank, Robert and Carr, Lincoln D and Chong, Frederic T and DeMarco, Brian and Englund, Dirk and Farhi, Edward and Fefferman, Bill and others},
  journal={PRX quantum},
  volume={2},
  number={1},
  pages={017001},
  year={2021},
  publisher={APS}
}

@article{algo4,
  title={Quantum algorithm for linear systems of equations},
  author={Harrow, Aram W and Hassidim, Avinatan and Lloyd, Seth},
  journal={Physical review letters},
  volume={103},
  number={15},
  pages={150502},
  year={2009},
  publisher={APS}
}

@article{algo5,
  title={Universal quantum simulators},
  author={Lloyd, Seth},
  journal={Science},
  volume={273},
  number={5278},
  pages={1073--1078},
  year={1996},
  publisher={American Association for the Advancement of Science}
}

@article{algo6,
  title={Quantum advantage in learning from experiments},
  author={Huang, Hsin-Yuan and Broughton, Michael and Cotler, Jordan and Chen, Sitan and Li, Jerry and Mohseni, Masoud and Neven, Hartmut and Babbush, Ryan and Kueng, Richard and Preskill, John and others},
  journal={Science},
  volume={376},
  number={6598},
  pages={1182--1186},
  year={2022},
  publisher={American Association for the Advancement of Science}
}

@article{algo7,
  title={The vast world of quantum advantage},
  author={Huang, Hsin-Yuan and Choi, Soonwon and McClean, Jarrod R and Preskill, John},
  journal={arXiv preprint arXiv:2508.05720},
  year={2025}
}

@article{algo8,
  title={Quantum algorithms: an overview},
  author={Montanaro, Ashley},
  journal={npj Quantum Information},
  volume={2},
  number={1},
  pages={1--8},
  year={2016},
  publisher={Nature Publishing Group}
}

@article{algo9,
  title={Quantum computing in the NISQ era and beyond},
  author={Preskill, John},
  journal={Quantum},
  volume={2},
  pages={79},
  year={2018},
  publisher={Verein zur F{\"o}rderung des Open Access Publizierens in den Quantenwissenschaften}
}

@article{algo10,
  title={Quantum amplitude amplification and estimation},
  author={Brassard, Gilles and Hoyer, Peter and Mosca, Michele and Tapp, Alain},
  journal={arXiv preprint quant-ph/0005055},
  year={2000}
}

@article{algo11,
  title={Quantum algorithms for algebraic problems},
  author={Childs, Andrew M and Van Dam, Wim},
  journal={Reviews of Modern Physics},
  volume={82},
  number={1},
  pages={1--52},
  year={2010},
  publisher={APS}
}

@article{algo12,
  title={Quantum measurements and the Abelian stabilizer problem},
  author={Kitaev, A Yu},
  journal={arXiv preprint quant-ph/9511026},
  year={1995}
}

@inproceedings{algo13,
  title={Exponential algorithmic speedup by a quantum walk},
  author={Childs, Andrew M and Cleve, Richard and Deotto, Enrico and Farhi, Edward and Gutmann, Sam and Spielman, Daniel A},
  booktitle={Proceedings of the thirty-fifth annual ACM symposium on Theory of computing},
  pages={59--68},
  year={2003}
}

@article{algo14,
  title={Quantum computation by adiabatic evolution},
  author={Farhi, Edward and Goldstone, Jeffrey and Gutmann, Sam and Sipser, Michael},
  journal={arXiv preprint quant-ph/0001106},
  year={2000}
}

@article{algo15,
  title={Hamiltonian simulation by qubitization},
  author={Low, Guang Hao and Chuang, Isaac L},
  journal={Quantum},
  volume={3},
  pages={163},
  year={2019},
  publisher={Verein zur F{\"o}rderung des Open Access Publizierens in den Quantenwissenschaften}
}

@article{algo16,
  title={Simulating Hamiltonian dynamics with a truncated Taylor series},
  author={Berry, Dominic W and Childs, Andrew M and Cleve, Richard and Kothari, Robin and Somma, Rolando D},
  journal={Physical review letters},
  volume={114},
  number={9},
  pages={090502},
  year={2015},
  publisher={APS}
}

@article{algo17,
  title={Simulating physics with computers, International journal of theoretical physics},
  author={Feynman, Richard P},
  year={1982},
  publisher={Volume}
}

@inproceedings{ec1,
  title={Fault-tolerant quantum computation},
  author={Shor, Peter W},
  booktitle={Proceedings of 37th conference on foundations of computer science},
  pages={56--65},
  year={1996},
  organization={IEEE}
}

@article{ec2,
  title={Fault-tolerant quantum computation by anyons},
  author={Kitaev, A Yu},
  journal={Annals of physics},
  volume={303},
  number={1},
  pages={2--30},
  year={2003},
  publisher={Elsevier}
}

@article{ec3,
  title={Resilient quantum computation},
  author={Knill, Emanuel and Laflamme, Raymond and Zurek, Wojciech H},
  journal={Science},
  volume={279},
  number={5349},
  pages={342--345},
  year={1998},
  publisher={American Association for the Advancement of Science}
}

@inproceedings{ec4,
  title={Fault-tolerant quantum computation with constant error},
  author={Aharonov, Dorit and Ben-Or, Michael},
  booktitle={Proceedings of the twenty-ninth annual ACM symposium on Theory of computing},
  pages={176--188},
  year={1997}
}

@article{ec5,
  title={Simple quantum error-correcting codes},
  author={Steane, Andrew M},
  journal={Physical Review A},
  volume={54},
  number={6},
  pages={4741},
  year={1996},
  publisher={APS}
}

@article{ec6,
  title={Scheme for reducing decoherence in quantum computer memory},
  author={Shor, Peter W},
  journal={Physical review A},
  volume={52},
  number={4},
  pages={R2493},
  year={1995},
  publisher={APS}
}

@article{ec7,
  title={Good quantum error-correcting codes exist},
  author={Calderbank, A Robert and Shor, Peter W},
  journal={Physical Review A},
  volume={54},
  number={2},
  pages={1098},
  year={1996},
  publisher={APS}
}

@book{gottesman1997stabilizer,
  title={Stabilizer codes and quantum error correction},
  author={Gottesman, Daniel},
  year={1997},
  publisher={California Institute of Technology}
}

@article{breuckmann2021quantum,
  title={Quantum low-density parity-check codes},
  author={Breuckmann, Nikolas P and Eberhardt, Jens Niklas},
  journal={PRX quantum},
  volume={2},
  number={4},
  pages={040101},
  year={2021},
  publisher={APS}
}

@inproceedings{alg1,
  title={Quantum reed—solomon codes},
  author={Grassl, Markus and Geiselmann, Willi and Beth, Thomas},
  booktitle={International Symposium on Applied Algebra, Algebraic Algorithms, and Error-Correcting Codes},
  pages={231--244},
  year={1999},
  organization={Springer}
}

@article{alg2,
  title={Quantum reed-muller codes},
  author={Steane, Andrew M},
  journal={IEEE Transactions on Information Theory},
  volume={45},
  number={5},
  pages={1701--1703},
  year={2002},
  publisher={IEEE}
}

@article{alg3,
  title={Quantum codes with addressable and transversal non-Clifford gates},
  author={He, Zhiyang and Vaikuntanathan, Vinod and Wills, Adam and Zhang, Rachel Yun},
  journal={arXiv preprint arXiv:2502.01864},
  year={2025}
}

@article{panteleev2021degenerate,
  title={Degenerate quantum LDPC codes with good finite length performance},
  author={Panteleev, Pavel and Kalachev, Gleb},
  journal={Quantum},
  volume={5},
  pages={585},
  year={2021},
  publisher={Verein zur F{\"o}rderung des Open Access Publizierens in den Quantenwissenschaften}
}

@article{chamberland2018flag,
  title={Flag fault-tolerant error correction with arbitrary distance codes},
  author={Chamberland, Christopher and Beverland, Michael E},
  journal={Quantum},
  volume={2},
  pages={53},
  year={2018},
  publisher={Verein zur F{\"o}rderung des Open Access Publizierens in den Quantenwissenschaften}
}

@article{google2025quantum,
  title={Quantum error correction below the surface code threshold},
  journal={Nature},
  volume={638},
  number={8052},
  pages={920--926},
  year={2025},
  publisher={Nature Publishing Group UK London}
}

@article{laflamme1996perfect,
  title={Perfect quantum error correcting code},
  author={Laflamme, Raymond and Miquel, Cesar and Paz, Juan Pablo and Zurek, Wojciech Hubert},
  journal={Physical Review Letters},
  volume={77},
  number={1},
  pages={198},
  year={1996},
  publisher={APS}
}

\appendix

\section{Group Algebraic Formulation of Symmetric Mirror Codes}
\label{app:group_algebra_symmetric_mirror}

We here remark that there is a simple re-formulation for the validity characterization of a symmetric mirror code, given by \Cref{prop:mirror_code_well_defined}, in the language of group algebra.
As this language is conventional in the description of two-block group algebra codes~\cite{2BGA}, we here state this equivalent characterization for analogy with the 2BGA formulation.

Recall that the group algebra $K[G]$ of a group $G$ and field $K$ is the space of all formal sums of group elements over $K$.
That is, each element of $K[G]$ is of the form \begin{align}
    \vec{a} := \sum_{g \in G} a_g g ,
\end{align}
where $a_g \in K$.
$K[G]$ is a ring.
The addition operation is given by \begin{align}
    \vec{a} + \vec{b} := \sum_{g \in G} (a_g + b_g) g ,
\end{align}
where addition on the right-hand side is the addition operator of the field $K$, Moreover, the multiplication operation is given by \begin{align}
    \vec{a} \vec{b} := \sum_{g, h \in G} (a_g b_h) (gh) = \sum_{g \in G} \left( \sum_{h \in G} a_h b_{h^{-1} g} \right) g ,
\end{align}
where multiplication between group elements is the group operation while multiplication between field elements is the field product operation.
Given a group algebra element $\vec a \in K[G]$, we define \begin{align}
    \vec{a}^* := \sum_{g \in G} a_g g^{-1} = \sum_{g \in G} a_{g^{-1}} g .
\end{align}
Finally, for a subset $A \subseteq G$, we define the indicator group algebra element as \begin{align}
    \mathbbm{1}_A := \sum_{a \in A} a ,
\end{align}
where we have omitted explicitly writing the coefficient $1$ (the multiplicative identity in $K$).
We say that a group algebra element $\vec a$ is \emph{central} if it commutes with all group elements, i.e. $g \vec a = \vec a g$ for all $g \in G$.
Here, we are treating $g$ as an element of $K[G]$, where the coefficient of $g$ is $1$ and the coefficient of all $h \neq g$ is 0.
For purposes of this work, it suffices to set $K = \F_2$.

\begin{proposition}[Symmetric mirror validity characterization as a group algebra]
Let $G$ be a group and $A, B \subseteq G$.
A $(G, A, B)$ symmetric mirror code is well-defined if and only if $\mathbbm{1}_A^* \mathbbm{1}_B \in \F_2[G]$ is central.
\end{proposition}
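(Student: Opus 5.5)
The plan is to reduce the statement to \Cref{prop:mirror_code_well_defined}, which says the $(G, A, B)$ symmetric mirror code is well-defined if and only if $K_S(g,h) = |Agh \cap B| \bmod 2$ is symmetric in $(g,h)$. I will then compute the coefficients of $\mathbbm{1}_A^* \mathbbm{1}_B$ explicitly and show that centrality is exactly this symmetry.

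First, write $\vec c := \mathbbm{1}_A^* \mathbbm{1}_B = \sum_{a \in A, b \in B} a^{-1} b$. The coefficient of $x \in G$ in $\vec c$ is
\begin{align}
c_x = |\set{(a,b) \in A \times B \,|\, a^{-1} b = x}| = |\set{a \in A \,|\, ax \in B}| = |Ax \cap B| \pmod 2 .
\end{align}
Hence $K_S(g,h) = c_{gh}$, so $K_S$ depends only on the product $gh$.

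Second, characterize centrality in terms of these coefficients. The element $\vec c$ is central iff $k \vec c k^{-1} = \vec c$ for all $k \in G$. Since $k \vec c k^{-1} = \sum_x c_x\, kxk^{-1}$, this condition says $c_{k^{-1} y k} = c_y$ for all $y, k$; that is, $c$ is a class function.

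Third, match the two conditions. Symmetry of $K_S$ means $c_{gh} = c_{hg}$ for all $g,h$. If $c$ is a class function, take $y = gh$; then $hg = g^{-1}(gh)g$ is a conjugate of $y$, so $c_{gh} = c_{hg}$. Conversely, given $y$ and $k$, set $g = k^{-1}$ and $h = ky$. Then $gh = y$ and $hg = kyk^{-1}$, so symmetry gives $c_y = c_{kyk^{-1}}$ for all $y,k$, which is centrality.

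There is no real obstacle. The only point needing care is the coefficient bookkeeping in the first step: the convention $\vec a^* = \sum a_g g^{-1}$ must produce $a^{-1}b$ and not $b a^{-1}$, so that $c_x$ equals $|Ax \cap B|$ rather than the analogous count with a left translate. Everything is taken mod 2 because the algebra is over $\F_2$, which matches the mod-2 kernel in \Cref{prop:mirror_code_well_defined}.
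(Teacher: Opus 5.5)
Your proposal is correct and follows essentially the same route as the paper: both reduce to the symmetry criterion for $K_S$ in \Cref{prop:mirror_code_well_defined}, compute the coefficient of $x$ in $\mathbbm{1}_A^* \mathbbm{1}_B$ as $|Ax \cap B| \bmod 2$, and show that symmetry of $K_S$ and centrality are each equivalent to that coefficient function being constant on conjugacy classes. Your explicit substitution $g = k^{-1}$, $h = ky$ for the converse is a slightly more spelled-out version of the paper's identity $hg = h(gh)h^{-1}$.
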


\begin{proof}
From \Cref{prop:mirror_code_well_defined}, a mirror code is well-defined if and only if $K_S(g, h) = K_S(h, g) \,\forall g, h \in G$, where $K_S(g, h) := |A gh \cap B|$.
Note that $K_S(g, h)$ depends only on $gh$, so we may define $K_S(gh) := K_S(g, h)$, and the condition is equivalently $K_S(gh) = K_S(hg)$ for all $g, h \in G$.
Since $hg = h (gh) h^{-1}$, the condition is equivalent to $K_S(g) = K_S(hgh^{-1})$ for all $g, h \in G$.
In other words, $K_S \,:\, G \to \F_2$ is constant over conjugacy classes of $G$.
Now,  \begin{align}
    \mathbbm{1}_A^* \mathbbm{1}_B & = \sum_{g \in G} \left( \sum_{h \in G} \mathbbm{1}_{h \in A^{-1}} \mathbbm{1}_{h^{-1} g \in B} \right) g = \sum_{g \in G} \left( \sum_{h \in G} \mathbbm{1}_{h^{-1} \in A} \mathbbm{1}_{h^{-1} \in B g^{-1}} \right) g \\
    & = \sum_{g \in G} (|\set{A \cap B g^{-1}}|) g = \sum_{g \in G} (|\set{A g \cap B }|) g \\
    & = \sum_{g \in G} K_S(g) g .
\end{align}
Note that \begin{align}
    h \mathbbm{1}_A^* \mathbbm{1}_B h^{-1} = \sum_{g \in G} K_S(g) hgh^{-1} = \sum_{g \in G} K_S(h^{-1} g h) g .
\end{align}
Thus, $h \mathbbm{1}_A^* \mathbbm{1}_B h^{-1} = \mathbbm{1}_A^* \mathbbm{1}_B \,\forall h \in G$ if and only if $K_S(h^{-1} g h) = K_S(g)$ for all $g, h \in G$, as claimed.
\end{proof}

\section{Counterexamples and code family inequivalences} \label{app:counterexamples_equivalences}

We here give some counterexamples referenced in the main text, as well as details on showing that various code families are not equivalent.

\subsection{Counterexamples}

A simple characterization for a $(G, A, B)$ mirror code to be equivalently CSS via Hadamards when $G$ is abelian is the existence of a non-trivial homomorhpism $\phi \,:\, G \to \Z_2$.
This condition is sufficient regardless of whether $G$ is abelian, but is not necessary, even when one of the primary assumptions in the proof---that $K_{AB} := \langle AA^{-1} \cup BB^{-1} \rangle$ is a normal subgroup of $G$---holds.
We here illustrate this failure with an explicit example.

Let $G = \mathrm{SL}(2, \Z_5)$ the group of $2 \times 2$ matrices with unit determinant over $\Z_5$.
This group is non-abelian with center $Z(G) = \set{\pm I} = \Z_2$.
Moreover, $G$ is the double cover of the alternating group $A_5$, i.e. $G/Z(G) = A_5$.
Let $\mathcal{Q} \,:\, G \to A_5$ be the quotient map.
$A_5$ has 15 elements of order 2; choose one and call it $\beta$.
Choose $\widetilde{\beta} \in \mathcal{Q}^{-1}(\beta)$.
We define \begin{align}
    A := Z(G) ,\; B := Z(G) \widetilde{\b} ,
\end{align}
so that the stabilizers are given by \begin{align}
    \mathbf{S}(g)= \vec Z(Z(G) g) \vec X(Z(G) \widetilde{\b} g^{-1}) .
\end{align}
Note that $AA^{-1} = A$, $BB^{-1} = A$, so $K_{AB} = \langle A \rangle = A = Z(G) \vartriangleleft G$.
Every pair of such stabilizers commutes because the phase is given by \begin{align}
    |Z(G) g \cap Z(G) \widetilde{\b} h^{-1}| + |Z(G) h \cap Z(G) \widetilde{\b} g^{-1}| \pmod{2} .
\end{align}
In each term, the intersection is between two cosets of $Z(G)$.
Since cosets form a partition, either the two cosets are identical or disjoint.
In the former case, the set has size $|Z(G)| = 2$ and in the latter case the set has size 0, so the phase is always 0.

Now, note that $\mathcal{Q}(Ag) = \set{\mathcal{Q}(g)}$ and $\mathcal{Q}(Bg^{-1}) = \set{\b \mathcal{Q}(g)^{-1}}$.
Next, define $f \,:\, A_5 \to A_5$ by $f(h) = \b h^{-1}$.
Then $f^2(h) = f(f(h)) = \b h \b^{-1}$ so $f^4 = \text{id}$ because $\b$ has order 2.
We claim, however, that any element $h \in G/Z(G)$ has orbit size either 2 or 4 under $f$.
This is because if $f(h) = h$, then $h^2 = \b$; the same holds if $f^3(h) = h$, since $f^3(h) = f(\b h \b^{-1}) = h^{-1} \b$.
However, this would imply that $h$ has order 4, and $A_5$ has no elements of order 4.
We now choose the Hadamard subset $T'$ in quotient space greedily: pick an element $h$ in $A_5$ and place $h$ in $T'$ while placing $f(h)$ in $A_t\setminus T'$.
Since $f$ has even-sized orbits, such a choice is always possible while there remains unpartitioned elements.
To conclude, we lift our quotiented set $T'$ back into $G$, by letting $T := \mathcal{Q}^{-1}(T')$ be the Hadamarded set of qubits.
Hadamarding this set results in a CSS code if and only if for all $g \in G$, either all of $Ag$ is in $T$ or all of $Bg^{-1}$ is in $T$ but not both.
Equivalently in quotient space, either $\mathcal{Q}(g) \in T'$ or $\b \mathcal{Q}(g)^{-1}$.
This is precisely the manner in which we constructed $T'$.

At the same time, $G = \mathrm{SL}(2, \Z_5)$ is known to be perfect, i.e. $G = [G, G]$ where $[G, G] = \set{ghg^{-1}h^{-1} \,|\, g, h \in G}$.
Suppose that there is a homomorphism $\phi \,:\, G \to \Z_2$.
Then $[G, G] \subseteq \ker \phi$ since $\Z_2$ is abelian.
Thus, $G = \ker \phi$, so $\phi(g) = 0 \,\forall g \in G$, and $\phi$ is a trivial homomorphism.

In sum, $(G, A, B)$ forms a valid symmetric mirror code which (a) is equivalently CSS via Hadamards, (b) has $K_{AB}$ as a normal subgroup, yet (c) has no non-trivial homomorphism $G \to \Z_2$.

\subsection{Inequivalence of code families}

Next we recall a simple code property which is invariant under local Clifford operations and qubit permutations.
Using this quantity, we show that symmetric and asymmetric mirror codes are distinct families (i.e. neither one contains the other), and that 2BGA codes are not contained in the set of mirror codes.

\begin{definition}[Stabilizer weight enumerator] \label{def:weight_enum}
    Let $\CS \leq \CP_n$ be a stabilizer subgroup.
    The weight enumerator of $\CS$ is a polynomial \begin{align}
        W(\CS) := \sum_{j=0}^n w_j x^j ,
    \end{align}
    where $w_j$ is the number of elements in $\CS$ with weight $j$.
\end{definition}

The weight enumerator is a strong invariant of the stabilizer code associated with $\CS$ because it is invariant under choice of stabilizer tableau, qubit permutation, and local Clifford operations.
Thus, we can show that two code families are inequivalent (up to choice of tableau, permutations, and local Cliffords) by identifying a code in one family with a weight enumerator that does not match that of any code in the other family.
We did so numerically to prove the following theorem.

\begin{theorem}[Code family separations] \label{thm:code_family_inequivalences}
The following code family inequivalences hold.
\begin{enumerate}
    \item[(1) ] The set of all symmetric mirror codes is not contained in the set of all asymmetric mirror codes.
    \item[(2) ] The set of all asymmetric mirror codes is not contained in the set of all symmetric mirror codes.
    \item[(3) ] The set of all 2BGA codes is not contained in the set of all mirror codes (symmetric and asymmetric).
\end{enumerate}
\end{theorem}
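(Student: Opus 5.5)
Each of the three separations will be proved by exhibiting an explicit witness code and certifying, via the stabilizer weight enumerator of \Cref{def:weight_enum}, that no code in the other family is equivalent to it. Equivalence here means up to choice of generating set, qubit permutation, local Cliffords and phases. Because $W(\CS)$ is invariant under all of these, it suffices to find a witness $\CS$ in family $\mathcal{F}_1$ such that no code in family $\mathcal{F}_2$ with the same $n$ has weight enumerator $W(\CS)$. The search space is finite: $n$ and the stabilizer group's size $2^{n-k}$ are invariants (and $w_1$, the number of weight-one stabilizers, is too). Moreover, a code's group is generated by at most $n$ generators, so any competing mirror code must be built on a group $G'$ with $|G'| = n$. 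For 2BGA competitors one needs $2|G'| = n$.

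\textbf{Witnesses for (1) and (2).} I will take a small non-abelian group, with $S_3$ or $D_4$ as the first candidate and moving up to order $12$ if needed. For (1), choose $A, B \subseteq G$ so that $K_S$ is symmetric but the pair is not normal, so that \Cref{lemma:normality_sufficient_AB_mirror_codes} does not already make the two constructions coincide. The simplest route is \Cref{lemma:center_sufficient_AB_mirror_codes}, e.g.\ $A^{-1}B \subseteq Z(G)$ with $A$ non-normal. Then compute $W$ of the resulting symmetric code by enumerating all $2^{n-k}$ stabilizer products; this is cheap for $n \le 12$.

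Next, enumerate every valid asymmetric $(G', A', B')$ with $|G'| = n$, testing validity by the symmetric-kernel criterion $K_A$ of \Cref{prop:mirror_code_well_defined}. To cut the enumeration down, apply the gauge reductions of \Cref{thm:permutation_gauge_symmetry}: modulo $\operatorname{Aut}(G')$ and left/right translations, one may assume $e \in A'$. Also apply \Cref{thm:LC_gauge_symmetries} and \Cref{lemma:equivalence_alt_constructions}, which allow swapping and inverting $A', B'$. Abelian $G'$ need only be checked once, since there the two constructions agree. Then compare enumerators. For (2), run the same procedure with the roles of symmetric and asymmetric exchanged, using the condition $AB \subseteq Z(G)$ to produce a valid asymmetric witness.

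\textbf{Witness for (3).} Take a non-normal 2BGA code on a small non-abelian group, e.g.\ $G = S_3$ with $|A| = |B| = 2$ and $A$ non-normal, as in \Cref{prop:2BGA_not_in_mirror}. Its enumerator must be compared against all symmetric and all asymmetric mirror codes on every group of order $2|G|$, after the same gauge reductions.

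\textbf{Main obstacle.} The hard part is making the enumeration over competitors genuinely exhaustive and cheap. Check weight is not an invariant of the code (only of the tableau), so $|A'|, |B'|$ cannot be bounded directly. Instead I would use the invariant $w_j$ for small $j$: any tableau of the competitor contains its $n$ generators, each of weight at most $|A'| + |B'|$. More simply, I would just enumerate all subset pairs $A', B'$ up to gauge, which is feasible for $n \le 12$ (at most $2^{24}$ pairs before reduction). If a first witness collides with some competitor's enumerator, I would switch to a different $(A, B)$, or refine the invariant, e.g.\ to the $\vec X/\vec Y/\vec Z$-insensitive but LC-invariant full list of supports of minimal-weight stabilizers.
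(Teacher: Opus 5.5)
Your overall method is exactly what the paper does: certify inequivalence with the weight enumerator, which is invariant under change of tableau, qubit permutation and local Cliffords, and compare exhaustively against all competitors of the same length. The paper's proof is only an appeal to such a numerical check. One small correction: competitors must live on a group of order $n$ (or $n/2$ for 2BGA) because the number of physical qubits is invariant, not because of any bound on the number of generators.

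There is a real gap, though: the concrete witnesses you propose for (1) and (2) provably cannot work. Suppose $A^{-1}B\subseteq Z(G)$ with $A,B$ nonempty. Fixing $c\in A$ gives $A\cup B\subseteq cZ(G)$, so $A=cA_0$ and $B=cB_0$ with $A_0,B_0\subseteq Z(G)$. The qubit relabeling $x\mapsto c^{-1}x$ turns the symmetric code into $\vec Z(A_0g)\vec X(B_0g^{-1})=\vec Z(A_0g)\vec X(g^{-1}B_0)$, which is literally the asymmetric $(G,A_0,B_0)$ code. Likewise, $AB\subseteq Z(G)$ forces $A=A_0c$ and $B=c^{-1}B_0$ with $A_0,B_0$ central. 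Relabeling stabilizers by $h=cg$ then turns the asymmetric code into the symmetric $(G,A_0,B_0)$ code. So every code produced by \Cref{lemma:center_sufficient_AB_mirror_codes} lies in both families. On $S_3$, whose center is trivial, it only produces singleton codes.

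The same collapse occurs as soon as \emph{one} of $A,B$ is normal, not only when both are. If $B$ is normal, the two tableaux coincide. If $A$ is normal, the symmetric $(G,A,B)$ tableau is the $\vec S''$ tableau of \Cref{lemma:equivalence_alt_constructions}, hence equivalent to the asymmetric $(G,B,A)$ code. The asymmetric case is analogous, via $\vec S'$. A witness therefore needs both $A$ and $B$ non-normal, and must be valid only through the full kernel condition of \Cref{prop:mirror_code_well_defined}. For symmetric codes this means $\mathbbm{1}_A^*\mathbbm{1}_B$ is central in $\F_2[G]$ without $A^{-1}B\subseteq Z(G)$. Your ``switch to a different $(A,B)$'' fallback is therefore not a fallback but the actual search: enumerate valid codes of this kind on small non-abelian groups and test each against the competitor list.

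Finally, use only gauge reductions that are actually established, or the exhaustiveness certificate becomes unsound. The map $(A,B)\mapsto(B,A)$ is a symmetry of symmetric codes. For asymmetric codes on non-abelian $G$, what you have is $(A,B)\mapsto(B^{-1},A^{-1})$ (Hadamard followed by $x\mapsto x^{-1}$), not a bare swap.
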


\end{document}